\theoremstyle{plain}
\newtheorem{theorem}{Theorem}
\newtheorem*{theorem*}{Theorem}
\newtheorem{proposition}[theorem]{Proposition}
\newtheorem*{proposition*}{Proposition}
\newtheorem{corollary}[theorem]{Corollary}
\newtheorem*{corollary*}{Corollary}
\newtheorem{lemma}[theorem]{Lemma}
\newtheorem*{lemma*}{Lemma}
\newtheorem{conjecture}{Conjecture}
\newtheorem*{conjecture*}{Conjecture}
\newtheorem*{result*}{Result}
\newtheorem*{question*}{Question}
\theoremstyle{definition}
\newtheorem{definition}{Definition}
\newtheorem*{definition*}{Definition}
\newtheorem*{example*}{Example}
\newtheorem*{problem*}{Problem}
\theoremstyle{remark}
\newcommand{\cH}{\mathcal{H}}
\newcommand{\enc}{\mathsf{Enc}}
\newcommand{\dec}{\mathsf{Dec}}
\newcommand{\preim}{\mathsf{PreIm}}
\newcommand{\poly}{\mathrm{poly}}
\newtheorem*{remark*}{Remark}
\newif\ifExtAbs
\title{ \vspace{-2cm}Towards Universal Quantum Tamper Detection}
  \author{Anne Broadbent}
\author{Upendra Kapshikar}
\author{Denis Rochette }
\affil{}
\author{Anne Broadbent \thanks{anne.broadbent@uottawa.ca}}
\author{
Upendra Kapshikar \thanks{ukapshik@uottawa.ca}}
\author{Denis Rochette \thanks{denis.rochette@uottawa.ca}}
\affil{Department of Mathematics and Statistics, University of Ottawa}
\date{}
\date{\vspace{-1cm}}
\begin{document}

\maketitle


\ifExtAbs
\else

  \begin{abstract}

Tamper-resilient cryptography studies how to protect  data against adversaries who can physically manipulate codewords before they are decoded.
The notion of tamper detection codes formalizes this goal, 
{requiring that any 
unauthorized modification}
be detected with high probability. 
Classical results, starting from Jafargholi and Wichs (TCC 2015), established the existence of such codes against very large families of tampering functions—subject to structural restrictions ruling out identity and constant maps. Recent works of Boddu and Kapshikar (Quantum, 7) and Bergamaschi (Eurocrypt 2024)  have extended these ideas to quantum adversaries, but  
 only consider unitary tampering families.

In this work, we give the first general treatment of quantum tamper detection against arbitrary {quantum maps.} 
We show that Haar-random encoding schemes achieve exponentially small soundness error against any adversarial family whose size, Kraus rank, and entanglement fidelity obey natural constraints, which are direct quantum analogues of the min-entropy and fixed-point restrictions in the classical setting. Our results unify and extend previous work, subsuming both the classical and unitary-only adversarial families.

Beyond this, we demonstrate a fundamental separation between classical and quantum tamper detection. Classically, relaxed tamper detection (which allows either rejection or recovery of the original message) cannot protect even against the family of constant functions.
This family is of size $2^n$.
In contrast, we show that quantum encodings can handle this obstruction, and we conjecture and provide evidence that they may in fact provide relaxed tamper detection and non-malleable security against any family of  quantum
maps of size up to $2^{2^{\alpha n}}$ for any constant $\alpha <\frac{1}{2}$,  {leading to our conjecture on the existence of what we call \emph{universal} quantum tamper detection}.  {Taken together, our results provide the first  evidence that quantum tamper detection is strictly more powerful than its classical counterpart.} 
\end{abstract}
 \newpage
\tableofcontents
 \newpage
\fi

\ifExtAbs

\else
  \section{Introduction}
\fi
Cryptographic security is typically studied under the assumption that the adversary sees only the inputs and outputs of a cryptographic device—not its internal state. In many realistic scenarios, however, an attacker can physically tamper with the stored data (flip bits, inject phases, apply local gates, etc.) and then observe the device’s outputs. The field of \emph{tamper-resilient security} asks how to encode data so that such tampering is either detected or in some cases, rendered harmless.

 \emph{Tamper detection codes \cite{JW15}} formalize this goal via tampering experiments with respect to a family of adversarial functions $\advCPTP$:
\ifExtAbs
  \begin{enumerate}[(1)]
\else
  \begin{enumerate}
\fi
    \item An encoder maps a message $m \in \mathcal{M}$ into a codeword $c \in \mathcal{C}$.
    \item An adversary applies a tampering function $f$ from some family of functions $\mathcal{F}_{\Adv}$ (from $\mathcal{C}$ to $\mathcal{C}$) to produce $\hat{c}=f(c)$.
    \item Given $f(c)$, a decoder  produces $\widehat{m} \in \mathcal{M}\ \cup \lbrace \perp \rbrace$.
\ifExtAbs
  \end{enumerate}
\else
  \end{enumerate}
\fi

A natural and minimal correctness requirement is that when there is no tampering, the decoder always recovers $m$ (\emph{completeness}).
The security requirement is that for every nontrivial tampering, the decoder outputs~$\bot$ with high probability over the scheme’s internal randomness (\emph{soundness}).
\ifExtAbs
 
\else
  Formally, these conditions can be expressed as follows:
\begin{description}[leftmargin=!,labelwidth=\widthof{\bfseries Completeness}]
\item [Completeness]\label{item:intro_completeness_1} $\Pr\left[\dec\left(\enc(m)\right)=m\right]=1$.	
\item[Soundness] \label{item:intro_soundness_1} $\Pr\left[\dec \left( f \left( \enc(m) \right) \right)= \perp \right] \geq 1-\varepsilon$.

\end{description} 	
\fi

A notable class tamper detection codes are Algebraic Manipulation Detection (AMD) codes \cite{CDF+08,CPX15}: they provide optimal/near-optimal protection against the additive tampering families $\mathcal{F}_{\Adv} = \{f_e \text{ such that }\ f_e : x \mapsto x \oplus e\}$.

For tamper detection codes, one can ask how large and complex can we allow $\mathcal{F}_{\Adv}$ to be, while maintaining tamper security.
After all, the total number of functions from $\lbrace 0,1 \rbrace^n$ is $2^{2^n}$ but the additive adversarial family (in the case of AMD codes) is only of size $2^n$.
One can see that it is not possible to provide tamper security against the full family of $2^{2^n}$ functions, simply because there exists \emph{bad} families for tamper detection. 
For example, one can not detect tampering against the identity function.
Suppose such a scheme existed and we try to encode message $m$ using it.
Then, the completeness condition requires the decoder to output $\widehat{m}=m$ whereas the security condition requires the decoder to output $\widehat{m}=\perp$. Hence, such a scheme can not exist.
Of course, this example is rather artificial as one does not expect to catch an adversary who does nothing (applies the identity).
The other set of bad families is that of constant functions. 
Let $f$ be a constant function that maps every input to $\enc(m)$ for some fixed message $m$.
Now consider two cases, first where we intended to encode the message $m$ and the adversary does nothing.
Second, where we intended to encode the message $m^\prime$ and adversary tampered this with $f$.
In both cases, the decoder receives $\widehat{c}=\enc{(m)}$, but in the first case we need the decoder to output $\widehat{m}=m$ (due to completeness); whereas in the second case, we need $\widehat{m}=\perp$ (due to security). 
Thus, such a scheme that protects against constant functions can not exist either.

Jafargholi and Wichs~\cite{JW15} showed that (as far as classical adversaries are concerned), these are the only \emph{bad} families. 
In particular, using probabilistic methods they proved the existence of tamper detection codes against very large families (even up to almost doubly-exponential size), provided each tampering function satisfies two structural conditions — it has high output min-entropy when the input is uniform, and it has few fixed points.
Intutively, the first condition says that each function should be far from being a constant function and the second says that it should be far from the identity function.
One can note that unlike AMD codes, these proofs for tamper detection against large families use probabilistic methods, and efficient constructions are not known in general. 

Boddu and Kapshikar \cite{BK23} extended tamper detection to the case of an adversary who is capable of performing quantum operations. 
Firstly, they construct a quantum analogue of AMD codes, where $\mathcal{F}_{\Adv}$ is the family of generalized   Pauli operators. 
They showed that if $\mathcal{F}_{\Adv}$ is the family of unitary operators such that $\vert\mathcal{F}_{\mathsf{Adv}} \vert < 2^{2^{\alpha n}}$  for some constant $\alpha \in (0,1/6)$ and  each $U \in \mathcal{F}_{\Adv}$ is far from the identity map (under some norm).
Similar to classical constructions, quantum analogues for AMD codes are efficient whereas for large doubly exponential sized family, the proof remains existential.
However, a general quantum operation is not necessarily unitary.
{Notably, since the results of \cite{BK23} are limited to only unitary operators, the classical results in the literature and quantum versions presented by \cite{BK23} remain largely incomparable; in fact, the only common operators in the intersection of classical functions and unitary operators are permutations.  
}
Thus, the state of the art is that tamper detection against general quantum adversaries remains essentially wide open. 

\ifExtAbs
  \subsection*{Our contributions}
\else
 \subsection{Our contributions}
\fi

Our main contributions in the work can be broadly separated into two directions.  
\begin{itemize}
    \item Firstly, we extend tamper detection beyond prior classical and unitary-only settings to fully general quantum operations (CPTP maps).
Under natural structural constraints on the adversarial family, we prove that Haar-random encoding schemes (\cref{fig:Haar Encoding}) achieve tamper detection with exponentially small error.
The conditions that we use can be considered to be \CPTP analogues of the conditions for classical maps. 
We thus achieve a unified framework which incorporates both classical as well as quantum results known in the literature (\cite{JW15} and \cite{BK23}, respectively). 
In addition, we show that by slightly tweaking the decoder,  a similar result for tamper detection holds for \emph{quantum} messages, with a minor loss in parameters.

\item Secondly, we propose and give evidence for the conjecture that quantum tamper detection can achieve security in ways that any classical encoding can not.
In particular, we conjecture that quantum encodings may achieve relaxed\footnote{As the name suggests, relaxed tamper detection is a relaxation of the soundness condition of the tamper detection, where the decoder can either abort or output the original untampered message\ifExtAbs
  .
  \else
(see Definition~\ref{def:relaxed_TD} for a formal definition.)
\fi
} 
tamper detection against superexponential sized families without any constraints ---  leading to what we call  \emph{universal quantum tamper detection}. We contrast this with classical encodings in the relaxed setting, which are known to be impossible against any exponential-sized family.  {In support of the conjecture, we show a quantum advantage in encoding a family of size $2^n$, which is known to be impossible classically.}   
\end{itemize}

\ifExtAbs
  We now sketch each contribution, and refer to the full version for details. 
\else
Before expanding on  the above contributions, we first explain the setup and model of tamper detection with quantum encodings.

\subsubsection*{Model and security goal}

Let $\mathcal{M}$ be the message space and let $\mathcal{H} \cong \CC^d$.  
A (public, keyless) tamper detection scheme against an adversarial family $\advCPTP$ is a pair $(\enc,\dec)$ with
\[
\enc: \mathcal{M} \to \mathcal{H}, \qquad \dec: \mathcal{B}(\mathcal{H}) \to \mathcal{M} \cup \{\bot\},
\]
such that:
\begin{align*}
     \Pr \left[ \dec(\enc(m))=m \right] &= 1 & \forall m \in \mathcal{M} && \text{(completeness)} \\
     \Pr \left[ \dec \circ \Phi \circ \enc(m) = \perp \right] &\geq 1 - \varepsilon & \forall \Phi \in \advCPTP,\, \forall m \in \mathcal{M} && \text{(soundness)}
\end{align*}

Our goal is thus to identify broad, natural conditions on a family $\advCPTP$ under which tamper detection is possible, specifically by showing $\enc$ and $\dec$ procedures that provide completeness and exponentially small soundness parameter,~$\epsilon$. 

With the setup in place, we can now move onto our results, divided into two main parts.  

\fi
\subsubsection*{Contribution Part A: Generalization of tamper detection to \CPTP maps 
}

\begin{theorem}[Informal]
\label{thm:informal-main}
Let $\advCPTP$ be an adversarial family of CPTP maps on $\mathcal{B}(\CC^d)$. Suppose that for constants $0 \leq 2\alpha < \delta < 1$ the family satisfies:
\begin{enumerate}
    \item $|\advCPTP| \leq 2^{d^\alpha}$;
    \item every $\Phi \in \advCPTP$ admits a Kraus representation with Kraus rank at most $d^{1-\delta}$;
    \item the maximal entanglement fidelity across the family is bounded by 
    \[
    F_e(\advCPTP) \leq 2^{-d^{\delta/2}}.
    \]
    Then the Haar-random encoding (\cref{fig:Haar Encoding}) is a tamper detection code against $\advCPTP$.
\end{enumerate}
\end{theorem}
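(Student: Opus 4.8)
The plan is to analyze the Haar encoding of \cref{fig:Haar Encoding} directly. Recall that it fixes a Haar-random isometry $V\colon\CC^{|\mathcal M|}\hookrightarrow\CC^{d}$, sets $\enc(m):=|\psi_m\rangle=V|m\rangle$ for a fixed orthonormal basis $\{|m\rangle\}_{m\in\mathcal M}$, and that $\dec$ performs the two-outcome projective measurement $\{\Pi_{\mathcal C},\,\idop-\Pi_{\mathcal C}\}$ onto the code subspace $\mathcal C:=\operatorname{Im}(V)$, outputs $\perp$ on the second outcome, and otherwise measures in the codeword basis $\{|\psi_m\rangle\}$. Completeness is immediate, so fix $\Phi\in\advCPTP$ and $m\in\mathcal M$. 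The probability that $\dec$ does \emph{not} reject on the tampered state $\Phi(|\psi_m\rangle\!\langle\psi_m|)$ equals
\[
p_{\Phi,m}(V)\ :=\ \operatorname{Tr}\!\big[\Pi_{\mathcal C}\,\Phi(|\psi_m\rangle\!\langle\psi_m|)\big],
\]
so it is enough to show that $p_{\Phi,m}(V)\le\varepsilon$ holds \emph{simultaneously} for all $\Phi\in\advCPTP$ and all $m$ with probability $1-o(1)$ over $V$; the probabilistic method then produces a good (indeed a typical) scheme. This decomposes into: (i) a first-moment bound on $\mathbb E_V\,p_{\Phi,m}(V)$; (ii) Gaussian concentration of $p_{\Phi,m}$ about that mean; (iii) a union bound over the $|\mathcal M|\cdot|\advCPTP|$ pairs.

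\emph{First moment (condition 3).} Split $\Pi_{\mathcal C}=|\psi_m\rangle\!\langle\psi_m|+\Pi^{\neq m}$, where $\Pi^{\neq m}$ projects onto $\operatorname{span}\{|\psi_{m'}\rangle:m'\neq m\}$. For the ``survival''/fixed-point term I would invoke the degree-two Haar moment (the $\idop+\mathrm{SWAP}$ identity) applied to a Kraus form $\Phi(\cdot)=\sum_i K_i(\cdot)K_i^\dagger$:
\[
\mathbb E_{|\psi\rangle}\big[\langle\psi|\Phi(|\psi\rangle\!\langle\psi|)|\psi\rangle\big]
=\frac{\sum_i|\operatorname{Tr}K_i|^2+d}{d(d+1)}
=\frac{d\,F_e(\Phi)+1}{d+1}\ \le\ F_e(\Phi)+\tfrac1d\,,
\]
which condition 3 forces down to $2^{-d^{\delta/2}}+1/d$; this is the direct quantum analogue of the ``few fixed points'' restriction of \cite{JW15}. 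For $\Pi^{\neq m}$, conditioning on $|\psi_m\rangle$ makes $\{|\psi_{m'}\rangle\}_{m'\neq m}$ span a Haar-random $(|\mathcal M|-1)$-dimensional subspace of $|\psi_m\rangle^{\perp}$, whence $\mathbb E[\Pi^{\neq m}\mid|\psi_m\rangle]=\tfrac{|\mathcal M|-1}{d-1}(\idop-|\psi_m\rangle\!\langle\psi_m|)$ and its expected overlap with $\Phi(|\psi_m\rangle\!\langle\psi_m|)$ is at most $\tfrac{|\mathcal M|-1}{d-1}$. Hence $\mathbb E_V\,p_{\Phi,m}(V)\le\mu:=2^{-d^{\delta/2}}+O(|\mathcal M|/d)$, exponentially small at the rates used in \cref{fig:Haar Encoding}.

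\emph{Concentration and union bound (conditions 1--2).} I would view $p_{\Phi,m}$ as a function on the complex Stiefel manifold $\mathrm{St}(|\mathcal M|,d)$ with its Haar measure. Expanding along the Stiefel tangent directions and using that $\Phi$ has only $r\le d^{1-\delta}$ Kraus operators bounds its Lipschitz constant by $O(\sqrt r)$ --- this is where condition 2 is used: it is the Kraus rank, not the ambient dimension, that limits how far the (now genuinely mixed) tampered codeword can travel as $V$ varies. Lévy's lemma then yields $\Pr_V\big[p_{\Phi,m}(V)>\mu+t\big]\le\exp\!\big(-\Omega(d^{\delta}t^{2})\big)$. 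Choosing $t$ of order $d^{(\alpha-\delta)/2}$ makes each tail at most $2^{-\omega(d^{\alpha})}$, and since $|\mathcal M|\cdot|\advCPTP|\le d\cdot2^{d^{\alpha}}$, the hypothesis $2\alpha<\delta$ leaves enough slack for the union bound to close. We conclude $p_{\Phi,m}(V)\le\varepsilon:=\mu+t$ simultaneously over the whole family for a $1-o(1)$ fraction of $V$'s, with $\varepsilon$ exponentially small.

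\emph{Main obstacle, and the quantum-message case.} The crux is the concentration step: the Lévy-type bound must be strong enough to survive a \emph{doubly}-exponential union bound over $\advCPTP$ while the only handles on $\Phi$ are its Kraus rank and entanglement fidelity. Unlike the unitary setting of \cite{BK23}, the tampered codeword is a genuinely mixed state of rank up to $r$, so the ``a unitary maps a pure state to a pure state'' shortcuts are gone and the Lipschitz/variance estimate must be carried out for the mixed output; balancing the rank exponent against the size exponent is exactly what the hypothesis $2\alpha<\delta$ does. Finally, for \emph{quantum} messages one replaces the sum over $m$ by a net over input states in the code space and tweaks $\dec$ to recover the quantum message (or reject); since a tampered codeword occupies only an $r$-dimensional subspace the net stays manageable, at the cost of slightly shrinking the rate --- the ``minor loss in parameters''.
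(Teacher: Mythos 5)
Your proposal takes a genuinely different route from the paper: the paper (proof of Theorem~\ref{thm:tamper_detection_code}) computes first and higher moments of the overlap variables $X_{st}$ via Weingarten calculus (Lemmas~\ref{lem:first_moment_upperbound} and~\ref{lem:nth_moment_upperbound}) and then runs a Chernoff-style moment-generating-function bound (Lemma~\ref{lem:deviation_bound}), whereas you propose a single first-moment estimate followed by L\'evy-type concentration on the Stiefel manifold. Your first-moment calculation is exactly right: the symmetric-subspace identity gives $\mathbb{E}_\psi\langle\psi|\Phi(|\psi\rangle\!\langle\psi|)|\psi\rangle = (dF_e(\Phi)+1)/(d+1)$, matching the $s=t$ case of Lemma~\ref{lem:first_moment_upperbound}, and the $\Pi^{\neq m}$ contribution is correctly $\frac{|\mathcal M|-1}{d-1}$ by Haar-random subspace averaging.

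However, your Lipschitz estimate is both incorrect and, more importantly, misattributes the role of condition~2. The Lipschitz constant of $p_{\Phi,m}$ with respect to the Frobenius metric on $\mathcal U_d$ is $O(1)$, \emph{independent of the Kraus rank}. Writing $p(U) = \Tr\!\big[\Pi_{\mathcal C}\,\Phi(U\sigma U^*)\big]$ with $\Pi_{\mathcal C} = U\Pi_0 U^*$ and $\sigma = |m0\rangle\!\langle m0|$, one telescopes and uses $|\Tr[AB]|\le\|A\|_\infty\|B\|_1$ together with the monotonicity of trace distance under CPTP maps: $\|\Phi(U\sigma U^*) - \Phi(U'\sigma U'^*)\|_1 \le \|U\sigma U^*-U'\sigma U'^*\|_1 \le 2\|U-U'\|_F$, and similarly $\|U\Pi_0U^* - U'\Pi_0U'^*\|_\infty \le 2\|U-U'\|_F$. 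This gives $|p(U)-p(U')| \le 4\|U-U'\|_F$, with no factor of $\sqrt r$. The ``tampered codeword is genuinely mixed'' observation does not enter: contractivity of $\Phi$ in trace norm is all that is needed, and it is rank-blind.

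This means two things. First, your argument, once the Lipschitz constant is corrected to $O(1)$, closes even more comfortably (the concentration is $\exp(-\Omega(d\,t^2))$ rather than $\exp(-\Omega(d^\delta t^2))$, and one can take $t\sim d^{(\alpha-1)/2}$). Second, your explanation of where condition~2 ``pays its way'' is wrong: in the L\'evy approach the Kraus-rank bound is not used in the concentration step at all, and indeed under this proof strategy it appears to be dispensable (condition~3 alone controls the mean, condition~1 the union bound). The paper's higher-moment approach, by contrast, does genuinely use the Kraus rank in Lemma~\ref{lem:nth_moment_upperbound} to control $\mathbb{E}_U[X_{st}^n]$, so the two proofs disagree on whether condition~2 is needed. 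You should resolve this tension before presenting the concentration argument as an alternative proof; either (a) articulate clearly why the $O(1)$ Lipschitz bound with L\'evy gives a strictly stronger theorem (dropping condition~2), or (b) identify an error in the simple trace-norm contractivity estimate that would reinstate a rank dependence.
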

The first condition in the above theorem says that the adversarial family should not be too large.
As previously discussed, it is well known that one can not construct tamper detection codes against arbitrarily large families.
 {Intuitively, the second condition says that $\Phi$ is not close to any constant map. 
 The third condition captures that $\Phi$ is not too close to the identity map.}

\ifExtAbs
  In our work, we extend \Cref{thm:informal-main} to give a unified analysis of existing results on tamper detection \cite{JW15,BK23}. We also extend the results to account for plaintexts that consist of quantum states. 
\else

This statement is formalized in Theorem~\ref{thm:tamper_detection_code}, which implies that the Haar-random  encoding $\enc$ into $\CC^d$, together with the natural projector-measurement decoder, is $\varepsilon$-tamper secure against $\advCPTP$ with $\varepsilon=2^{-\Omega(k)}$. We thus have that under simple, interpretable constraints on adversarial CPTP maps, quantum tamper detection codes exist with exponentially small soundness error.

As previously mentionned, one of the main motivations of extending tamper detection to channels is to give a unified analysis of classical tamper detection results as well as unitary tamper detection results.
In that regard, the following corollaries can be obtained from Theorem~\ref{thm:tamper_detection_code}:

\noindent
\begin{corollary}[Classical adversary form, informal]
    Fix integers $n,k$ and write $d=2^n$. Let $\mathcal F$ be any family of classical maps 
$f:\{0,1\}^n\to\{0,1\}^n$. Assume every tampering map $f\in\mathcal F$ satisfies two conditions: 

\begin{enumerate}
    \item[(i)] \emph{High min-entropy:} the distribution $f(X)$ has entropy 
    \[
      H_\infty(f(X)) \ge n\delta
    \]
    \item[(ii)] \emph{Few fixed points:} 
    \[
      \Pr[f(X)=X] \;\le\; \sqrt{2}\,2^{-n\delta/4}.
    \]
\end{enumerate}
\end{corollary}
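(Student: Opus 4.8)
The plan is to deduce the statement from \Cref{thm:informal-main} by attaching to each classical map $f\colon\{0,1\}^n\to\{0,1\}^n$ a CPTP map $\Phi_f$ on $\mathcal{B}(\CC^d)$, with $d=2^n$, that implements $f$ on the computational basis, and then checking that the three structural hypotheses of \Cref{thm:informal-main} for the family $\{\Phi_f : f\in\mathcal F\}$ unwind to exactly conditions (i)--(ii), together with the size bound inherited from the theorem. Among all channels sending $|x\rangle\langle x|\mapsto|f(x)\rangle\langle f(x)|$ we are free to choose one of minimal Kraus rank, and this freedom is what makes the reduction go through: the naive measure-and-prepare channel has Kraus rank $d$ and violates hypothesis 2, whereas a smarter choice does not. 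Concretely, enumerate each fibre $f^{-1}(y)=\{x_{y,1},\dots,x_{y,k_y}\}$, put $r:=\max_y k_y$, and define
\[
 M_j \;=\; \sum_{y\,:\,k_y\ge j} |y\rangle\langle x_{y,j}| , \qquad
 \Phi_f(\rho)\;=\;\sum_{j=1}^{r} M_j\,\rho\,M_j^{\dagger} .
\]
Because the fibres partition $\{0,1\}^n$ one checks directly that $\sum_j M_j^{\dagger}M_j=\idop$ and $\Phi_f(|x\rangle\langle x|)=|f(x)\rangle\langle f(x)|$, so $\Phi_f$ is a legitimate quantum realization of $f$.

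I then translate the hypotheses one at a time. The size bound transfers immediately, since $|\{\Phi_f : f\in\mathcal F\}|\le|\mathcal F|$. For the Kraus-rank bound, $\Phi_f$ above uses $r=\max_y|f^{-1}(y)|$ Kraus operators and $\max_y|f^{-1}(y)|=2^{n}\max_y\Pr[f(X)=y]=2^{\,n-H_\infty(f(X))}$; hence condition (i), $H_\infty(f(X))\ge n\delta$, says exactly that the Kraus rank of $\Phi_f$ is at most $2^{n(1-\delta)}=d^{1-\delta}$, which is hypothesis 2. For the entanglement-fidelity bound I use the representation-independent identity $F_e(\Phi)=d^{-2}\sum_j|\mathrm{Tr}\,M_j|^2$. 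Here $\mathrm{Tr}\,M_j=\#\{y : x_{y,j}=y\}=:a_j$ counts the fixed points of $f$ that receive index $j$ inside their fibre, so $\sum_j a_j=|\mathrm{Fix}(f)|$ and therefore $F_e(\Phi_f)=d^{-2}\sum_j a_j^2\le d^{-2}\bigl(\sum_j a_j\bigr)^2=\bigl(\Pr[f(X)=X]\bigr)^2$. Condition (ii), $\Pr[f(X)=X]\le\sqrt2\,2^{-n\delta/4}$, then yields $F_e(\Phi_f)\le 2\cdot 2^{-n\delta/2}$, which is the bound demanded by hypothesis 3; the $\sqrt2$ and the exponent $n\delta/4$ in (ii) are chosen precisely so that, after squaring, one lands on the threshold of \Cref{thm:informal-main}.

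With all three hypotheses verified for $\{\Phi_f : f\in\mathcal F\}$, \Cref{thm:informal-main} gives that the Haar-random scheme of \cref{fig:Haar Encoding} is a tamper-detection code against this family with exponentially small soundness error, which is the assertion of the corollary and recovers, inside the unified framework, the classical tamper-detection regime of \cite{JW15}. I expect the entanglement-fidelity step to carry the real content: one must commit to the low-Kraus-rank realization rather than the obvious measure-and-prepare channel, verify that this commitment does not inflate $F_e$ (the bound $F_e(\Phi_f)\le(\Pr[f(X)=X])^2$ holds for \emph{every} admissible indexing of the fibres, so no optimization over indexings is needed), and finally reconcile the numerical constants of condition (ii) with the exact threshold appearing in the formal version of \Cref{thm:informal-main} --- the one place where care with constant factors is required.
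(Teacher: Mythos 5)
Your proposal is correct and follows essentially the same route as the paper: you construct the same low‑Kraus‑rank CPTP realization of $f$ (the paper calls it a collision‑free partition, you enumerate fibres; the Kraus operators $M_j=\sum_{y:k_y\ge j}\ketbra{y}{x_{y,j}}$ are identical to the paper's $K_k=\sum_{x\in B_k}\ketbra{f(x)}{x}$), obtain $\mathrm{rank}\le\max_y|f^{-1}(y)|=2^{n-H_\infty(f(X))}$, and bound $F_e(\Phi_f)\le(\Pr[f(X)=X])^2$ by the same Cauchy–Schwarz-free inequality $\sum_j a_j^2\le(\sum_j a_j)^2$ on the non‑negative per‑index fixed‑point counts. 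The paper packages the rank and fidelity calculations as Lemma~\ref{lem:properties_of_classical_maps} and then cites it in the corollary's proof, while you inline them, but the mathematics is the same.
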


If the family $\mathcal F$ is not too large --- specifically, if $|\mathcal F|\le 2^{2^{\alpha n}}$ for some appropriately chosen $\alpha$, then a Haar-random choice of an $d\times d$ unitary $U$ yields an encoding/decoding pair $(\enc_U,\dec_U)$ that is tamper-secure against every $f\in\mathcal F$.

The above corollary captures a notion of tamper detection similar to \cite{JW15}, whereas the following corollary recovers (and in fact improves the parameters of) the result by \cite{BK23}

\begin{corollary}[Unitary adversary form, informal]
    Let $\advU \subseteq \mathcal{U}_d(\CC)$ be a family of unitary operators such that $|\advU|\le 2^{d^{\alpha}}$ for some $\alpha<\tfrac12$ and for every $V\in\advU$,
   we have, 
      $\big|\Tr(V)\big|\le \sqrt{2}\,d^{3/4}.$

Then, for a Haar-random unitary $U\in\mathcal U_d(\CC)$,
\[
  \Pr_U\!\big[ (\dec_U,\enc_U)\ \text{is }\varepsilon\text{-tamper secure against }\advU\big]
  \;\ge\; 1-\operatorname{negl}(k),
\]
with $\varepsilon=\operatorname{negl}(k)$ and $\gamma=\mathcal O(1)$. In particular, for any fixed $\alpha<\tfrac12$ and sufficiently large $d$ there exists a (fixed) unitary $U$ such that $(\dec_U,\enc_U)$ is $\varepsilon$-tamper secure against $\advU$.
\end{corollary}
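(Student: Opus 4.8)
The plan is to obtain this corollary as the Kraus-rank-one instance of the general channel result, i.e.\ by reading off what the three structural hypotheses of the formal statement Theorem~\ref{thm:tamper_detection_code} (the precise form of Theorem~\ref{thm:informal-main}) become when every element of the family is a unitary conjugation.

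First I would note that a unitary $V\in\advU$, viewed as the CPTP map $\Phi_V:\rho\mapsto V\rho V^\dagger$, has a Kraus representation with the single Kraus operator $V$, so its Kraus rank is $1$; in particular the Kraus-rank hypothesis ($\le d^{1-\delta}$) holds for $\Phi_V$ for \emph{every} $\delta\in(0,1)$, which is exactly the regime in which the channel hypotheses are least demanding. Next I would compute the entanglement fidelity: for a single-Kraus channel one has $F_e(\Phi_V)=\lvert\Tr V/d\rvert^{2}=\lvert\Tr V\rvert^{2}/d^{2}$, so the trace bound $\lvert\Tr V\rvert\le\sqrt{2}\,d^{3/4}$ translates directly into $F_e(\advU)\le 2\,d^{-1/2}$. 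The size hypothesis is literally the one stated, $\lvert\advU\rvert\le 2^{d^{\alpha}}$.

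The delicate point is matching these numbers to the theorem's requirements: $F_e\le 2\,d^{-1/2}$ is only \emph{polynomially} small, so the reduction goes through only because, for constant Kraus rank, the proof of Theorem~\ref{thm:tamper_detection_code} needs no more than polynomial decay of $F_e$. Intuitively the Kraus rank counts how many orthogonal ``error directions'' the Haar-random code must simultaneously avoid, and for rank $1$ a single concentration estimate on $\Pr_U[\text{the tampered codeword is accepted by }\dec_U]$ suffices, already yielding a per-map failure probability of order $2^{-\Omega(d^{1-2\alpha})}$ (the constants absorbing the $\sqrt 2$ and the rate). I would therefore choose $\delta$ with $2\alpha<\delta<1$ — possible precisely because $\alpha<\tfrac12$ — check that $2\,d^{-1/2}\le d^{-\delta/2}$ for all large $d$, invoke the theorem with these $(\alpha,\delta)$ and Kraus rank $1$, and then union-bound over the at most $2^{d^{\alpha}}$ maps of $\advU$; the total failure probability remains $2^{-\Omega(d^{1-2\alpha})}=\negl(k)$, the rate parameter stays $\gamma=\mathcal{O}(1)$, and the existence of a fixed good $U$ is immediate since this probability is strictly positive (indeed $1-\negl(k)$).

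The main obstacle is the concentration estimate itself: showing that, for a Haar-random code, $\Pr_U[\text{survival}]$ decays with an exponent that dominates $d^{\alpha}$ for every $\alpha<\tfrac12$. This is what pins the threshold at $\tfrac12$ and explains the calibration of the $d^{3/4}$ trace bound — it is exactly the value making $F_e\lesssim d^{-1/2}$, i.e.\ the largest $F_e$ the rank-one argument can still tolerate. It is also the source of the quantitative improvement over \cite{BK23}: handling unitaries through the rank-one case of the general CPTP analysis, rather than bounding operator norms of random compressions of $V$ directly as in \cite{BK23}, is what relaxes their $\alpha<\tfrac16$ to $\alpha<\tfrac12$ and their trace condition to $\lvert\Tr V\rvert\le\sqrt2\,d^{3/4}$.
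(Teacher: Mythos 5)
Your proposal is correct and follows essentially the same route as the paper's own proof of this corollary: you reduce to the general CPTP theorem (Theorem~\ref{thm:tamper_detection_code}) by observing that unitary channels have Kraus rank $1$ (so the rank hypothesis is free for any $\delta\in(0,1)$), computing $F_e(\Phi_V)=\lvert\Tr V\rvert^{2}/d^{2}\le 2d^{-1/2}\le 2d^{-\delta/2}$, choosing $\delta$ with $2\alpha<\delta<1$ (possible since $\alpha<\tfrac12$), and matching the family-size bound. The only nit is that the explicit union bound over $\advU$ you mention at the end is already internal to the theorem rather than an extra step, but this does not affect the argument.
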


Moreover, we extend the tamper detection result to the case when the set of plaintext $\mathcal{M}$ consists of quantum states.
The result is formalized and presented in Theorem~\ref{thm:tamper_detection_quantum}. 
The encoder for this case remains the same as that of the classical plaintext, however, the decoder is modified to a two-valued PVM (instead of computational basis measurement), in order to preserve the superposition and entangelement in the plaintext.

\fi
\subsubsection*{Techniques and proof sketch}

Our proofs
\ifExtAbs
\else
  for Theorem~\ref{thm:tamper_detection_code} and Theorem~\ref{thm:tamper_detection_quantum}
\fi
 are based on probabilistic analysis.
In particular, we give an encoding-decoding scheme $(\enc_U-\dec_U)$ for every unitary $U$ (of appropriate dimension).
Suppose we say that a unitary operator $U$ is \emph{good against an adversarial family $\advCPTP$} if $(\enc_U-\dec_U)$ is an $\eps$-tamper  detection code.\footnote{we will typically consider $\eps$ to be exponentially small in the plaintext length.} 
We show that a randomly drawn $U$ is \emph{good} with high probability, and hence, there must exist a \emph{good} $U$ against $\advCPTP$.

\begin{figure}[t]
\centering
\begin{tcolorbox}[colback=white]
\begin{enumerate}
    \item Sample $U\sim$ Haar from $\mathcal{U}_d(\mathbb{C})$.
    \item \textbf{Encoding:}  
    Input $m\in\{0,1\}^k$. Prepare $\ket{m}^A\otimes\ket{0}^B$, where $|B|=\log(d)-k$. 
    \[
    \enc_U(m) \coloneqq U \big( \, \ket{m}^A\otimes\ket{0}^B \big).
    \]
    \item \textbf{Decoding:}  
    On $\rho$, apply $U^\ast$ and measure $B$ to obtain an outcome $b$:
    \begin{itemize}
        \item If $b\neq 0^{|B|}$, output $\perp$.
        \item If $b=0^{|B|}$, measure $A$ in the computational basis and output the result.
    \end{itemize}
\end{enumerate}

\end{tcolorbox}
\caption{Description of the Haar-random encoding scheme}
\label{fig:Haar Encoding}
\end{figure}

At a high level, the proof shows that for a fixed tampering map $\Phi$ with low entanglement fidelity and low minimal Kraus rank, a Haar-random encoding unitary $U$ (\cref{fig:Haar Encoding}) makes the probability that $\Phi$ maps one codeword into another (or into the decoding subspace corresponding to any valid message) negligibly small.  The argument proceeds along three main steps: 
\ifExtAbs
  Reduction to pairwise overlaps. Moment/concentration bounds for random unitaries, Union bounds, and $\varepsilon$-net argument (see full version for details)
\else
\begin{enumerate}
    \item {\bf Reduction to pairwise overlaps.}  Using the definition of the Haar-random encoding and the projective decoder, we reduce the bad event (decoder does not output $\perp$ after tampering) to a small collection of overlap events between the images of computational basis codewords under $U$ and the image of those vectors after applying $\Phi$.
    \item {\bf Moment / concentration bounds for random unitaries.}  For a fixed pair of vectors and a fixed (low-rank) channel, we estimate the moments (and obtain tail bounds) of the corresponding overlap random variable when $U$ is Haar distributed. These estimates exploit standard facts about Haar measure and moments of unitary matrix entries, together with careful accounting for the (minimal) Kraus decomposition of $\Phi$. 
    Intuitively, the rank condition helps us to bound the moments for vector pair $(s,t)$ when $s \neq t$. 
    This is the event that the encoder encoded message $s$ but the adversary managed to trick the decoder into decoding $t$.
  Furthermore, the entanglement fidelity condition helps in bounding moments for $s=t$ case.
  \item  {\bf Union bounds:} Then, to acheive security for all plaintexts against all the adversarial maps, we union-bound over all messages $m \in \mathcal{M}$ and over all maps $\Phi \in \advCPTP$.

  The encoding-decoding map given above can be extended to the case when the plaintexts $\mathcal{M}$ are themselves quantum, however
  the proof for existence of tamper detection code for quantum messages requires an additional step. 
    \item {\bf Net argument.}   A fine enough $\varepsilon$-net (on the unit sphere / projectors) together with the preceding concentration estimates yield that the probability a fixed $\Phi$ causes any decoding error is negligibly small; taking a union bound over the family size $|\advCPTP| \le 2^{d^\alpha}$ preserves negligibility under the theorem’s parameter constraints.
\end{enumerate}
\fi

\subsubsection*{Contribution Part B: Possibility of an universal relaxed tamper detection}
A central obstacle in tamper detection is that certain adversarial families are inherently unavoidable.
In the classical setting, \cite{JW15} showed that the two problematic families are the identity function and the constant functions.
The former can safely be ignored: identity tampering is undetectable by definition, but also harmless, since the message is preserved.
This motivates the notion of relaxed tamper detection (RTD)~\cite{BK23}, which requires the decoder either to abort or, to correctly recover the original message.

Under this relaxed notion, however, constant functions still constitute a fatal obstruction:
\cite{JW15} proved that no (classical) relaxed tamper-detection code can protect against the family of all constant functions, which has size $2^n$.
In contrast, we show that quantum encodings can protect against this very same family.
Specifically, we design a unitary encoding–decoding pair that achieves negligible error against all classical constant functions, provided the codeword length, $n$, is a sufficiently large multiple of the message length~$k$.
In fact, our result holds in a more general setting, 
\ifExtAbs
  as described in the full version. 
\else
and the full formulation is deferred to the technical discussion in \Cref{sec:universal_tamper}.
\fi

This separation suggests a fundamental difference between classical and quantum tamper detection:
whereas constant functions are unconditionally impossible to handle classically, they become benign in the quantum setting.
Motivated by this, we conjecture that quantumly one can in fact achieve relaxed tamper detection against \emph{any} family of CPTP maps of size up to $2^{d^{\alpha n}}$, where $\alpha <1$.
As evidence, we establish this conjecture for the special case of constant replacement channels, showing that Haar-random encodings achieve security with negligible error.
Recall that classically, constant channels were the only bad family for relaxed tamper detection.
Hence, ruling out replacement channels, which are generalizations of constant functions, provides an indication that there are no bad families for tamper detection with quantum encodings. 
However, we could not prove this result and leave this as a conjecture\ifExtAbs
. Nonetheless, our results provide the first proof that quantum tamper detection is strictly more powerful than its classical counterpart.
\else
(see~Conjecture~\ref{con:tamper_dection_against_any_poly}).

Nonetheless, our results provide the first proof that quantum tamper detection is strictly more powerful than its classical counterpart.
\fi

\ifExtAbs
\else

\subsection{Other related coding theoretic and cryptographic primitives}

Perhaps the closest coding-theoretic object that is related to tamper detection are non-malleable codes. 
This concept was introduced in~\cite{DPW18}: in this case, a decoder must  output the original message $m$, or output a message that is unrelated to $m$.
Over the years, non-malleable codes have lead to a well established field of non-malleable cryptography~\cite{CG16,FHMV17,BDSKM18,DPW18,AO20,AKO+22}, to name a few.
Only recently, non-malleable cryptography has found its way in the quantum computing domain~\cite{BJK21arxiv, BB23arxiv, ABJ24, BGJR25}.
We note that both tamper detection codes and relaxed tamper detection codes are stronger objects than non-malleable codes. That is, any (relaxed) tamper detection scheme $(\enc-\dec)$ against an adversarial family $\advCPTP$ is also a non-malleable code against against $\advCPTP$.  
As far as applications are considered, one is often interested in getting non-malleable security against a class that is much broader than the adversarial class of tamper detection. 
In particular, one wants non-malleable guarantees against classes of functions that include the identity function as well as constant functions. 
To achieve this, \cite{JW15} compose a tamper detection code with a \emph{leakage-resilient code} which yields a non-malleable code.
Interestingly, since our quantum encoding already provides relaxed tamper detection guarantees against constant functions (and their generalizations, \emph{i.e.}, replacement channels), we suspect that one might not need an additional layer of leakage-resilience to get non-malleable codes out of quantum encodings.
Indeed, if our Conjecture~\ref{con:tamper_dection_against_any_poly} is true, then we already have a non-malleable code against a very broad family.

Some other notions of non-malleability have also been considered.  
For example~\cite{ABW09,AM17} define  non-malleable codes in a \emph{keyed} setting, which seems closer to a quantum analogue of message authentication codes (MAC). 
It is known that in a keyed setting, MAC does provide non-malleable guarantees (see, for example,~\cite{BW16}). 
The motivation as well as proof technique in a keyed model tend to differ drastically (from those that are keyless) simply because of the assumption of a key that the adversary can neither know nor tamper with.
In particular, such schemes can exploit entropies, uncertainty principles and related combinatorial objects such as randomness extractors.
Keyless primitives have no such information that is hidden from the adversary, which makes it impossible to use such objects and proof techniques.
As mentioned before, a recent line of work has initiated the study of non-malleable codes~\cite{BJK21arxiv, BB23arxiv, ABJ24, BGJR25} in the \emph{split-state} model, where multiple separated and non-communicating adversaries (who are potentially pre-entangled)  tamper separate parts of the codeword but do so independently. 
Such models are keyless but due to the \emph{split} nature of the adversary, it seems the results that such model can provide are incomparable with our model. 
To be more precise, the split adversary model constructs efficient codes against a restricted class of adversaries whereas the model we consider, similar to~\cite{JW15}, aims to provide existential results against adversaries that are not bounded by such a split-action condition.

 On codes that are more related to tamper detection,  Pauli Manipulation Detection (PMD) codes \cite{Ber24} are a subclass of tamper detection codes where  each encoded qubit may be tampered with independently.
 \cite{BK23} introduced these as a natural analogue of (classical) AMD codes and call them quantum AMD codes.
 However, they restrict their analysis to the case when the plaintexts are classical messages. \cite{Ber24} considered a fully quantum version of this. 
 Remarkably, by composing PMD codes with stabilizer codes and classical non-malleable codes, \cite{Ber24} obtains high-rate, near-optimal constructions: approximate quantum erasure codes approaching the Singleton bound and keyless  qubit-wise authentication — a task provably impossible in the classical setting.

Quantum error-correction codes (QEC) (or error detection codes) and quantum message authentication codes  are two primitives that deserve a discussion here as they also deal with the task of detection of errors in message transmission.

Most QEC analyses assume {\em weight-} or {\em locality-bounded} errors (for example, that at most $t$ physical qubits are corrupted), and a code of distance $d_{\mathrm{code}}$ can detect arbitrary errors of weight up to $d_{\mathrm{code}}-1$~\cite{NC00}.
This locality/weight hypothesis underlies explicit, efficient code constructions and efficient fault-tolerant protocols, but it excludes powerful adversaries that may apply global, highly entangling CPTP maps or concentrate action on large subspaces.
In contrast to this, tamper detection codes aim to detect errors even of weight much higher than the distance of code, but at the cost of probabilistic encoding - decoding, and in some cases, even non-explicit constructions.
It should also be noted that QEC can \emph{correct} errors of weight up to 
$\lfloor \tfrac{d_{\mathrm{code}}-1}{2} \rfloor$, but there is no \emph{correction} required from a tamper detection code.
They merely \emph{detect} an error, which in principle, could be of unbounded weight.

Quantum message authentication  codes (Q-MAC)~\cite{BCG+02} are explicitly designed codes that detect adversarial tampering, but they typically rely on secret keys shared between sender and receiver, and often aim to give both integrity and (sometimes) confidentiality.
By contrast, tamper detection codes are  {\em public and keyless}.
The tamper-detection model asks a different question: Suppose a public encoder/decoder family has access only to public classical randomness but no private randomness (\emph{i.e.}, no shared private key), what structural constraints on the adversarial family allow detection of tampering?
While there is technical overlap in tools (random/unitary encodings, projector decoders, concentration of measure), the resource and threat-model differences mean that results and constructions from QEC or Q-MAC do not directly carry over to tamper detection setting.

We also point out why key distribution schemes (such as QKD~\cite{BB84}) do not directly subsume the public, keyless tamper-detection problem. 
First, the objectives differ: QKD aims to establish a shared secret key between parties (typically over an insecure channel) and its security statements concern key correctness and secrecy, whereas tamper detection concerns integrity of a stored quantum message (recover vs.\ output the detection symbol $\bot$). 
Moreover, one may question, why not first distribute keys using QKD and then use these keys to establish Q-MAC, which effectively, gives tamper detection.  
Note that QKD protocols require an authenticated classical channel (or an initial shared key, often referred to as the weak shared password) to prevent man-in-the-middle attacks; thus QKD’s guarantees are not delivered in a purely public, keyless resource model.
Additionally, QKD detects adversaries via interactive, statistical parameter estimation (basis choices, sample testing and error rates) on transmitted qubits; these tests are destructive and probabilistic, and are designed for the transmission/communication setting rather than for a single-shot test that preserves a stored quantum state.

\subsection{Limitations and open problems}

The Haar-random scheme is inherently  {infinite and inefficient; while this is already more constructive compared to many results in coding theory that are probabilistic in nature,} 
finding efficient or efficiently sampleable constructions that achieve comparable parameters is an important open problem. 
However, note that such a construction is not known even against a classical adversary.  Related directions left for future work include:
\begin{itemize}
    \item The classical theory of tamper detection~\cite{JW15} shows the existence of tamper detection code families of size up to $2^{2^{\alpha n}}$ for any $\alpha <1$.
    ~\cite{BK23} achieve similar results against unitary operators, but with $\alpha <\frac{1}{4}$. 
    They conjectured that similar to the classical theory, one should get tamper detection codes for any $\alpha <1$.
    In this work, we manage to raise the ceiling from $\alpha < \frac{1}{4}$ to $\alpha < \frac{1}{2}$ but it would be interesting to see if one can actually push the $\alpha$ even closer to 1.
    \item Of course, Conjecture~\ref{con:tamper_dection_against_any_poly} on universal quantum tamper detection remains one of the main targets.
    Recall that with classical encodings, one had to compose tamper detection codes (or a related notion called the \emph{predictable codes}) with leakage resiliant codes to get non-malleable codes. 
    If the conjecture is true, then one can get non-malleable codes against quantum adversaries, without any such composition. 
    \item Moving away from the existential results, it would be interesting to see an explicit construction, even for a family of size up to $2^{\poly(n)}$. 
    Currently, the only explicit results known are against the family with size $2^{\mathcal{O}(n)}$  (PMD codes~\cite{Ber24} and quantum AMD codes~\cite{BK23} provide security against Pauli channels).
    \item   We only considered non-adaptive single-shot tampering. Extending the analysis to adaptive or multiple tampering attempts, to leakage before or during tampering, or to stronger security goals such as \emph{no-cloning} resistance are natural next steps.

\end{itemize}

\fi

\ifExtAbs
\bibliographystyle{bibtex/bst/alphaarxiv.bst}
\bibliography{bibtex/bib/quasar-full.bib,
              bibtex/bib/quasar.bib,
              bibtex/bib/quasar-more-tamper-new.bib,
              bibtex/bib/quasar-more-merged.bib}

\else

\section{Preliminaries}

We denote by $\NN$ the set of natural numbers, and by $[n]$ the set $\{1, \dots, n\}$. The notation $\log_b(\cdot)$ refers to the logarithm with base $b$. In particular, we write $\log(\cdot)$ for the base-$2$ logarithm and $\ln(\cdot)$ for the natural logarithm (base $e$).

Throughout this work, all Hilbert spaces are assumed to be finite-dimensional. For Hilbert spaces $\cH_A$ and $\cH_B$, we denote by $\mathcal{B}(\cH_A, \cH_B)$ the space of bounded linear operators from $\cH_A$ to $\cH_B$. When $\cH_A = \cH_B = \cH$, we simply write $\mathcal{B}(\cH)$. The $d \times d$ identity operator on $\cH \simeq \CC^d$ is denoted by $I_d$. The trace over $\cH$ is denoted $\Tr[\cdot]$. An operator $M \in \mathcal{B}(\cH)$ is said to be positive semi-definite if and only if it is Hermitian (\ie, $M^* = M$) with non-negative real eigenvalues. In such a case, we write $M \succeq 0$. For any $p \geq 1$, the Schatten $p$-norm is denoted $\|\cdot\|p$, while $\|\cdot\|\infty$ denotes the operator norm.

A Hilbert space $\cH$ will be referred to as a quantum system. A quantum state (or density operator) on $\cH$ is a positive semi-definite operator $\rho \in \mathcal{B}(\cH)$ with unit trace. The set of density operators is convex, and its extreme points are precisely the rank-one projectors $\ketbra{\psi}{\psi}$, where $\ket{\psi} \in \cH$ is a unit vector, referred to as a pure state. We will use both vector and matrix notations interchangeably for pure states.

For a composite system $\cH \simeq \cH_{A_1} \otimes \cdots \otimes \cH_{A_n}$ with $n$ parties, the partial trace over the $i$-th subsystem is denoted $\Tr_{A_i}[\cdot]$. When subsystems have different dimensions, we also index the partial trace by the dimension being traced out. For example, if $\cH \simeq \CC^d \otimes \CC^D$, the notation $\Tr_D[\cdot]$ denotes the partial trace over the second tensor factor.

A linear map $\Phi: \mathcal{B}(\cH) \to \mathcal{B}(\cH)$ is said to be completely positive if, for every $k \in \NN$, the extended map $\Phi \otimes \id_k$ is positive (\ie, it maps positive operators to positive operators), where $\mathrm{id}_k$ denotes the identity map on $\mathcal{B}(\CC^k)$. It is Trace Preserving if, for all $\rho \in \mathcal{B}(\cH)$, one has $\Tr[\Phi(\rho)] = \Tr[\rho]$. A quantum channel is a linear map that is both Completely Positive and Trace Preserving (\CPTP).

A Positive Operator-Valued Measure (\POVM) on $\cH$ is a finite collection of positive semi-definite operators~${M_i}$ such that $\sum_i M_i = I_d$. A Projection-Valued Measure (\PVM) is a \POVM in which each $M_i$ is an orthogonal projection.

We denote by $\mathfrak{S}_n$ the symmetric group on $n$ elements. The group of $d \times d$ unitary matrices is denoted~$\mathcal{U}_d(\CC)$, consisting of all matrices $U$ satisfying $UU^* = U^*U = I_d$. Some properties of $\mathfrak{S}_n$ and $\mathcal{U}_d(\CC)$ relevant to this work are collected in the \Cref{app:symmetric_group,app:weingarten_calculus}. An operator $V \in \mathcal{B}(\CC^d, \CC^D)$ with $D \geq d$ is called an isometry if $V^* V = I_d$. In this case, the operator $V V^* \in \mathcal{B}(\CC^D)$ is the orthogonal projection onto the image of $V$. A unitary is a special case of an isometry when $D=d$.

\subsection{The structure of \texorpdfstring{\CPTP}{CPTP} maps}

By the \emph{Kraus representation} \cite[Corollary 2.27]{Wat18}, any \CPTP map $\Phi: \mathcal{B}(\CC^d) \to \mathcal{B}(\CC^D)$ admits a decomposition of the form
\begin{equation*}
    \Phi(X) = \sum^r_{i=1} K_i X K_i^*,
\end{equation*}
where the integer $r$ is called the \emph{Kraus rank}, and the operators $\{K_i\} \subset \mathcal{B}(\CC^d, \CC^D)$ are called the \emph{Kraus operators} associated with $\Phi$ and satisfy the normalization condition $\sum_i K_i^* K_i = I_d$. This representation is not unique: distinct sets of Kraus operators can define the same quantum channel. The smallest integer $r$ for which such a representation exists is called the \emph{minimum Kraus rank} of $\Phi$, and this value is uniquely determined by the channel.

The \emph{Choi–Jamiołkowski representation} associates to any linear map $\Phi: \mathcal{B}(\CC^d) \to \mathcal{B}(\CC^D)$ its \emph{Choi matrix} $J(\Phi) \in \mathcal{B}(\CC^d, \CC^D)$, defined by
\begin{equation*}
    J(\Phi) \coloneq (\Phi \otimes \id_d) \sum_{i,j=1}^d \ketbra{ii}{jj}.
\end{equation*}
The Choi matrix $J(\Phi)$ fully characterizes the map $\Phi$ and yields a criterion for \CPTP, known as the \emph{Choi theorem} \cite[Corollary 2.27]{Wat18}: a linear map $\Phi : \mathcal{B}(\CC^d) \to \mathcal{B}(\CC^D)$ is \CPTP if and only if $J(\Phi) \succeq 0$ and satisfies the trace condition $\Tr_D[J(\Phi)] = I_d$.

Moreover, the Choi matrix provides a direct characterization of the minimum Kraus rank \cite[Corollary 2.27]{Wat18}: the minimum Kraus rank of $\Phi$ is precisely the matrix rank of $J(\Phi)$. For a collection $\mathcal{F}$ of \CPTP maps, we define
\begin{equation*}
    \rank(\mathcal{F}) \coloneqq \max_{\Phi \in \mathcal{F}} \rank(J(\Phi)),
\end{equation*}
the largest minimum Kraus rank among the maps in $\mathcal{F}$.

The \emph{Stinespring representation} \cite[Corollary 2.27]{Wat18} provides an alternative, yet equivalent, characterization of \CPTP maps in terms of an isometric embedding into a larger Hilbert space. Specifically, any \CPTP map $\Phi: \mathcal{B}(\CC^d) \to \mathcal{B}(\CC^D)$ can be written as  
\begin{equation*}
    \Phi(X) = \Tr_{\mathcal{E}}\big[ V X V^* \big],
\end{equation*}
where $\mathcal{E}$ denotes an auxiliary Hilbert space (the \emph{environment}), and $V: \CC^d \to \CC^D \otimes \mathcal{E}$ is an isometry satisfying $V^* V = I_d$. This representation is unique up to isometries acting on the environment space. The dimension of $\mathcal{E}$ can always be chosen to match the minimum Kraus rank of $\Phi$, thereby linking the Stinespring, Kraus, and Choi representations into a unified structural description of quantum channels.

The \emph{entanglement fidelity} of a quantum channel $\Phi: \mathcal{B}(\CC^d) \to \mathcal{B}(\CC^d)$ quantifies how well the channel preserves entanglement with a reference system. Let $\ket{\Omega} \coloneqq \frac{1}{\sqrt{d}} \sum_{i=1}^d \ket{ii}$ denote the \emph{maximally entangled state}. The entanglement fidelity is defined by \cite[Definition 3.30]{Wat18}:
\begin{equation*}
    F_e(\Phi) \coloneqq \Big\langle \Omega \Big| (\Phi \otimes \id_d) \big( \ketbra{\Omega}{\Omega} \big) \Big| \Omega \Big\rangle.
\end{equation*}
If $\Phi$ admits a Kraus decomposition with Kraus operators $\{K_i\}$, then the entanglement fidelity admits the equivalent expression \cite[Proposition 3.31]{Wat18}:
\begin{equation*}
    F_e(\Phi) = \frac{1}{d^2} \sum_k |\Tr[K_i]|^2.
\end{equation*}
Both definitions are equivalent and independent of the chosen Kraus representation. For a collection $\mathcal{F}$ of \CPTP maps, we define
\begin{equation*}
    F_e(\mathcal{F}) \coloneqq \max_{\Phi \in \mathcal{F}} F_e(\Phi),
\end{equation*}
the largest entanglement fidelity among the maps in $\mathcal{F}$.

\subsection{Tamper detection}

Let $\mathcal{M}$ denote a finite set of classical messages, and let $\CC^d$ be a Hilbert space of dimension $d$. A quantum public and keyless code encodes a classical message $m \in \mathcal{M}$ into a quantum codeword in $\cH$.

\begin{definition}[Quantum Encoding-Decoding Scheme] \label{def:quantum_encoding_decoding_scheme}
    A \emph{quantum encoding–decoding scheme} for $\mathcal{M}$ is defined as a pair of quantum algorithms $(\enc,\dec)$ such that $\enc : \mathcal{M} \to \CC^d$ and $\dec : \CC^d \to \mathcal{M} \cup \{\perp\}$, satisfying the following correctness condition:
    \begin{equation*}    
        \forall m \in \mathcal{M},\quad \Pr[\dec(\enc(m)) = m] = 1.
    \end{equation*}
    Here, the probability is taken over the internal randomness of both the encoder and the decoder.
\end{definition}
{Typically, two cases are of interest:
When $\mathcal{M}$ is the set of \emph{classical} messages, $\mathcal{M} = \{0,1\}^k$ and when $\mathcal{M}$ is the set of quantum states, $\mathcal{M} = \lbrace \psi \in \mathbb{C}^{2^k}, \Vert \psi \Vert_2 = 1 \rbrace$.}

{To specify the dimension of such a scheme, we refer to this as $\left(\log{\mathcal{M}}, \log d\right)$ encoding-decoding scheme. 
The \emph{rate} of the scheme is deseribed by $\textrm{R} \coloneqq \frac{\log d}{\dim{M}}$.}

We consider an adversarial model in which the codeword may be tampered via a map $f \in \advCPTP$, where~$\advCPTP$ denotes a given \emph{family of tampering functions}.

\begin{definition}[Tamper Detection Against quantum Adversaries] \label{def:tamper_detection_against_quantum_adversaries}
    Let $\mathcal{M}$ be a finite set of messages. Let $\advCPTP$ be a set of \CPTP maps from $\mathcal{B}(\CC^d)$ to $\mathcal{B}(\CC^d)$.  
    An encoding–decoding scheme $(\enc, \dec)$ is said to be \emph{$\varepsilon$-tamper secure} against $\advCPTP$ if the following holds:
    \begin{equation*} 
        \forall m \in \mathcal{M}, \forall \Phi \in \advCPTP,\quad \Pr \left[ \dec \circ \Phi \circ \enc(m) = \perp \right] \geq 1 - \varepsilon.
    \end{equation*}
    As before, the probability is taken over the internal randomness of both the encoder and the decoder.
\end{definition}

\subsection{Haar random scheme}

We now introduce a quantum encoding–decoding scheme using unitaries sampled from the Haar measure.

\begin{definition}[Haar Random Scheme Family] \label{def:Haar_random_scheme_family}
    Let $\mathrm{HR}(d,k)$ denote the \emph{Haar random quantum encoding–decoding scheme family} with parameters $d$ and $k$, where $k \leq \log(d)$.
    Let $\mathcal{M} = \{0,1\}^k$ be the message set.
    Let $U \in \mathcal{U}_d(\CC)$ be a $d \times d$ unitary matrix sampled from the Haar measure over the unitary group, and define $(\dec_U, \enc_U)$ to be the \emph{Haar random quantum encoding–decoding scheme} associated with $U$. For each $m \in \mathcal{M}$, consider its canonical encoding as the computational basis state $\ket{m} \in \CC^d$. The pair $(\dec_U, \enc_U)$ is defined as follows:
    \begin{description}
        \item[Encoding:] Register $A$ will hold the message $m$ which will be padded on register $B$ with $\log(d)-k$ sized ancilla.
        \[ \enc_U(m) \coloneqq U \big( \ket{m}^A\otimes\ket{0}^{B} \big). \]
        \item[Decoding:] On receiving the state $\rho$, decoder first reverts the encoder using $U^{*}$, and then measures $B$ in computational basis. 
        Let the measurement result be denoted as $b$.
        \begin{itemize}
            \item If $b \neq 0^{\vert B \vert}$, then output $\widehat{m} = \perp$, indicating that there was a tampering.
            \item If $b =  0^{\vert B \vert}$, then measure $A$ in the computational basis and declare the output to be $\widehat{m}$.
        \end{itemize}
        The decoder can alternatively be described as follows: measure the received codeword using the \PVM  $\{ \Pi_1, \ldots, \Pi_{|\mathcal{M}|}, \Pi_\perp \}$, where $\Pi_m$ is the orthogonal projector onto $\enc_U(m)$, and $\Pi_\perp$ projects onto the orthogonal complement.
    \end{description}
\end{definition}

\begin{remark*}
    Correctness follows directly from the construction, and in particular, the orthogonality of the codewords.
\end{remark*}

The expansion factor for the tamper detection scheme is characterized by the size of ancillary qubits that are used.
More concretely, $\gamma \coloneq \frac{|B|}{|A|}$.
Ideally, we want schemes with $\gamma = \mathcal{O}(1)$, that is, the number if additional qubits used to be linear in the number of plaintext qubits. 

In \Cref{sec:tamper_detection}, we prove that the Haar-random scheme is $\varepsilon$-tamper secure against a family $\advCPTP$ of tampering maps, subject to constraints on the cardinality $|\advCPTP|$, the maximal minimal Kraus rank: $\mathrm{rank}(\advCPTP)$, and the maximal entanglement fidelity: $F_e(\advCPTP)$.
Note that although a Kraus decomposition of a map is not unique, the above quantities are invariant under the choice of Kraus decomposition.

\section{Tamper detection} \label{sec:tamper_detection}

In this section, we establish the main result of the paper: the existence of a tamper-detection code scheme that remains secure against a specified family of \CPTP maps.
 \begin{theorem}[Tamper detection for classical messages] \label{thm:tamper_detection_code}
    Let $\mathcal{M} = \{0,1\}^k$, and let $\advCPTP$ denote an adversarial family of \CPTP maps from $\mathcal{B}(\CC^d)$ to $\mathcal{B}(\CC^d)$ satisfying the following conditions, for parameters $0 \leq 2 \alpha < {\delta} < 1$:
    \begin{itemize}
        \item The cardinality of the adversarial family is bounded as $ \vert \advCPTP \vert \leq 2^{d^{\alpha}}$.
        \item The maximal minimal Kraus rank within the family satisfies $\rank(\advCPTP) \leq d^{1-\delta}$.
        \item The maximal entanglement fidelity of the family satisfies $F_e(\advCPTP) \leq \frac{2}{d^{\delta/2}}$.
    \end{itemize}
     Then, 
     \begin{equation*}
         \Pr_U \big[ (\dec_U, \enc_U) \text{ is $\varepsilon$-tamper secure against $\advCPTP$} \big] \geq 1 - \operatorname{negl}(k),
     \end{equation*}
     with $\varepsilon = \operatorname{negl}(k)$ and $\gamma= \mathcal{O}(1)$. In particular, there exists a unitary $U \in \mathcal{U}_d(\CC)$ such that $(\dec_U, \enc_U)$ is $\varepsilon$-tamper secure against $\advCPTP$.
\end{theorem}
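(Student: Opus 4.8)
The plan is a first-moment argument over the Haar measure: I will show that for $U\sim$ Haar the scheme $(\enc_U,\dec_U)$ of \cref{fig:Haar Encoding} is $\varepsilon$-tamper secure against $\advCPTP$ with probability $1-\negl(k)$, so that a fixed good $U$ exists. Fix $m\in\mathcal M$ and $\Phi\in\advCPTP$, and fix the canonical minimal Kraus representation $\{K_i\}_{i=1}^{r}$ of $\Phi$ (so $r=\rank(J(\Phi))\le\rank(\advCPTP)\le d^{1-\delta}$, the $K_i$ satisfy $\sum_i K_i^*K_i=I_d$ hence $\|K_i\|_\infty\le1$, and the $K_i$ are pairwise Hilbert--Schmidt orthogonal). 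Writing $\ket{\psi_{m'}}:=U\big(\ket{m'}^A\otimes\ket{0}^B\big)$ and abbreviating $\ket{m,0}:=\ket{m}^A\otimes\ket{0}^B$, the projective decoder fails to output $\perp$ with probability
\[
  \Pr\big[\dec_U\circ\Phi\circ\enc_U(m)\neq\perp\big]
  =\sum_{m'\in\mathcal M}\bra{\psi_{m'}}\Phi(\ketbra{\psi_m}{\psi_m})\ket{\psi_{m'}}
  =\sum_{m'\in\mathcal M}\sum_{i=1}^{r}\big|\bra{m',0}\,U^*K_iU\,\ket{m,0}\big|^2 .
\]
Call the $(m',m)$-summand $X_{m',m}(U)$, the ``transition probability'' from codeword $m$ to codeword $m'$ under $\Phi$. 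The task is to bound the diagonal term $X_{m,m}(U)$ (which is $\approx1$ precisely when $\Phi$ acts close to the identity on the $m$-th codeword) and the off-diagonal sum $\sum_{m'\ne m}X_{m',m}(U)$.

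\textbf{Moment estimates for a fixed pair $(s,t)$ and a fixed map.} I study $X_{t,s}(U)$ via its moments $\mathbb E_U\big[X_{t,s}(U)^{p}\big]$ for even $p$, expanded by the Weingarten calculus as a sum over $\mathfrak S_{2p}\times\mathfrak S_{2p}$, with the Kraus contractions accounted for using $\sum_i K_i^*K_i=I_d$ and the Hilbert--Schmidt orthogonality of the $K_i$. Two regimes arise. If $s=t$, then $X_{s,s}(U)=\sum_i\big|\Tr\!\big(K_i\,U\ketbra{s,0}{s,0}U^*\big)\big|^2$, whose first moment equals $\tfrac1{d(d+1)}\big(\sum_i|\Tr K_i|^2+\sum_i\|K_i\|_2^2\big)=\tfrac{dF_e(\Phi)+1}{d+1}=\mathcal O\!\big(d^{-\delta/2}\big)$ by the entanglement-fidelity hypothesis --- the step that rules out $\Phi$ being near-identity. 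If $s\ne t$, the relevant matrix entries have zero first moment, and from $\mathbb E_U\big|\bra{t}U^*K_iU\ket{s}\big|^2=\tfrac1{d^2-1}\big(\|K_i\|_2^2-|\Tr K_i|^2/d\big)$ and $\sum_i\|K_i\|_2^2=d$, the leading term of $\mathbb E_U[X_{t,s}^{p}]$ is $(1\pm o(1))^{p}\,d^{-p}$; the Kraus-rank hypothesis $r\le d^{1-\delta}$ is exactly what keeps the subleading Weingarten contributions (involving cross-terms such as $\sum_{i,j}|\langle K_i,K_j\rangle_{\mathrm{HS}}|^2$) dominated, so that the leading estimate remains valid for a number of moments polynomially related to $d^{\delta}$. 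Choosing the threshold $\theta:=d^{-\delta/4}$ and $p:=\lceil d^{\alpha}\rceil$, Markov's inequality yields, in both regimes,
\[
  \Pr_U\big[X_{t,s}(U)>\theta\big]\ \le\ \frac{\mathbb E_U[X_{t,s}(U)^{p}]}{\theta^{p}}\ \le\ 2^{-\Omega(d^{\alpha}\log d)} .
\]
It is precisely here that the regime $0\le2\alpha<\delta<1$ is consumed: $p\approx d^{\alpha}$ stays within the validity range of the expansion while $\theta=d^{-\delta/4}$ comfortably exceeds the diagonal first moment $d^{-\delta/2}$.

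\textbf{Union bound, parameters, and existence.} Union-bounding the events $\{X_{m',m}(U)>\theta\}$ over all $m,m'\in\mathcal M$ (at most $2^{2k}$) and all $\Phi\in\advCPTP$ (at most $2^{d^{\alpha}}$), and fixing $\log d=(1+\gamma)k$ with $\gamma$ a sufficiently large absolute constant (so that $d^{\alpha}\log d$ dominates both $d^{\alpha}$ and $2k$), the total failure probability is at most $2^{2k}\cdot2^{d^{\alpha}}\cdot2^{-\Omega(d^{\alpha}\log d)}=\negl(k)$. On the complementary event, for every $m\in\mathcal M$ and every $\Phi\in\advCPTP$,
\[
  \Pr\big[\dec_U\circ\Phi\circ\enc_U(m)\neq\perp\big]\ \le\ X_{m,m}(U)+\sum_{m'\ne m}X_{m',m}(U)\ \le\ 2^{k}\theta\ =:\ \varepsilon ,
\]
which is $\negl(k)$ --- in fact exponentially small in $k$ --- once $\gamma>4/\delta-1$; moreover $\gamma=|B|/|A|=\mathcal O(1)$, as required. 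Thus $\Pr_U\big[(\dec_U,\enc_U)\text{ is $\varepsilon$-tamper secure against }\advCPTP\big]\ge1-\negl(k)>0$, so in particular a fixed good unitary $U$ exists.

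\textbf{Anticipated main obstacle.} The technical heart is the moment estimate: proving the Weingarten bound with the right quantitative dependence on the Kraus rank $r$ (certifying the leading-term approximation out to $\approx d^{\alpha}$ moments) and on $F_e(\Phi)$, while carefully handling the non-uniqueness of the Kraus decomposition --- one must commit to the canonical minimal representation in order to control the cross-terms arising at subleading order. The reduction and the union bound are routine; the chain $0\le2\alpha<\delta<1$ is exactly the budget that makes the moment-versus-threshold trade-off close, and it matches the intended reading of the three hypotheses: family size feeds the union bound, small Kraus rank means ``far from every constant map'' and controls the off-diagonal moments, and small entanglement fidelity means ``far from the identity'' and controls the diagonal moments.
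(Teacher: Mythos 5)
Your overall strategy mirrors the paper's: reduce tamper‑detection failure to a sum of pairwise overlaps $X_{t,s}(U)=\sum_i|\langle t,0|U^*K_iU|s,0\rangle|^2$, estimate Haar moments via Weingarten calculus, apply a tail bound for a fixed $(s,t,\Phi)$, and union‑bound over messages and maps. The crucial divergence is the tail argument: the paper runs a Chernoff‑style bound $\Pr[X\ge\varepsilon]\le e^{-\theta\varepsilon}\E[e^{\theta X}]$ with $\theta=d^\beta$, splits the exponential series at a cutoff $\theta_0=d^{\delta/2}$, uses the $n$‑th moment bound only up to $n=\theta_0$, and for $n>\theta_0$ invokes $X\le1$ to flatten the tail of the series. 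You instead apply a single $p$‑th moment Markov with $p\approx d^\alpha$ against a threshold $\theta=d^{-\delta/4}$.

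There is a genuine gap in the diagonal case $s=t$. You give the first moment ($\E[X_{ss}]\approx dF_e(\Phi)/(d+1)=\mathcal O(d^{-\delta/2})$, correct) but never state a $p$‑th moment bound, and the implicit assumption that $\E[X_{ss}^p]\lesssim(\E[X_{ss}])^p$ fails badly because of the Kraus rank. The Weingarten expansion for $\E[X_{ss}^n]$ carries a factor of order $\big(\sqrt{r}\,\tfrac{\phi^2 d+2n}{\phi d}\big)^{2n}$ when $F_e(\Phi)\le\phi^2$ (this is what the paper's moment lemma establishes). Plugging in the theorem's hypothesis $\phi^2=2/d^{\delta/2}$, $r=d^{1-\delta}$, and $n=p=d^\alpha\ll d^{1-\delta/2}$, the base is $\sqrt{r}(\phi^2 d+2p)/(\phi d)\approx\sqrt{2}\,d^{1/2-3\delta/4}$, so that
\begin{equation*}
  \frac{\E[X_{ss}^p]}{\theta^p}\ \lesssim\ \big(2\,d^{1-3\delta/2}\big)^{p}\cdot d^{\delta p/4}\ =\ \big(2\,d^{\,1-5\delta/4}\big)^{p},
\end{equation*}
which blows up for every $\delta<4/5$. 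Since the theorem allows $\delta$ to be as small as slightly above $2\alpha$ (arbitrarily small when $\alpha$ is), your Markov‑with‑one‑$p$ argument does not close the diagonal case on most of the parameter range. The $\sqrt{r}$ penalty in the moment bound is not a ``subleading Weingarten contribution'' that the rank hypothesis tames; it sits directly in the base and is the reason the paper does not attempt a one‑shot $p$‑th moment argument. The Chernoff route, with its two‑part split of the exponential series, a carefully chosen cutoff $\theta_0$, and a $\phi$ tuned near $\sqrt{2\theta_0/d}$, is essential to absorbing that factor.

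Two smaller points. First, your leading‑term claim for the off‑diagonal moments ($\E[X_{ts}^p]\approx(1\pm o(1))^p d^{-p}$) is also too optimistic: the rigorous Weingarten bound gives $\E[X_{ts}^p]\lesssim(rp(d+p)/d^2)^p\approx(p\,d^{-\delta})^p$. Fortunately, your parameter choice happens to survive this weaker estimate (since $\alpha<\delta/2<3\delta/4$), so the $s\neq t$ branch is fine even with the correct bound; but as written the argument is relying on an unproved sharper estimate. Second, committing to the canonical (Hilbert–Schmidt orthogonal) minimal Kraus representation does not by itself let you kill the $r$‑dependent cross‑terms in the Weingarten sum — the paper's bound treats each cycle trace via $\Tr[M]\le d\|M\|_\infty$ and the orthogonality buys you nothing at that level of generality. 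If you want to dispense with the Chernoff machinery you would need to actually prove a moment bound without the $\sqrt{r}$ penalty; that is a substantive additional claim, not a routine Weingarten computation.
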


The proof of \Cref{thm:tamper_detection_code}, presented in \Cref{sec:tamper_detection_code_proof}, relies on several intermediate lemmas established below.

\subsection{Intermediate lemmas}

Let $U \in \mathcal{U}_d(\CC)$ be a unitary matrix and define $\ket{\psi_m} \coloneqq U \ket{m}$ for all messages $m \in \mathcal{M}$. Consider a \CPTP map $\Phi: \mathcal{B}(\CC^d) \to \mathcal{B}(\CC^d)$ with Kraus decomposition:
\begin{equation*}
    \Phi(X) = \sum_{i=1}^r K_i X K_i^*.
\end{equation*}
This representation is generally non-unique, nevertheless the quantities introduced in this section will be invariant under the choice of Kraus operators.
Define the random variable $X_{st}$ as
\begin{align*}
    X_{st} &\coloneqq \Tr \Big[ \ketbra{\psi_t}{\psi_t} \cdot \Phi\big( \ketbra{\psi_s}{\psi_s} \big) \Big] \\
    &= \bra{\psi_t} \Phi\big( \ketbra{\psi_s}{\psi_s} \big) \ket{\psi_t} \\
    &= \sum_{i=1}^r \bra{\psi_t} K_i \ketbra{\psi_s}{\psi_s} K_i^* \ket{\psi_t} \\
    &= \sum_{i=1}^r \bra{\psi_t} K_i \ket{\psi_s} \otimes \bra{\psi_s} K_i^* \ket{\psi_t} \\
    &= \sum_{i=1}^r \bra{\psi_t}\bra{\psi_s} \big( K_i \otimes K_i^* \big) \ket{\psi_s} \ket{\psi_t} \\ 
    &= \sum_{i=1}^r \bra{t}\bra{s} \big( U^* \otimes U^* \big) \big( K_i \otimes K_i^* \big) \big( U \otimes U \big) \ket{s} \ket{t}.
\end{align*}

We now establish the following lemma for the first moment of the operators $X_{st}$. Recall that for any \CPTP map, the entanglement fidelity remains invariant under different choices of Kraus representations. As a direct consequence, the quantity
\begin{equation*}
    d^2 \cdot F_e(\Phi) = \sum^r_{i=1} {\vert \Tr(K_i) \vert}^2,
\end{equation*}
is independent of the specific Kraus decomposition $\{K_i\}_{i=1}^r$ representing the \CPTP map.
\begin{lemma} \label{lem:first_moment_upperbound}
    Let $\phi \in [0,1]$ and $\Phi: \mathcal{B}(\CC^d) \to \mathcal{B}(\CC^d)$ a \CPTP map such that $F_e(\Phi) \leq \phi^2$. Then
    \begin{equation*}
        \E_U \big[ X_{st} \big] \leq
        \begin{cases}
            \frac{\phi^2 d^2 + d}{d^2+d} &\text{if $s = t$} \\[0.5em]
            \frac{d}{d^2-1} &\text{if $s \neq t$} \\
        \end{cases}
    \end{equation*}
\end{lemma}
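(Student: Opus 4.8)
The plan is to compute $\E_U[X_{st}]$ \emph{exactly} using the Weingarten calculus for the unitary group, and only at the very end invoke the hypothesis $F_e(\Phi)\le\phi^2$. First I would put the random variable into a form where the averaging is transparent. Since $\bra{\psi_s}K_i^*\ket{\psi_t}=\overline{\bra{\psi_t}K_i\ket{\psi_s}}$ and $\ket{\psi_m}=U\ket{m}$, one has
\[
X_{st}=\sum_{i=1}^r\big|\bra{t}U^*K_iU\ket{s}\big|^2
=\sum_{i=1}^r\sum_{a,b,a',b'}\overline{U_{at}}\,U_{bs}\,U_{a't}\,\overline{U_{b's}}\;(K_i)_{ab}\,\overline{(K_i)_{a'b'}},
\]
so $\E_U[X_{st}]$ is determined by the single fourth-order moment $\E_U\!\big[\overline{U_{at}}\,U_{bs}\,U_{a't}\,\overline{U_{b's}}\big]$, which is a degree-$(2,2)$ average.

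Second, I would evaluate this moment by the Weingarten formula over $\mathfrak{S}_2$ (see \Cref{app:weingarten_calculus}), using $\mathrm{Wg}(e,d)=\tfrac{1}{d^2-1}$ and $\mathrm{Wg}((12),d)=\tfrac{-1}{d(d^2-1)}$. Summing over the two permutations $\sigma,\tau\in\mathfrak{S}_2$ produces four Kronecker-delta terms; contracting these against $(K_i)_{ab}\overline{(K_i)_{a'b'}}$ and summing over $a,b,a',b'$, the four contributions collapse to expressions in $\Tr K_i$ and $\|K_i\|_2$ only. Concretely I expect to obtain
\[
\E_U\!\big[\,|\bra{t}U^*K_iU\ket{s}|^2\,\big]
=\frac{\big(\delta_{st}-\tfrac1d\big)\,|\Tr K_i|^2+\big(1-\tfrac{\delta_{st}}{d}\big)\,\|K_i\|_2^2}{d^2-1},
\]
which separates into $\tfrac{|\Tr K_i|^2+\|K_i\|_2^2}{d(d+1)}$ when $s=t$, and $\tfrac{1}{d^2-1}\big(\|K_i\|_2^2-\tfrac1d|\Tr K_i|^2\big)$ when $s\neq t$ (the latter being nonnegative by Cauchy--Schwarz, since $|\Tr K_i|^2\le d\,\|K_i\|_2^2$).

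Third, I would sum over $i\in[r]$ and substitute the two structural identities. Trace preservation $\sum_i K_i^*K_i=I_d$ gives $\sum_i\|K_i\|_2^2=\sum_i\Tr(K_i^*K_i)=d$, and the entanglement-fidelity identity recalled just above the lemma gives $\sum_i|\Tr K_i|^2=d^2F_e(\Phi)\le d^2\phi^2$. For $s=t$ this yields $\E_U[X_{st}]\le\tfrac{\phi^2 d^2+d}{d^2+d}$, and for $s\neq t$, discarding the nonpositive term $-\tfrac1d\sum_i|\Tr K_i|^2$ yields $\E_U[X_{st}]\le\tfrac{d}{d^2-1}$ — exactly the two claimed bounds. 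The only genuine work — the ``main obstacle'' — is the index bookkeeping in the Weingarten step: matching each of the row labels $\{b,a'\}$ of the two $U$'s and $\{a,b'\}$ of the two $\overline U$'s (and likewise the column labels $\{s,t\}$) to the correct permutation slot, and tracking which deltas survive the contraction with $K_i$ (these are what produce the $\Tr$ versus the $\|\cdot\|_2^2$ terms). Nothing else is delicate; in particular, invariance under the choice of Kraus operators is automatic, since both $\sum_i|\Tr K_i|^2$ and $\sum_i\|K_i\|_2^2$ are.
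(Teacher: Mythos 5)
Your proof is correct and takes essentially the same approach as the paper: both compute $\E_U[X_{st}]$ exactly via degree-2 Weingarten calculus and then substitute $\sum_i\Tr(K_i^*K_i)=d$ and $\sum_i|\Tr K_i|^2=d^2 F_e(\Phi)\le d^2\phi^2$. The only cosmetic difference is that the paper invokes its packaged second-moment identity $\E_U[(U^*\otimes U^*)(K_i\otimes K_i^*)(U\otimes U)]=c_I I_{d^2}+c_F F$ together with the swap trick, whereas you unpack the same Weingarten sum entry-by-entry over $\mathfrak{S}_2$; your intermediate formula
\[
\E_U\big[|\bra{t}U^*K_iU\ket{s}|^2\big]
=\frac{(\delta_{st}-\tfrac1d)|\Tr K_i|^2+(1-\tfrac{\delta_{st}}{d})\|K_i\|_2^2}{d^2-1}
\]
agrees exactly with the paper's $c_I+c_F$ (for $s=t$) and $c_F$ (for $s\neq t$) once you identify $\|K_i\|_2^2=\Tr(K_i^*K_i)$.
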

\begin{proof}
    From the definition of $X_{st}$,
    \begin{equation*}
        \E_U \big[ X_{st} \big] = \sum_{i=1}^r \bra{t}\bra{s} \E_U \Big[ \big( U^* \otimes U^* \big) \big( K_i \otimes K_i^* \big) \big( U \otimes U \big) \Big] \ket{s} \ket{t}.
    \end{equation*}
    Using \Cref{lem:Weingarten_calculus_first_moments}, we have
    \begin{equation*}
        \E_U \Big[ \big( U^* \otimes U^* \big) \big( K_i \otimes K_i^* \big) \big( U \otimes U \big) \Big] = c_I \cdot I_{d^2} + c_F \cdot F,
    \end{equation*}
    with
    \begin{equation*}
        c_I = \frac{\Tr \big[ K_i \otimes K_i^* \big] - \frac{1}{d} \Tr \Big[ \big( K_i \otimes K_i^* \big) \cdot F \Big]}{d^2 - 1} \qquad \text{and} \qquad c_F = \frac{\Tr \Big[ \big( K_i \otimes K_i^* \big) \cdot F \Big] - \frac{1}{d} \Tr \big[ K_i \otimes K_i^* \big]}{d^2 - 1}.
    \end{equation*}
    Using the swap trick \cref{eq:swap_trick}: $\Tr[(A \otimes B) \cdot F] = \Tr [A \cdot B]$, we obtain
    \begin{equation*}
        c_I = \frac{{\big| \Tr[K_i] \big|}^2 - \frac{1}{d} \Tr \big[ K_i^* K_i \big]}{d^2 - 1} \qquad \text{and} \qquad c_F = \frac{\Tr \big[ K_i^* K_i \big] - \frac{1}{d} {\big| \Tr[K_i] \big|}^2}{d^2 - 1}.
    \end{equation*}
    Thus
    \begin{equation*}
        \E_U \big[ X_{st} \big] = \sum_{i=1}^r c_I \cdot \bra{t} \bra{s} I_{d^2} \ket{s} \ket{t} + c_F \cdot \bra{t} \bra{s} F \ket{s} \ket{t}.
    \end{equation*}
    If $s = t$, then
    \begin{equation*}
        \E_U \big[ X_{st} \big] = \sum_{i=1}^r c_I + c_F = \sum_{i=1}^r \frac{{\big| \Tr[K_i] \big|}^2 + \Tr \big[ K_i^* K_i \big]}{d^2 + d} \leq \frac{\phi^2 d^2 + d}{d^2+d}.
    \end{equation*}
    If $s \neq t$, then
    \begin{equation*}
        \E_U \big[ X_{st} \big] = \sum_{i=1}^r c_F = \sum_{i=1}^r \frac{\Tr \big[ K_i^* K_i \big] - \frac{1}{d} {\big| \Tr[K_i] \big|}^2}{d^2 - 1} \leq \frac{d}{d^2-1}.
    \end{equation*}
    Where we use the fact that $\sum_i \Tr [K_i ^* K_i] = \Tr[I_d]$.
\end{proof}

We now establish the following lemma for the higher-order moments of the operators $X_{st}$. This builds upon the previously-used invariant entanglement fidelity as well as the minimum Kraus rank.

\begin{lemma} \label{lem:nth_moment_upperbound}
    Let $\phi \in [0,1]$  and $\Phi: \mathcal{B}(\CC^d) \to \mathcal{B}(\CC^d)$ a \CPTP map with minimum Kraus rank $r$ and such that $F_e(\Phi) \leq \phi^2$. Then
    \begin{equation*}
        \E_U \big[ (X_{st})^n \big] \leq
        \begin{cases}
             \Big( 1 + \mathcal{O} \big( \tfrac{4n^2}{d} \big) \Big) {\Big( \sqrt{r} \frac{\phi^{2} d + 2n}{\phi d} \Big)}^{2n} &\text{if $s = t$} \\[0.5em]
             \Big( 1 + \mathcal{O} \big( \tfrac{4n^2}{d} \big) \Big) { \Big( \frac{r n (d + n)}{d^2} \Big)}^n &\text{if $s \neq t$} \\
        \end{cases}
    \end{equation*}
\end{lemma}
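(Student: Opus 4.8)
The plan is to compute the moments directly by Weingarten calculus, extending the first‑moment computation of \Cref{lem:first_moment_upperbound} from one copy of $U$ to $n$ copies. Fix a Kraus decomposition $\{K_i\}_{i=1}^r$ of $\Phi$ and recall $\ket{\psi_m}=U\ket m$. The starting observation is that $X_{st}=\sum_{i=1}^r\bigl|\bra{\psi_t}K_i\ket{\psi_s}\bigr|^2$, so that
\[
(X_{st})^n=\sum_{i_1,\dots,i_n\in[r]}\ \prod_{a=1}^n\bigl|\bra{t}U^*K_{i_a}U\ket{s}\bigr|^2,
\]
and each summand is multilinear in $2n$ entries of $U$ and $2n$ entries of $\overline U$, all carrying column index $s$ or $t$, with row indices running over the internal indices of the $K_{i_a}$'s, contracted against $\prod_a(K_{i_a})_{\bullet\bullet}\overline{(K_{i_a})_{\bullet\bullet}}$. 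Applying the Weingarten formula (\Cref{app:weingarten_calculus}) rewrites $\E_U[(X_{st})^n]$ as a sum over pairs of permutations $\sigma,\tau\in\mathfrak S_{2n}$ with weight $\mathrm{Wg}(\tau\sigma^{-1},d)$, in which $\sigma$ pairs the row indices — and thereby fixes which pattern of traces of products of $K_{i_a}$'s and $K_{i_a}^*$'s appears — while $\tau$ pairs the column indices, hence is unconstrained when $s=t$ and restricted to ``$s$ with $s$, $t$ with $t$'' when $s\neq t$.

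In the diagonal case $s=t$ every $\tau$ is admissible, so the $\tau$-sum collapses through the identity $\sum_{\pi\in\mathfrak S_{2n}}\mathrm{Wg}(\pi,d)=\prod_{j=0}^{2n-1}(d+j)^{-1}$, leaving $\E_U[(X_{ss})^n]=\bigl(\prod_{j=0}^{2n-1}\tfrac1{d+j}\bigr)\sum_{i_1,\dots,i_n}\sum_{\sigma\in\mathfrak S_{2n}}C_\sigma(i_1,\dots,i_n)$ with $C_\sigma$ the corresponding Kraus contraction. The $\sigma$ whose cycles stay within each block contributes $\prod_a|\Tr K_{i_a}|^2$, and $\sum_{i_1,\dots,i_n}\prod_a|\Tr K_{i_a}|^2=(\sum_i|\Tr K_i|^2)^n=(d^2F_e(\Phi))^n\le(\phi^2d^2)^n$; together with $\prod_j(d+j)^{-1}\le d^{-2n}$ this is the main term, of order $(\phi^2)^n$. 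Every other $\sigma$ either links Kraus indices across blocks (turning single-operator traces into chains $\Tr[K_i^*K_j]$, $\Tr[K_i^*K_jK_k^*\cdots]$) or produces factors $\Tr[K_i^*K_i]$; these are bounded using $\sum_i\Tr[K_i^*K_i]=d$, using the rank hypothesis $\operatorname{rank}(J(\Phi))\le r$ through Cauchy–Schwarz applied to the rank-$r$ positive Gram matrix $[\,\Tr[K_i^*K_j]\,]_{i,j}$ of trace $d$ (this is what accounts for the factor $r$, one $\sqrt r$ per Kraus slot), and using the Weingarten tail estimate that there are $O((2n)^{2\ell})$ permutations at Cayley distance $\ell$, each of weight $|\mathrm{Wg}|=O(d^{-2n-\ell})$. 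Regrouping the bound so that each of the $2n$ ``$\bra{\psi_s}K_i\ket{\psi_s}$-slots'' contributes a factor $\sqrt r\cdot\frac{\phi^2d+2n}{\phi d}$ — the $\phi^2d$ from the entanglement-fidelity budget, the additive $2n$ from the combinatorial corrections, the $\sqrt r$ from the Kraus-index Cauchy–Schwarz — and absorbing $\prod_j\tfrac{d}{d+j}$ together with all lower-order permutation contributions into the factor $1+\mathcal{O}(4n^2/d)$, gives the claimed $s=t$ bound.

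In the off-diagonal case $s\neq t$ the computation is parallel but the column constraints are genuinely restrictive: the pairings producing the cheap fidelity term are no longer admissible, and moreover the single-operator mean $\E_U[\bra{t}U^*K_iU\ket{s}]=\tfrac{\Tr K_i}{d}\bra{t}\,\ket{s}$ vanishes since $s\neq t$, so the surviving leading contribution is the one reproducing $\E_U[X_{st}]=\Theta(1/d)$ of \Cref{lem:first_moment_upperbound} raised to the $n$-th power — a product of $n$ factors of type $\Tr[K_i^*K_i]/d^2$ summed against $\sum_i\Tr[K_i^*K_i]=d$, decorated by the $O(n)$-many admissible pairings per block and by a factor $r$ for each Kraus-index sum — which yields $\bigl(rn(d+n)/d^2\bigr)^n$, again up to the prefactor $1+\mathcal{O}(4n^2/d)$.

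The crux — and the step I expect to be most delicate — is the combinatorial bookkeeping: matching the cycle type of each admissible pair $(\sigma,\tau)$ to the exact pattern of Kraus traces it generates, carrying through it the precise powers of $d$, $n$, $\phi$ and $r$, and showing that the aggregate of all non-leading terms is at most an $\mathcal{O}(n^2/d)$-fraction of the leading term so that it fits inside the stated prefactor. Getting the closed forms exactly as written — in particular the half-integer power $\sqrt r$ and the numerator $\phi^2d+2n$ — needs the groupings and Cauchy–Schwarz steps to be chosen so the per-slot contributions telescope rather than being bounded one term at a time; this, together with the verification of the identity $\sum_\pi\mathrm{Wg}(\pi,d)=\prod_{j=0}^{2n-1}(d+j)^{-1}$ and of the Weingarten tail bounds imported from \Cref{app:weingarten_calculus}, is where essentially all the work lies.
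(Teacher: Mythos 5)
Your plan is a genuinely different route from the paper, but as written it does not close. The paper never manipulates Weingarten functions or pairs $(\sigma,\tau)$ at all: it invokes \Cref{lem:weingarten_calculus_bound} (imported as a black box) to replace the exact moment operator $\E_U[U^{\otimes 2n}(\cdot)(U^*)^{\otimes 2n}]$ by the much simpler upper bound $\bigl(1+\mathcal{O}(4n^2/d)\bigr)\tfrac{1}{d^{2n}}\sum_{\pi\in\mathfrak S_{2n}}\Tr[V(\pi)^{-1}X]\,V(\pi)$. This turns the whole problem into a single sum over $\mathfrak S_{2n}$ with the uniform weight $d^{-2n}$, so the $1+\mathcal{O}(4n^2/d)$ prefactor is obtained for free and never has to be accounted for by hand. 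From there, the $s\neq t$ case reduces to showing that only parity-alternating $\pi\in\tilde{\mathfrak S}_{2n}$ survive, bounding every trace factor crudely by $d^{\#\pi}$ via $\|K_i\|_\infty\le 1$, and applying \Cref{lem:parity_alternating_permutation_property} to get $n!\,d^{\bar n}$; the $s=t$ case isolates the fixed-point cycles, bounds them by $\sum_i|\Tr K_i|\le\sqrt r\,\phi d$ via Cauchy--Schwarz on the length-$r$ vector $(\Tr K_1,\dots,\Tr K_r)$, and then uses \Cref{lem:fixed_points_bound} plus Stirling numbers of the first kind. Nothing in this argument requires knowing anything about the Weingarten function beyond the operator inequality.

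Your proposal instead attacks the exact Weingarten expansion, and that is where the gaps are. First, your identity $\sum_\pi\mathrm{Wg}(\pi,d)=\prod_{j=0}^{2n-1}(d+j)^{-1}$ is correct and collapses the $\tau$-sum in the $s=t$ case, but in the $s\neq t$ case $\tau$ is restricted to parity-alternating pairings, and then $\sum_{\tau\in\tilde{\mathfrak S}_{2n}}\mathrm{Wg}(\sigma\tau^{-1},d)$ does not collapse to anything nice; you would need genuine Weingarten asymptotics (sign cancellations, Cayley-distance tail bounds, and leading-order coefficients) to control this, and none of that is carried out. Second, your device for producing the $\sqrt r$ factor --- Cauchy--Schwarz on the Gram matrix $[\Tr(K_i^*K_j)]$ --- is not the mechanism that actually works here: in the paper's argument the $r$ enters as $r^{n-|F|}$ from trivially summing $1$ over the Kraus indices that are not pinned to a fixed point, combined with $\sum_i|\Tr K_i|\le\sqrt r\,\phi d$ on the pinned ones, and then $|F|\ge\mathsf{Fix}(\pi)/2$; the Gram-matrix route does not obviously reproduce the specific per-slot factor $\sqrt r\,\tfrac{\phi^2 d+2n}{\phi d}$, and you never explain how it would. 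Third, the claim that \emph{all} non-leading Weingarten contributions (including the $n!\,d^{\bar n}$-type permutation sums and the subleading Weingarten weights at Cayley distance $\ell\ge 1$) aggregate into a mere $\mathcal{O}(n^2/d)$ multiplicative correction is exactly the content of \Cref{lem:weingarten_calculus_bound}; without citing or reproving that result you are assuming the hardest part of the estimate, and you say so yourself in your final paragraph. As it stands, then, the proposal is a plausible plan but not a proof; the shortest path to making it rigorous is precisely to replace the direct Weingarten expansion with the operator bound the paper already uses.
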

\begin{proof}
    From the definition of $(X_{st})^n$,
    \begin{align*}
        (X_{st})^n &= \sum_{i \in [r]^n} {\big| \bra{\psi_t} K_{i_1} \ket{\psi_s} \big|}^2 \otimes \cdots \otimes {\big| \bra{\psi_t} K_{i_n} \ket{\psi_s} \big|}^2 \\
        &= \sum_{i \in [r]^n} \bra{\psi_t}\bra{\psi_s} \big( K_{i_1} \otimes K_{i_1}^* \big) \ket{\psi_s} \ket{\psi_t} \otimes \cdots \otimes \bra{\psi_t}\bra{\psi_s} \big( K_{i_n} \otimes K_{i_n}^* \big) \ket{\psi_s} \ket{\psi_t} \\
        &= \sum_{i \in [r]^n} \big( \bra{t}\bra{s} \big)^{\otimes n} \cdot {U^*}^{\otimes 2n} \cdot \Big[ \big( K_{i_1} \otimes K_{i_1}^* \big) \otimes \cdots \otimes \big( K_{i_n} \otimes K_{i_n}^* \big) \Big] \cdot U^{\otimes 2n} \cdot \big( \ket{s}\ket{t} \big)^{\otimes n}.
    \end{align*}
    Using \Cref{lem:weingarten_calculus_bound}, we have
    \begin{equation*}
        \E_U \bigg[ {U^*}^{\otimes 2n} \cdot \Big[ \underbrace{\big( K_{i_1} \otimes K_{i_1}^* \big) \otimes \cdots \otimes \big( K_{i_n} \otimes K_{i_n}^* \big)}_{\coloneqq X} \Big] \cdot U^{\otimes 2n} \bigg] \preceq \Big( 1 + \mathcal{O} \big( \tfrac{4n^2}{d} \big) \Big) \frac{1}{d^{2n}} \sum_{\pi \in \mathfrak{S}_{2n}} \Tr \big[ V(\pi)^{\shortminus 1} X \big] \cdot V(\pi).
    \end{equation*}
    To compute $\Tr \big[ V(\pi)^{\shortminus 1} X \big]$ we will use the generalized swap trick \cref{eq:generalized_swap_trick}:
    \begin{equation*}
        \Tr \big[ V(\pi)^{\shortminus 1} (M_1 \otimes \cdots \otimes M_k) \big ] = \prod^{\ell}_{j=1} \Tr \big[ M_{c_j(1)} \cdots M_{c_j(\ell_j)} \big],
    \end{equation*}
    where $c_j(k)$ denotes the $k$-th element of the $j$-th cycle and $\ell_j$ denotes its length. Let the operator $\tilde{K}_{i_j}$ be defined by $K_{i_{\lceil j/2 \rceil}}$ if $j$ is odd and $K^*_{i_{\lfloor j/2 \rfloor}}$ is $j$ is even. Then the tensor product $X$ becomes
    \begin{equation*}
        X = \big( K_{i_1} \otimes K_{i_1}^* \big) \otimes \cdots \otimes \big( K_{i_n} \otimes K_{i_n}^* \big) = \big( \tilde{K}_{i_1} \otimes \tilde{K}_{i_2} \big) \otimes \cdots \otimes \big( \tilde{K}_{i_{2n-1}} \otimes \tilde{K}_{i_{2n}} \big),
    \end{equation*}
    and the trace $\Tr[V(\pi)^{\shortminus 1} X]$ becomes
    \begin{equation*}
        \Tr \big[ V(\pi)^{\shortminus 1} X \big] = \prod^{\ell}_{j=1} \Tr \big[ \tilde{K}_{i_{c_j(1)}} \cdots \tilde{K}_{i_{c_j(\ell_j)}} \big].
    \end{equation*}
    Each term of the product can be bounded by $d$ as
    for all operators $M$, the inequality $\Tr[M] \leq d \cdot \norm{M}_{\infty}$ holds, and $\norm{K}_{\infty} \leq 1$ for all Kraus operators. Then
    \begin{equation*}
        \Tr \big[ V(\pi)^{\shortminus 1} X \big] \leq d^{\# \pi},
    \end{equation*}
    where $\# \pi \coloneqq \ell$ is the number of cycle in the cycle decomposition $\pi = (c_1, \ldots, c_\ell)$. We now divide the proof into two cases: $s \neq t$ and $s = t$.

    If $s \neq t$, then the scalar quantity $(\bra{t}\bra{s})^{\otimes n} \cdot V(\pi) \cdot (\ket{s}\ket{t})^{\otimes n}$ is non-zero and equal to $1$ if and only if the permutation $\pi$ satisfies $i \oplus \pi(i) = 1$ for all $i$, \ie the permutations maps even indices to odd indices and odd indices to even indices. Let $\tilde{\mathfrak{S}}_{2n}$ be the subset of such permutation, that is
    \begin{equation*}
        \tilde{\mathfrak{S}}_{2n} \coloneqq \set[\big]{ \pi \in \mathfrak{S}_{2n} \;:\; i \oplus \pi(i) = 1}.
    \end{equation*}
    Then from \Cref{lem:parity_alternating_permutation_property}
    \begin{align*}
        \E_U \big[ (X_{st})^n \big] &\leq  \sum_{i \in [r]^n} \Big( 1 + \mathcal{O} \big( \tfrac{4n^2}{d} \big) \Big) \frac{1}{d^{2n}} \sum_{\pi \in \tilde{\mathfrak{S}}_{2n}} \prod^{\ell}_{j=1} \Tr \big[ \tilde{K}_{i_{c_j(1)}} \cdots \tilde{K}_{i_{c_j(\ell_j)}} \big] \\
        &\leq  \sum_{i \in [r]^n} \Big( 1 + \mathcal{O} \big( \tfrac{4n^2}{d} \big) \Big) \frac{1}{d^{2n}} \sum_{\pi \in \tilde{\mathfrak{S}}_{2n}} d^{\#\pi}\\
        &\leq  \sum_{i \in [r]^n} \Big( 1 + \mathcal{O} \big( \tfrac{4n^2}{d} \big) \Big) \frac{1}{d^{2n}} n! \cdot d^{\bar{n}}  \\
       & \leq  \Big( 1 + \mathcal{O} \big( \tfrac{4n^2}{d} \big) \Big) { \bigg( \frac{r n (d + n)}{d^2} \bigg)}^n.
    \end{align*}

    If $s = t$, then the scalar quantity $(\bra{t}\bra{s})^{\otimes n} \cdot V(\pi) \cdot (\ket{s}\ket{t})^{\otimes n}$ is always equals to $1$ for all permutations $\pi \in \mathfrak{S}_{2n}$. Then a straightforward reordering of the sums yields
    \begin{align}
        \E_U \big[ (X_{ss})^n \big] &\leq \sum_{i \in [r]^n} \Big( 1 + \mathcal{O} \big( \tfrac{4n^2}{d} \big) \Big) \frac{1}{d^{2n}} \sum_{\pi \in {\mathfrak{S}}_{2n}} \prod^{\ell}_{j=1} \Tr \big[ \tilde{K}_{i_{c_j(1)}} \cdots \tilde{K}_{i_{c_j(\ell_j)}} \big] \notag \\
        &\leq \Big( 1 + \mathcal{O} \big( \tfrac{4n^2}{d} \big) \Big) \frac{1}{d^{2n}} \sum_{\pi \in {\mathfrak{S}}_{2n}} \sum_{i \in [r]^n}  \prod^{\ell}_{j=1} \Tr \big[ \tilde{K}_{i_{c_j(1)}} \cdots \tilde{K}_{i_{c_j(\ell_j)}} \big] \label{eq:high_moment_s_eq_t_step}
    \end{align}
    For each permutation $\pi \in \mathfrak{S}_{2n}$ we divide the product $\prod_j \Tr[ \tilde{K}_{i_{c_j(1)}} \cdots \tilde{K}_{i_{c_j(\ell_j)}}]$ into two parts: the cycles of length $1$ (\ie the fixed points of $\pi$) and the larger cycles:
    \begin{equation*}
        \prod^{\ell}_{j=1} \Tr \big[ \tilde{K}_{i_{c_j(1)}} \cdots \tilde{K}_{i_{c_j(\ell_j)}} \big] = \prod^{\ell}_{\substack{j=1 \\ \ell_j = 1}} \Tr \big[ \tilde{K}_{i_{c_j(1)}} \big] \prod^{\ell}_{\substack{j=1 \\ \ell_j \geq 2}}  \Tr \big[ \tilde{K}_{i_{c_j(1)}} \cdots \tilde{K}_{i_{c_j(\ell_j)}} \big]
    \end{equation*}
    then
    \begin{align*}
        \E_U \big[ (X_{ss})^n \big] &\leq \Big( 1 + \mathcal{O} \big( \tfrac{4n^2}{d} \big) \Big) \frac{1}{d^{2n}} \sum_{\pi \in {\mathfrak{S}}_{2n}} \sum_{i \in [r]^n}  \prod^{\ell}_{\substack{j=1 \\ \ell_j = 1}} \Tr \big[ \tilde{K}_{i_{c_j(1)}} \big] \prod^{\ell}_{\substack{j=1 \\ \ell_j \geq 2}}  \Tr \big[ \tilde{K}_{i_{c_j(1)}} \cdots \tilde{K}_{i_{c_j(\ell_j)}} \big] \\
        &\leq \Big( 1 + \mathcal{O} \big( \tfrac{4n^2}{d} \big) \Big) \frac{1}{d^{2n}} \sum_{\pi \in {\mathfrak{S}}_{2n}} \sum_{i \in [r]^n}  \prod^{\ell}_{\substack{j=1 \\ \ell_j = 1}} d^{\#\pi-\mathsf{Fix}(\pi)} \cdot \Tr \big[ \tilde{K}_{i_{c_j(1)}} \big],
    \end{align*}
    where $\mathsf{Fix}(\pi)$ denotes the number of fixed point of $\pi$. Now for each permutation $\pi \in \mathfrak{S}_{2n}$, we divide the sum 
    $\sum_{i}  \prod_{j: \ell_j=1} \Tr[ \tilde{K}_{i_{c_j(1)}}]$ into two parts: the sum over the indices that are fixed by $\pi$, and the complementary part. Let $F_e$ and $F_o$ be the two even and odd fixed points sets defined by
    \begin{align*}
        F_e &\coloneqq \big\{ i \in [n] \;:\; \pi(2i) = 2i \big\} \\
        F_o &\coloneqq \big\{ i \in [n] \;:\; \pi(2i-1) = 2i-1 \big\},
    \end{align*}
    and $F \coloneqq F_e \cup F_o$, such that $|F_e| + |F_o| = \mathrm{Fix}(\pi)$. Then
    \begin{equation*}
        \sum_{i \in [r]^n} \prod^{\ell}_{\substack{j=1 \\ \ell_j = 1}} \Tr \big[ \tilde{K}_{i_{c_j(1)}} \big] = \Bigg( \sum_{i \not\in F} 1 \Bigg) \Bigg( \sum_{i \in F} \prod^{\ell}_{\substack{j=1 \\ \ell_j = 1}} \Tr \big[ \tilde{K}_{i_{c_j(1)}} \big] \Bigg) = r^{n-|F|} \Bigg( \sum_{i \in F} \prod^{\ell}_{\substack{j=1 \\ \ell_j = 1}} \Tr \big[ \tilde{K}_{i_{c_j(1)}} \big] \Bigg),
    \end{equation*}
    and the right-hand-side becomes
    \begin{equation*}
        \sum_{i \in F} \prod^{\ell}_{\substack{j=1 \\ \ell_j = 1}} \Tr \big[ \tilde{K}_{i_{c_j(1)}} \big] = {\bigg( \sum_i \Tr [K_i] \bigg)}^{|F_o|} {\bigg( \sum_i \Tr [K^*_i] \bigg)}^{|F_e|}.
    \end{equation*}
    Using the assumption $\sum_i \vert \Tr(K_i)\vert^2 \leq \phi^2 d^2$ which implies $\sum_i \vert \Tr(K^*_i)\vert = \sum_i \vert \Tr(K_i)\vert \leq \sqrt{r} \phi d$, then
    \begin{equation*}
        \sum_{i \in F} \prod^{\ell}_{\substack{j=1 \\ \ell_j = 1}} \Tr \big[ \tilde{K}_{i_{c_j(1)}} \big] \leq (\sqrt{r}\phi d)^{\mathsf{Fix}(\pi)}.
    \end{equation*}
    Thus,
    \begin{align*}
        \E_U \big[ (X_{ss})^n \big] &\leq \Big( 1 + \mathcal{O} \big( \tfrac{4n^2}{d} \big) \Big) \frac{1}{d^{2n}} \sum_{\pi \in {\mathfrak{S}}_{2n}} \sum_{i \in [r]^n}  \prod^{\ell}_{\substack{j=1 \\ \ell_j = 1}} d^{\#\pi-\mathsf{Fix}(\pi)} \cdot \Tr \big[ \tilde{K}_{i_{c_j(1)}} \big] \\
        &\leq \Big( 1 + \mathcal{O} \big( \tfrac{4n^2}{d} \big) \Big) \frac{1}{d^{2n}} \sum_{\pi \in {\mathfrak{S}}_{2n}} \prod^{\ell}_{\substack{j=1 \\ \ell_j = 1}} \sum_{i \in [r]^n} d^{\#\pi-\mathsf{Fix}(\pi)} \cdot \Tr \big[ K_{i_{c_j(1)}} \big] \\
        &\leq \Big( 1 + \mathcal{O} \big( \tfrac{4n^2}{d} \big) \Big) \frac{1}{d^{2n}} \sum_{\pi \in {\mathfrak{S}}_{2n}} d^{\#\pi-\mathsf{Fix}(\pi)} r^{n - |F|} (\sqrt{r}\phi d)^{\mathsf{Fix}(\pi)} \\
        &\leq \Big( 1 + \mathcal{O} \big( \tfrac{4n^2}{d} \big) \Big) \frac{r^{n}}{d^{2n}} \sum_{\pi \in {\mathfrak{S}}_{2n}} d^{\#\pi} r^{-|F|} {(\sqrt{r}\phi)}^{\mathsf{Fix}(\pi)}
    \end{align*}
    Using that $|F| \geq \mathsf{Fix}(\pi) / 2$, we have
    \begin{align*}
        \E_U \big[ (X_{ss})^n \big] &\leq \Big( 1 + \mathcal{O} \big( \tfrac{4n^2}{d} \big) \Big) \frac{r^{n}}{d^{2n}} \sum_{\pi \in {\mathfrak{S}}_{2n}} d^{\#\pi} r^{-\mathsf{Fix}(\pi)/2} {(\sqrt{r}\phi)}^{\mathsf{Fix}(\pi)} \\
        &\leq \Big( 1 + \mathcal{O} \big( \tfrac{4n^2}{d} \big) \Big) \frac{r^{n}}{d^{2n}} \sum_{\pi \in {\mathfrak{S}}_{2n}} d^{\#\pi} \phi^{\mathsf{Fix}(\pi)} \\
    \end{align*}
    Since $\phi \leq 1$ and using $\mathsf{Fix}(\pi) \geq 2 \#\pi - 2n$ from \Cref{lem:fixed_points_bound}, then
    \begin{align*}
        \E_U \big[ (X_{ss})^n \big] &\leq \Big( 1 + \mathcal{O} \big( \tfrac{4n^2}{d} \big) \Big) \frac{r^n}{d^{2n}} \sum_{c=1}^{2n} \sum_{\substack{\pi \in \mathfrak{S}_{2n} \\ \#\pi = c}} \phi^{2(\#\pi)-2n} d^{\#\pi} \\
        &\leq \Big( 1 + \mathcal{O} \big( \tfrac{4n^2}{d} \big) \Big) \frac{r^n}{d^{2n}} \sum_{c=1}^{2n} {2n \brack c} \phi^{2c-2n} d^{c},
    \end{align*}
    where $\genfrac{[}{]}{0pt}{1}{2n}{c}$ are the unsigned Stirling numbers of the first kind. Using the identity $\sum_{k=0}^n \genfrac{[}{]}{0pt}{1}{n}{k} \cdot x^k = x^{\bar{n}}$, where $x^{\bar{n}}$ denotes the raising factorial $x (x + 1) \cdots (x + n - 1)$, we have
    \begin{equation*}
       \mathbb{E}_U \big[ (X_{ss})^n \big] \leq \Big( 1 + \mathcal{O} \big( \tfrac{4n^2}{d} \big) \Big) \frac{r^{n}}{(\phi d)^{2n}} {\big( \phi^2 d \big)}^{\bar{2n}}.
    \end{equation*}
    Using the upper bound $x^{\bar{n}} \leq (x+n)^n$, we obtain
    \begin{align*}
       \mathbb{E}_U \big[ (X_{ss})^n \big] &\leq \Big( 1 + \mathcal{O} \big( \tfrac{4n^2}{d} \big) \Big) \frac{r^{n}}{(\phi d)^{2n}} {\big( \phi^2 d + 2n \big)}^{2n} \\
        &\leq \Big( 1 + \mathcal{O} \big( \tfrac{4n^2}{d} \big) \Big) {\Big( \sqrt{r} \frac{\phi^{2} d + 2n}{\phi d} \Big)}^{2n}.
    \end{align*}
\end{proof}

Using the upper bounds \Cref{lem:first_moment_upperbound,lem:nth_moment_upperbound} on the moments of the operators $X_{st}$, we can now derive probabilistic bounds on their deviation.

\begin{lemma} \label{lem:deviation_bound}
    Let $\theta$ and $\theta_0$ be two integers such that $\theta>0$ and $0 < \theta_0 \leq \sqrt{d}$. Let $\phi \in [0,1]$ and $\Phi: \mathcal{B}(\CC^d) \to \mathcal{B}(\CC^d)$ a \CPTP map with minimum Kraus rank $r$ and such that $F_e(\Phi) \leq \phi^2$. Then,
    \begin{align*}
        \Pr_U[X_{st} \geq \varepsilon] &\leq e^{-\theta \varepsilon} \mathcal{O} \left( \exp\left( \frac{r \theta \theta_0 (d + \theta_0)}{d^2} \right) + e^{\theta}  \left( \frac{r \theta_0 (d + \theta_0)}{d^2} \right)^{\theta_0} \right), \\[0.5em]
        \Pr_U[X_{ss} \geq \varepsilon] &\leq e^{-\theta \varepsilon} \mathcal{O} \left( \exp\left( \frac{r \theta \left( \phi^2 d + 2\theta_0 \right)^2}{d^2 \phi^2} \right) +e^{\theta} \bigg( \sqrt{r} \frac{\phi^2 d + 2\theta_0}{\phi d} \bigg)^{2\theta_0}  \right).
    \end{align*}
\end{lemma}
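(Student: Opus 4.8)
The plan is to feed the moment estimates of \Cref{lem:first_moment_upperbound,lem:nth_moment_upperbound} into the exponential Markov (Chernoff) inequality, using $\theta_0$ as the level at which we truncate the moment generating function. For any $\theta>0$,
\[
    \Pr_U[X_{st}\geq\varepsilon] \;\leq\; e^{-\theta\varepsilon}\,\E_U\!\big[e^{\theta X_{st}}\big] \;=\; e^{-\theta\varepsilon}\sum_{n=0}^{\infty}\frac{\theta^{n}}{n!}\,\E_U\!\big[(X_{st})^{n}\big],
\]
so it suffices to bound the series $\sum_n \tfrac{\theta^n}{n!}\E_U[(X_{st})^n]$ by $\mathcal O(\cdot)$ times the bracketed expression in the statement. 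I would split the sum at $n=\theta_0$ and handle the two ranges with two observations: (i) for $n\leq\theta_0\leq\sqrt d$ the correction factor $1+\mathcal O(4n^2/d)$ from \Cref{lem:nth_moment_upperbound} is bounded by an absolute constant; and (ii) since $X_{st}=\bra{\psi_t}\Phi(\ketbra{\psi_s}{\psi_s})\ket{\psi_t}\in[0,1]$, we have $(X_{st})^n\leq(X_{st})^{\theta_0}$ and hence $\E_U[(X_{st})^n]\leq\E_U[(X_{st})^{\theta_0}]$ for every $n\geq\theta_0$.

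For the head of the series, $n\leq\theta_0$, the bases appearing in \Cref{lem:nth_moment_upperbound} --- namely $\tfrac{rn(d+n)}{d^2}$ in the case $s\neq t$ and $\sqrt r\,\tfrac{\phi^2 d+2n}{\phi d}$ in the case $s=t$ --- are increasing in $n$; replacing $n$ by $\theta_0$ in them and summing against $\tfrac{\theta^n}{n!}$ yields a truncated exponential series that I bound by the full one. For $s\neq t$ this gives $\mathcal O(1)\exp\!\big(\theta\cdot\tfrac{r\theta_0(d+\theta_0)}{d^2}\big)$, and for $s=t$ it gives $\mathcal O(1)\exp\!\big(\theta\cdot r\tfrac{(\phi^2 d+2\theta_0)^2}{\phi^2 d^2}\big)$; these are exactly the first terms of the two claimed bounds. (If one wishes to sharpen the $n=1$ contribution, it can be isolated and controlled by \Cref{lem:first_moment_upperbound} instead, though this is not needed for the stated form.)

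For the tail, $n>\theta_0$, I would use (ii) to replace every moment by $\E_U[(X_{st})^{\theta_0}]$, bound the latter via \Cref{lem:nth_moment_upperbound} at $n=\theta_0$ --- whose correction factor is $\mathcal O(1)$ because $\theta_0\leq\sqrt d$ --- obtaining $\mathcal O(1)\big(\tfrac{r\theta_0(d+\theta_0)}{d^2}\big)^{\theta_0}$ when $s\neq t$ and $\mathcal O(1)\big(\sqrt r\,\tfrac{\phi^2 d+2\theta_0}{\phi d}\big)^{2\theta_0}$ when $s=t$, and then pull this constant out of $\sum_{n>\theta_0}\tfrac{\theta^n}{n!}\leq e^{\theta}$. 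This produces the second terms $e^{\theta}\big(\tfrac{r\theta_0(d+\theta_0)}{d^2}\big)^{\theta_0}$ and $e^{\theta}\big(\sqrt r\,\tfrac{\phi^2 d+2\theta_0}{\phi d}\big)^{2\theta_0}$. Adding head and tail, absorbing absolute constants into $\mathcal O(\cdot)$, and multiplying by $e^{-\theta\varepsilon}$ yields both displayed inequalities.

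The argument is almost entirely bookkeeping, and I do not expect a deep obstacle; the one point that genuinely requires care is that the moment bound of \Cref{lem:nth_moment_upperbound} carries the factor $1+\mathcal O(4n^2/d)$, which is meaningless for large $n$. This is precisely why the splitting point must satisfy $\theta_0\leq\sqrt d$ (so the factor stays $\mathcal O(1)$ on the range $n\leq\theta_0$ where we actually invoke the bound) and why the tail is treated by the crude monotonicity estimate $(X_{st})^n\leq(X_{st})^{\theta_0}$ rather than by \Cref{lem:nth_moment_upperbound}. Checking convergence of the completed exponential series and tracking the absolute constants so that they remain inside the overall $\mathcal O(\cdot)$ is then routine.
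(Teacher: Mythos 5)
Your proposal is correct and follows essentially the same route as the paper's proof: exponential Markov, split the series at $\theta_0$, bound the head by monotonicity of the moment-bound base in $n$ together with the $\theta_0 \leq \sqrt{d}$ constraint keeping the correction factor $\mathcal{O}(1)$, and bound the tail using $X_{st} \in [0,1]$ so that $\E_U[(X_{st})^n] \leq \E_U[(X_{st})^{\theta_0}]$ for $n > \theta_0$ combined with $\sum_{n > \theta_0} \theta^n/n! \leq e^\theta$. The one small thing you make more explicit than the paper is the observation that $X_{st} \leq 1$ (which follows from $\Phi$ being CPTP and $\ketbra{\psi_s}{\psi_s}$ being a state), which the paper uses silently in its tail estimate.
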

\begin{proof}
    In the case $s \neq t$, applying Markov's inequality yields
    \begin{equation*}
        \Pr_U [X_{st} \geq \varepsilon]
        = \Pr_U\left(e^{\theta X_{st}} \geq e^{\theta \varepsilon}\right)
        \leq \frac{\mathbb{E}_U\left[e^{\theta X_{st}}\right]}{e^{\theta \varepsilon}}.
    \end{equation*}
    Therefore,
    \begin{align}
        \Pr_U [X_{st} \geq \varepsilon]
        &\leq  e^{-\theta \varepsilon} \mathbb{E}_U\left[ e^{\theta X_{st}} \right] \notag \\
        &\leq e^{-\theta \varepsilon} \sum_{i=0}^{\infty} \frac{\theta^i}{i!} \mathbb{E}_U[X_{st}^i] \notag \\
        &\leq e^{-\theta \varepsilon} \left[ \sum_{i \leq \theta_0} \frac{\theta^i}{i!} \mathbb{E}_U[X_{st}^i] + \sum_{i \geq \theta_0 + 1} \frac{\theta^i}{i!} \mathbb{E}_U[X_{st}^i] \right] \label{eq:deviation_bound_st_case}
    \end{align}
    We now evaluate these two contributions separately. For the first sum, we have using \Cref{lem:nth_moment_upperbound}
    \begin{align*}
        \sum_{i \leq \theta_0} \frac{\theta^i}{i!} \mathbb{E}_U\left[ X_{st}^i \right] 
        &\leq \sum_{i \leq \theta_0} \frac{\theta^i}{i!} \left[ 1 + \mathcal{O}\left( \frac{4 \theta_0^2}{d} \right) \right] \cdot \frac{r^i (d + \theta_0)^i}{d^{2i}} \\
        &\leq \sum_{i \leq \theta_0} \frac{\theta^i}{i!} \left[ 1 + \mathcal{O}\left( \frac{4 \theta_0^2}{d} \right) \right] \left( \frac{r \theta_0 (d + \theta_0)}{d^2} \right)^i \\
        &\leq \left[ 1 + \mathcal{O}\left( \frac{4 \theta_0^2}{d} \right) \right] \sum_{i \leq \theta_0} \frac{1}{i!} \left( \frac{r \theta \theta_0 (d + \theta_0)}{d^2} \right)^i \\
        &\leq \left[ 1 + \mathcal{O}\left( \frac{4 \theta_0^2}{d} \right) \right] \exp\left( \frac{r \theta \theta_0 (d + \theta_0)}{d^2} \right).
    \end{align*}
    Using the assumption $\theta_0 \leq \sqrt{d}$, we obtain
    \begin{equation} \label{eq:deviation_bound_st_case_low_moments}
        \sum_{i \leq \theta_0} \frac{\theta^i}{i!} \mathbb{E}_U \left[ X_{st}^i \right] \leq \mathcal{O}\left(  \exp\left( \frac{r \theta \theta_0 (d + \theta_0)}{d^2} \right) \right).
    \end{equation}
    For the second sum, we have
    \begin{align*}
        \sum_{i \geq \theta_0 + 1} \frac{\theta^i}{i!} \mathbb{E}_U \left[ X_{st}^i \right] 
        &\leq \sum_{i \geq \theta_0 + 1} \frac{\theta^i}{i!}  \mathbb{E}_U\left[X_{st}^{\theta_0}\right] \\
        &\leq e^{\theta}  \mathbb{E}_U\left[X_{st}^{\theta_0}\right] \\
        & \leq e^{\theta} \left[ 1 + \mathcal{O}\left( \frac{4 \theta_0^2}{d} \right) \right] \left( \frac{r \theta_0 (d + \theta_0)}{d^2} \right)^{\theta_0} \\ 
        &\leq \mathcal{O} \left( e^{\theta}  \left( \frac{r \theta_0 (d + \theta_0)}{d^2} \right)^{\theta_0} \right).
    \end{align*}
    Under the same assumption $\theta_0 \leq \sqrt{d}$, it follows that
    \begin{equation} \label{eq:deviation_bound_st_case_high_moments}
        \sum_{i \geq \theta_0 + 1} \frac{\theta^i}{i!} \mathbb{E}_U\left[X_{st}^i\right] \leq \mathcal{O} \left( e^{\theta}  \left( \frac{r \theta_0 (d + \theta_0)}{d^2} \right)^{\theta_0} \right)
    \end{equation}
    Combining \cref{eq:deviation_bound_st_case,,eq:deviation_bound_st_case_low_moments,,eq:deviation_bound_st_case_high_moments}, we finally obtain
    \begin{equation*}
        \Pr_U [X_{st} \geq \varepsilon] \leq e^{-\theta \varepsilon} \mathcal{O} \left( \exp\left( \frac{r \theta \theta_0 (d + \theta_0)}{d^2} \right) + e^{\theta}  \left( \frac{r \theta_0 (d + \theta_0)}{d^2} \right)^{\theta_0} \right).
    \end{equation*}
        Similarly, for the case $s = t$, we have
    \begin{equation} \label{eq:deviation_bound_ss_case}
        \Pr_U [X_{st} \geq \varepsilon] \leq e^{-\theta \varepsilon} \left[ \sum_{i \leq \theta_0} \frac{\theta^i}{i!} \mathbb{E}_U[X_{ss}^i] + \sum_{i \geq \theta_0 + 1} \frac{\theta^i}{i!} \mathbb{E}_U[X_{ss}^i] \right].
    \end{equation}
    For the first sum, corresponding to low-order moments, we obtain
    \begin{align}
        \sum_{i \leq \theta_0} \frac{\theta^i}{i!} \mathbb{E}_U[X_{ss}^i]
        &\leq \sum_{i \leq \theta_0} \frac{\theta^i}{i!} \left(1 + \frac{4 i^2}{d} \right) \left( \sqrt{r} \frac{\phi^2 d + 2i }{\phi d} \right)^{2i} \notag \\
        &\leq \sum_{i \leq \theta_0} \frac{\theta^i}{i!} \left(1 + \frac{4\theta_0^2}{d} \right) \left( \sqrt{r} \frac{\phi^2 d + 2\theta_0}{\phi d} \right)^{2i} \notag \\
         &\leq \mathcal{O} \left( \sum_{i \leq \theta_0} \frac{\theta^i}{i!} \left( \sqrt{r} \frac{\phi^2 d + 2\theta_0}{\phi d} \right)^{2i} \right) \notag \\
        &\leq \mathcal{O} \left( \sum_{i \leq \theta_0} \frac{1}{i!} \left[ \frac{r \theta \left( \phi^2 d + 2\theta_0 \right)^2}{(\phi d)^2} \right]^i\right) \notag \\
        & \leq \mathcal{O} \left( \exp\left( \frac{r \theta \left( \phi^2 d + 2\theta_0 \right)^2}{d^2 \phi^2} \right) \right). \label{eq:deviation_bound_ss_case_low_moments}
    \end{align}
    For the second sum, corresponding to high-order moments, we have
    \begin{align}
        \sum_{i \geq \theta_0 + 1} \frac{\theta^i}{i!} \mathbb{E}_U\left[X_{ss}^i\right] 
        &\leq \sum_{i \geq \theta_0 + 1} \frac{\theta^i}{i!}  \mathbb{E}_U\left[X_{ss}^{\theta_0}\right] \notag \\
        &\leq e^{\theta}  \mathbb{E}_U\left[X_{ss}^{\theta_0}\right] \notag \\
        & \leq e^{\theta} \Big( 1 + \mathcal{O} \big( \tfrac{4\theta_0 ^2}{d} \big) \Big)  {\bigg( \sqrt{r} \frac{\phi^2 d + 2\theta_0}{\phi d} \bigg)}^{2\theta_0}  \notag \\
        & \leq \mathcal{O}\left( e^{\theta} \bigg( \sqrt{r} \frac{\phi^2 d + 2\theta_0}{\phi d} \bigg)^{2\theta_0} \right). \label{eq:deviation_bound_ss_case_high_moments}
    \end{align}
    Combining the above estimates \cref{eq:deviation_bound_ss_case,,eq:deviation_bound_ss_case_low_moments,,eq:deviation_bound_ss_case_high_moments}, we conclude that
    \begin{equation*}
        \Pr_U [X_{ss} \geq \varepsilon] \leq e^{-\theta \varepsilon} \mathcal{O} \left( \exp\left( \frac{r \theta \left( \phi^2 d + 2\theta_0 \right)^2}{d^2 \phi^2} \right) +e^{\theta} \bigg( \sqrt{r} \frac{\phi^2 d + 2\theta_0}{\phi d} \bigg)^{2\theta_0}  \right).
    \end{equation*}
\end{proof}

\subsection{Proof of \Cref*{thm:tamper_detection_code}} \label{sec:tamper_detection_code_proof}

We are now in a position to establish the main result stated in \Cref{thm:tamper_detection_code}.
\begin{proof}[{Proof of \Cref{thm:tamper_detection_code}}]
    We are going to prove that for all messages $s \in \mathcal{M}$, and for all \CPTP map $\Phi \in \advCPTP$, the following holds:
    \begin{equation*}
        \Pr \left[ \dec_U \circ {\Phi} \circ \enc_U(s) = \perp \right] \geq 1 - \varepsilon,
    \end{equation*}
    where the probability is over the randomness of the measurement of $\dec_U$. Equivalently, that for all messages $s \in \mathcal{M}$, and for all \CPTP map $\Phi \in \advCPTP$, the probability that the decoder fails to detect the tampering is at most $\varepsilon$. Thus using \Cref{lem:deviation_bound} and
    \begin{equation*}
        \Pr \left[ \dec_U \circ {\Phi} \circ \enc_U(s) = \perp \right] = 1 - \sum_{t \in \mathcal{M}} X_{st},
    \end{equation*}
    we have
    \begin{align*}
        \Pr_U \Big[ \Pr \big[ \dec_U \circ \Phi \circ \enc_U(s) \neq \perp \big] \geq \varepsilon \Big]
        &\leq \Pr_U \bigg[ \exists t: X_{st} \geq \frac{\varepsilon}{2^k} \bigg] \\
        &\leq \Pr_U \bigg[ X_{ss} \geq \frac{\varepsilon}{2^k} \bigg] + \Pr_U \bigg[ \exists t \neq s \ :\  X_{st} \geq \frac{\varepsilon}{2^k} \bigg] \\
        &\leq  e^{-\theta \varepsilon/ 2^k} \mathcal{O} \Bigg( \exp\left( \frac{r \theta \left( \phi^2 d + 2\theta_0 \right)^2}{d^2 \phi^2} \right) +e^{\theta} \bigg( \sqrt{r} \frac{\phi^2 d + 2\theta_0}{\phi d} \bigg)^{2\theta_0} \\
        &\quad+ 2^k \exp\left( \frac{\theta r \theta_0 (d + \theta_0)}{d^2} \right) + 2^k e^{\theta}  \left( \frac{r \theta_0 (d + \theta_0)}{d^2} \right)^{\theta_0} \Bigg),
    \end{align*}
    where the inner probabilities are taken over the inherent quantum random dues to quantum decoders (measurements). Taking union bound over $s \in \mathcal{M}$ and $\Phi \in \advCPTP$, with $|\mathcal{M}| = 2^k$, and using  \Cref{lem:deviation_bound}, 
    \begin{align*}
         &\Pr_U \Big[ \exists s, \Phi \ : \ \Pr \big[ \dec_U \circ \Phi \circ \enc_U(s) \neq \perp \big] \geq \varepsilon \Big] \\
         &\leq 2^{k} \vert \advCPTP \vert e^{-\theta \varepsilon / 2^k} \mathcal{O} \Bigg( \exp\left( \frac{r \theta \left( \phi^2 d + 2\theta_0 \right)^2}{d^2 \phi^2} \right) +e^{\theta} \bigg( \sqrt{r} \frac{\phi^2 d + 2\theta_0}{\phi d} \bigg)^{2\theta_0} \\
         &\quad+ 2^k \exp\left( \frac{\theta r \theta_0 (d + \theta_0)}{d^2} \right) + 2^k e^{\theta}  \left( \frac{r \theta_0 (d + \theta_0)}{d^2} \right)^{\theta_0} \Bigg) \\
         &\leq 2^{k} \vert \advCPTP \vert e^{-\theta \varepsilon / 2^k} \mathcal{O} \Bigg( \exp\left( \frac{r \theta \left( \phi^2 d + 2\theta_0 \right)^2}{d^2 \phi^2} \right) +e^{\theta} \bigg( \sqrt{r} \frac{\phi^2 d + 2\theta_0}{\phi d} \bigg)^{2\theta_0} \\
        &\quad+ 2^k \exp\left( \frac{2\theta r \theta_0}{d} \right) + 2^k e^{\theta}  \left( \frac{2r \theta_0}{d} \right)^{\theta_0} \Bigg).
    \end{align*}
    Setting $r \leq d^{1-\delta}$ and $\phi = \sqrt{\frac{2 \theta_0}{d}}$, with $\delta \in [0,1)$, we obtain
    \begin{align*}
        &\Pr_U \Big[ \exists s, \Phi \ : \ \Pr\big[ \dec_U \circ \Phi \circ \enc_U (s) \neq \perp \big] \geq \varepsilon \Big] \\
        &\leq 2^{k} \vert \advCPTP \vert e^{-\theta \varepsilon / 2^k} \mathcal{O} \Bigg( \exp\left( \frac{8 \theta \theta_0}{d^\delta} \right) +e^{\theta} \bigg( \frac{8 \theta_0}{d^\delta} \bigg)^{\theta_0} + 2^k \exp\left( \frac{2\theta \theta_0}{d^{\delta}} \right) + 2^k e^{\theta}  \left( \frac{2\theta_0}{d^{\delta}} \right)^{\theta_0} \Bigg).
    \end{align*}
    Setting, $\theta = d^{\beta}$  and $\theta_0 = d^{\delta/2}$, where $\beta$ is a constant such that $\alpha < \beta \leq \delta/2$, we have
    \begin{align*}
         &\Pr_U \Big[ \exists s, \Phi \ : \ \Pr\big[ \dec_U \circ \Phi \circ \enc_U (s) \neq \perp \big] \geq \varepsilon \Big] \\
        &\leq 2^{k} \vert \advCPTP \vert \exp\left(-\frac{d^\beta \varepsilon }{ 2^k}\right) \mathcal{O} \Bigg( \exp\left( \frac{8 d^{\beta} }{d^{\delta/2}} \right) + e^{d^{\beta}} \bigg(\frac{8}{d^{\delta/2}} \bigg)^{d^{\delta/2}}  + 2^k \exp\left( \frac{2 d^{\beta} }{d^{\delta/2}} \right) + 2^k e^{d^{\beta}} \bigg( \frac{2}{d^{\delta/2}} \bigg)^{d^{\delta/2} }\Bigg) \\
        &\leq 2^{k} \vert \advCPTP \vert \exp\left(-\frac{d^\beta \varepsilon }{ 2^k}\right) \mathcal{O} \Bigg(  2^k \exp\left( \frac{8 d^{\beta} }{d^{\delta/2}} \right) + 2^k e^{d^{\beta}} \bigg(\frac{8}{d^{\delta/2}} \bigg)^{d^{\delta/2}} \Bigg) \\
        &\leq 2^{2k} \vert \advCPTP \vert \exp\left(-\frac{d^\beta \varepsilon }{ 2^k}\right) \mathcal{O} \Bigg(  \exp\left(  {8 d^{\beta - \delta/2 } } \right) +   e^{d^{\beta}} \bigg(\frac{8}{d^{\delta/2}} \bigg)^{d^{\delta/2}} \Bigg). 
    \end{align*}
    For $\beta \leq \frac{\delta}{2}$ and large enough $d$, it is not hard to see that $\exp\left(  {8 d^{\beta - \delta/2 } } \right)  \geq   e^{d^{\beta}} \bigg(\frac{8}{d^{\delta/2}} \bigg)^{d^{\delta/2}}$. Thus,
    \begin{equation*}
        \Pr_U \Big[ \exists s, \Phi \ : \ \Pr\big[ \dec_U \circ \Phi \circ \enc_U (s) \neq \perp \big] \geq \varepsilon \Big] \leq 2^{2k} \vert \advCPTP \vert \exp\left(-\frac{d^\beta \varepsilon }{ 2^k}\right) \mathcal{O} \Bigg(  \exp\left(  {8 d^{\beta - \delta/2 } } \right) \Bigg).
    \end{equation*}
    Using that $\vert \advCPTP \vert = \exp\left(d^{\alpha}\right)$, we finally have
    \begin{align*}
        \Pr_U \Big[ \exists s, \Phi \ : \ \Pr\big[ \dec_U \circ \Phi \circ \enc_U (s) \neq \perp \big] \geq \varepsilon \Big] 
        &\leq 2^{2k}  \mathcal{O} \Bigg(  \exp\left(  {8 d^{\beta - \delta/2 } -\frac{d^\beta \varepsilon }{ 2^k} } + d^{\alpha} \right) \Bigg) \\
        &\leq \mathcal{O} \Bigg(  \exp\left(  {8 d^{\beta - \delta/2 } -\frac{d^\beta \varepsilon }{ 2^k} } + d^{\alpha} +2k \right) \Bigg).
    \end{align*}
    Recall that the expansion factor $\gamma$ is equals to $\frac{\log d}{\log \mathcal{M}}$, such that $\log(d) = \gamma k$, then when $\beta - \alpha >  2/ \gamma $ and $\varepsilon = 2^{\shortminus k}$, the right-hand side term becomes exponentially small in $k$.
\end{proof}
\subsection{The classical adversary}

Recall that the classical theory of tamper detection considers adversarial functions
$\,f:\{0,1\}^n \to \{0,1\}^n\,$.
We can lift this naturally to the quantum setting via CPTP extensions.

\begin{definition}[CPTP extension of a classical function]
Let $\{\lvert x \rangle\}_{x\in\{0,1\}^n}$ denote the computational basis of $(\mathbb{C}^2)^{\otimes n}$.
A completely positive, trace-preserving (CPTP) map
$\Phi:\mathcal{B}((\mathbb{C}^2)^{\otimes n})\to\mathcal{B}((\mathbb{C}^2)^{\otimes n})$
\emph{extends} a classical function $f$ if for every $x\in\{0,1\}^n$,
\begin{equation}
  \Phi\bigl(\lvert x\rangle\!\langle x\rvert\bigr) \,=\, \lvert f(x)\rangle\!\langle f(x)\rvert.
\end{equation}
We call such a $\Phi$ \emph{classical}.
Equivalently, $\Phi$ maps computational-basis pure states to computational-basis pure states according to $f$.
\end{definition}

For an adversary holding a family of channels $\mathcal{F}_{\mathrm{adv}}$, we say the adversary \emph{acts classically} (or \emph{acts via classical extensions}) if every $\Phi\in\mathcal{F}_{\mathrm{adv}}$ is the CPTP extension of some classical function $f_{\Phi}$.

\begin{lemma} \label{lem:properties_of_classical_maps}
    Let $f$ be a classical function from $n$-bits to $n$-bits. 
    Consider its {\CPTP extension} $\Phi$.
    Then,
    \begin{itemize}
        \item $\rank(\Phi) =  \max_{y \in \{0,1\}^n} |f^{-1}(y)| . $
        \item $F_e(\Phi) \in \left[ \frac{1}{2^{2n} r} \text{(number of fixed points of $f$)}^2,   \frac{1}{2^{2n}} \text{(number of fixed points of $f$)}^2 \right]$.
    \end{itemize}
\end{lemma}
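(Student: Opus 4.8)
The plan is to work with the explicit fiber-indexed Kraus representation of the canonical extension $\Phi$. Fix an ordering of each fiber, writing $f^{-1}(y)=\{x_1(y),x_2(y),\dots\}$, set $r\coloneqq\max_y|f^{-1}(y)|$, and for $1\le j\le r$ define $K_j\coloneqq\sum_{y:\,|f^{-1}(y)|\ge j}|y\rangle\langle x_j(y)|$. A one-line check gives $\sum_{j=1}^r K_j^*K_j=\sum_x|x\rangle\langle x|=I_d$ and $\sum_{j=1}^r K_j|x\rangle\langle x|K_j^*=|f(x)\rangle\langle f(x)|$, so $\{K_j\}_{j=1}^r$ is a valid Kraus decomposition of a \CPTP extension of $f$ with Kraus rank $r$; in particular $\rank(\Phi)\le r$.

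For the matching lower bound I would show that \emph{every} Kraus decomposition $\Phi(\cdot)=\sum_{i=1}^m K_i(\cdot)K_i^*$ has $m\ge r$. Since $\Phi(|x\rangle\langle x|)=|f(x)\rangle\langle f(x)|$ is rank one and equals $\sum_i (K_i|x\rangle)(K_i|x\rangle)^*$, each vector $K_i|x\rangle$ is proportional to $|f(x)\rangle$, say $K_i|x\rangle=c_{i,x}|f(x)\rangle$, hence $K_i=\sum_x c_{i,x}|f(x)\rangle\langle x|$. Substituting into $\sum_i K_i^*K_i=I_d$ and comparing $(x,x')$ entries yields $\sum_i|c_{i,x}|^2=1$ on the diagonal and, whenever $x\ne x'$ with $f(x)=f(x')$, the relation $\sum_i\overline{c_{i,x}}\,c_{i,x'}=0$ (the factor $\langle f(x)|f(x')\rangle$ equals $1$). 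Thus for each $y$ the coefficient vectors $\{(c_{i,x})_{i=1}^m : x\in f^{-1}(y)\}$ are orthonormal in $\CC^m$, forcing $m\ge|f^{-1}(y)|$ and hence $m\ge r$. Combined with the previous paragraph this gives $\rank(\Phi)=r$ (and shows the fiber-indexed representation is the minimal one).

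For the entanglement fidelity I would plug the same representation into the Kraus-independent formula $F_e(\Phi)=\tfrac1{d^2}\sum_i|\Tr(K_i)|^2$ with $d=2^n$. Here $\Tr(K_j)=\sum_{y:\,|f^{-1}(y)|\ge j}\langle x_j(y)|y\rangle$ counts the fixed points of $f$ that occupy slot $j$ of their fiber, so the numbers $a_j\coloneqq\Tr(K_j)$ are non-negative integers with $\sum_{j=1}^r a_j=N$, where $N$ is the number of fixed points of $f$. Then
\[
F_e(\Phi)=\frac{1}{2^{2n}}\sum_{j=1}^r a_j^2,
\]
and the claimed interval is immediate: $\sum_j a_j^2\le\bigl(\sum_j a_j\bigr)^2=N^2$ since the $a_j\ge 0$, while $\sum_{j=1}^r a_j^2\ge\tfrac1r\bigl(\sum_j a_j\bigr)^2=N^2/r$ by Cauchy--Schwarz. (As a remark, the upper bound $F_e(\Phi)\le N^2/2^{2n}$ in fact holds for \emph{any} \CPTP extension: from $K_i=\sum_x c_{i,x}|f(x)\rangle\langle x|$ one gets $\Tr(K_i)=\sum_{x:\,f(x)=x}c_{i,x}$, and writing $v_x=(c_{i,x})_{i=1}^m$ for the unit column vectors, $\sum_i|\Tr(K_i)|^2=\bigl\|\sum_{x:\,f(x)=x}v_x\bigr\|^2\le N^2$ by the triangle inequality; only the lower bound genuinely uses minimality.)

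The one delicate step is the Kraus-rank lower bound, i.e. extracting the within-fiber orthonormality of the coefficient vectors from the identity $\sum_i K_i^*K_i=I_d$; the rest is bookkeeping (verifying the fiber-indexed operators form a channel, identifying $\Tr(K_j)$ with a fixed-point count, and the two elementary inequalities for $F_e$). I would also flag explicitly that "its \CPTP extension" must be read as the minimal-rank one here, since a non-minimal extension — for instance the fully dephasing extension of the identity, which has Kraus rank $2^n$ rather than $1$ — already violates the first item.
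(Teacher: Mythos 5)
Your proposal is correct and follows essentially the same strategy as the paper's proof: the same fiber-indexed Kraus operators for the rank upper bound (your $K_j=\sum_{y:|f^{-1}(y)|\ge j}|y\rangle\langle x_j(y)|$ is identical to the paper's $K_k=\sum_{x\in B_k}|f(x)\rangle\langle x|$ after unwinding the collision-free-partition notation), and the same $\sum_j a_j^2\le\big(\sum_j a_j\big)^2$ and Cauchy--Schwarz pair for the fidelity interval. The one genuine divergence is the rank lower bound: the paper passes through a Stinespring dilation and argues that environment vectors within a fiber must be orthonormal, whereas you work directly at the Kraus level, deducing $K_i=\sum_x c_{i,x}|f(x)\rangle\langle x|$ from purity of $\Phi(|x\rangle\langle x|)$ and then reading orthonormality of the coefficient vectors $(c_{i,x})_i$ off the trace-preservation constraint. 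These are the same argument in two dialects --- your $(c_{i,x})_i$ is precisely the paper's $|\psi_x\rangle$ --- but your version is slightly more self-contained since it never invokes the Stinespring theorem. Two of your side remarks are worth keeping: (a) the observation that the fidelity \emph{upper} bound $F_e(\Phi)\le N^2/2^{2n}$ holds for \emph{every} CPTP extension (by the triangle inequality on the unit coefficient vectors), not just the minimal one --- this is the bound actually used downstream in Corollary~\ref{cor:classical_min_entropy}, so the observation tightens the logic; and (b) the flag that ``its CPTP extension'' is not unique, and both the rank equality and the fidelity lower bound fail for non-minimal extensions (the fully dephasing extension of the identity being the cleanest counterexample, with rank $2^n$ and $F_e=2^{-n}$ rather than $1$). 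The paper's own proof has the same implicit restriction to the minimal-rank extension without saying so, so your flag is a legitimate clarification, not a gap in your argument.
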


\begin{proof}
    
We divide the proof into two parts.
First we compute the rank and then we move onto bounds for the entanglement fidelity.
The proof for the rank itself is further divided into two parts.
We show that any Stinespring representation requires dimension of the environment to be at least $|f^{-1}(y)|$ (for each $y \in \lbrace 0,1 \rbrace^n$).
This, using correspondence between Stinespring and Kraus representations, gives a lower bound on $\rank(\Phi)$.
Then, we give an explicit Kraus representation, which naturally give an upper bound on $\rank(\Phi)$.
This explicit Kraus representation also helps us to compute bounds on the entanglement fidelity (which as described before, is invariant under the choice of Kraus decomposition.) 

\textsf{Lower Bound:}

Let $f:\{0,1\}^n\to\{0,1\}^n$ and $d=2^n$. Consider any CPTP map $\Phi$ that
satisfies
\[
\Phi(|x\rangle\langle x|)=|f(x)\rangle\langle f(x)|\qquad(\forall x).
\]
Consider a Stinespring isometry $V:\mathcal{H}\to\mathcal{H}\otimes\mathcal{H}_E$ such that
$\Phi(\rho)=\operatorname{Tr}_E[V\rho V^\dagger]$. Then for each input basis
element there exist environment vectors $ \ket{\psi_x}_E$ such that
\[
V\ket{x}=\ket{f(x)}\otimes \ket{\psi_x}_E.
\]
Since $V$ is an isometry, it preserves inner product and thus, we have
\[
\langle x|x'\rangle=\langle Vx|Vx'\rangle=\langle f(x)|f(x')\rangle\,
\langle\psi_x|\psi_{x'}\rangle.
\]
In other words, for $x \neq x^\prime$ such that $f(x) =f(x^\prime)$, $\ket{\psi_x}$ and $\ket{\psi_{x^\prime}}$ must be orthogonal.
So the environment must contain orthonormal vectors for each of those $x$'s with the same output.
Hence,
\[ \rank(J (\Phi)) \geq 
\dim\mathcal{H}_E \;\ge\; \max_{y\in\mathrm{Im}(f)} |\preim(y)|.
\]

\textsf{Upper bound:} 

Let $f : \{0,1\}^n \to \{0,1\}^n$.  
A collection $\{B_1, B_2, \dots, B_\ell\}$ is called a \emph{collision-free partition} of $\{0,1\}^n$ for $f$ if  
\begin{itemize}
\item $\{B_1, B_2, \dots, B_\ell\}$ is a partition.
That is, $\bigcup_{i=1}^\ell B_i = \{0,1\}^n, \quad B_i \cap B_j = \varnothing \ \text{for all } i \neq j$.
\item  There is no collision for any block.
That is, for any $i \in [l]$ and for any $x,y \in B_i,\ x \neq y$ we have  $f(x) \neq f(y).$
Equivalently, $\left. f \right|_{B_i}$ is injective for all $i \in [\ell]$.
\end{itemize}

It is not hard to see that one can achieve a collision-free partition for $l = \max\limits_{y\in \mathrm{Im}(f)} { \vert f^{-1}(y)\vert}$. 
Simply, for each \(y \in \{0,1\}^n\)  enumerate
\[
f^{-1}(y) = \{ x_{y,1}, x_{y,2}, \dots, x_{y,|f^{-1}(y)|} \}.
\]
Let \(l := \max_{y \in \{0,1\}^n} |f^{-1}(y)|\), and define
\[
B_k := \{\, x_{y,k} \mid y \in \{0,1\}^n, \ k \le |f^{-1}(y)| \}, 
\quad k = 1, \dots, l.
\]
Then \(\{B_1, \dots, B_m\}\) is a collision-free partition: 
the \(B_k\) are pairwise disjoint, their union is \(\{0,1\}^n\), 
and each \(B_k\) contains at most one element from any fibre \(f^{-1}(y)\), 
so \(f\) is injective on each \(B_k\).

Now, we can consider the Kraus operators $K_k = \sum\limits_{x \in B_k} \ketbra{f(x)}{x}$.
Again, it is easy to see that these form Kraus operators for $f$.
Thus,
\[ \rank(J(\Phi)) \leq \mathrm{Krauss\ rank}  \leq  \max_{y \in \{0,1\}^n} |f^{-1}(y)| .\]

Combining lower bound and upper bound, we get 
\[ \rank(J(\Phi)) =  \max_{y \in \{0,1\}^n} |f^{-1}(y)| .\]

Moreover,

\begin{align*}
    \sum_k \vert \Tr(K_k)\vert^2 & = \sum_k \left( \Tr \sum\limits_{x \in B_k} \ketbra{f(x)}{x} \right)^2 \\
     & = \sum_k \left(\sum_{x^\prime} \langle x^\prime \sum\limits_{x \in B_k} \ketbra{f(x)}{x} x^\prime \rangle \right)^2 \\
      & = \sum_k \left(   \sum\limits_{x \in B_k} \braket{f(x)}{x}   \right)^2 \\
      & \leq \left( \sum_k    \sum\limits_{x \in B_k} \braket{f(x)}{x}   \right)^2 \\
      & \leq \left( \sum_{x \in \{0,1\}^n}  \braket{f(x)}{x} \right)^2 \\
      & \leq \text{(number of fixed points of $f$)}^2.
\end{align*}

\begin{align*}
    \sum_k \vert \Tr(K_k)\vert^2 & = \sum_k \left( \Tr \sum\limits_{x \in B_k} \ketbra{f(x)}{x} \right)^2 \\
     & = \sum_k \left(\sum_{x^\prime} \langle x^\prime \sum\limits_{x \in B_k} \ketbra{f(x)}{x} x^\prime \rangle \right)^2 \\
      & =  \sum_k \left(   \sum\limits_{x \in B_k} \braket{f(x)}{x}   \right)^2 \\
      &  \geq \frac{1}{r} \left( \sum_k    \sum\limits_{x \in B_k} \braket{f(x)}{x}   \right)^2 \\
      & \geq \frac{1}{r}\left( \sum_{x \in \{0,1\}^n}  \braket{f(x)}{x} \right)^2 \\
      & \geq \frac{1}{r} \text{(number of fixed points of $f$)}^2.
\end{align*}

Thus, 

\[  \frac{1}{r} \text{(number of fixed points of $f$)}^2 \leq  \sum_k \vert \Tr(K_k)\vert^2 \leq  \text{(number of fixed points of $f$)}^2. \]
Dividing all the three terms by $d^2=(2^n)^2$ gives us the bounds on the entanglement fidelity.
\end{proof}

\begin{corollary}[Classical min-entropy form]\label{cor:classical_min_entropy}
Let $n,k\in\mathbb{N}$ and set $d=2^n$. Let $\mathcal F$ be a family of  functions
\(
f:\{0,1\}^n\to\{0,1\}^n.
\)
Define
\[
p_{\max}(f):=\max_{y}\Pr[f(X)=y],\qquad
p_{\mathrm{fix}}(f):=\Pr[f(X)=X].
\]
Fix parameters $0\le 2\alpha<\delta<1$. Suppose every $f\in\mathcal F$ satisfies
\begin{enumerate}
    \item (min-entropy / rank) \quad $H_\infty(f(X)) \ge n\delta$, i.e.
    \[
    p_{\max}(f)\le 2^{-n\delta},
    \]
    \item (fixed-point probability) \quad
    \[
    p_{\mathrm{fix}}(f)\le \sqrt{2}\,2^{-n\delta/4},
    \]
\end{enumerate}
and the family size obeys
\[
|\mathcal F|\le 2^{d^{\alpha}} = 2^{2^{\alpha n}}.
\]
Then for a Haar-random unitary $U\in\mathcal U_d(\mathbb C)$,
\[
\Pr_U\!\big[ (\dec_U,\enc_U)\ \text{is }\varepsilon\text{-tamper secure against }\mathcal F \big]
\;\ge\; 1-\operatorname{negl}(k),
\]
with $\varepsilon=\operatorname{negl}(k)$ (and $\gamma=\mathcal O(1)$). In particular there exists a fixed unitary $U$ achieving this security.
\end{corollary}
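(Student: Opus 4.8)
The plan is to deduce the corollary directly from \Cref{thm:tamper_detection_code} by lifting the classical family $\mathcal F$ to the family $\advCPTP=\{\Phi_f:f\in\mathcal F\}$ of CPTP extensions, and then translating the probabilistic hypotheses on $p_{\max}$ and $p_{\mathrm{fix}}$ into the operator-theoretic quantities $\rank(\advCPTP)$ and $F_e(\advCPTP)$ via \Cref{lem:properties_of_classical_maps}. Throughout, set $d=2^n$ and let $X$ be uniform on $\{0,1\}^n$. Note that tamper security against $\Phi_f$ on a classical plaintext $m\in\{0,1\}^k$ is the same event as tamper security against $f$, since $\Phi_f(\ketbra m m)=\ketbra{f(m)}{f(m)}$ and the Haar decoder's computational-basis measurement on register $A$ cannot distinguish the two; so it suffices to verify the three hypotheses of \Cref{thm:tamper_detection_code} for $\advCPTP$ with the given $0\le 2\alpha<\delta<1$.

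First, the cardinality is unchanged: $|\advCPTP|\le|\mathcal F|\le 2^{2^{\alpha n}}=2^{d^{\alpha}}$, which is the first hypothesis. Second, since $p_{\max}(f)=\max_y|f^{-1}(y)|/2^{n}$, the min-entropy bound $H_\infty(f(X))\ge n\delta$ is the statement $\max_y|f^{-1}(y)|\le 2^{n(1-\delta)}=d^{1-\delta}$; by the first bullet of \Cref{lem:properties_of_classical_maps} this is exactly $\rank(J(\Phi_f))\le d^{1-\delta}$, and taking the maximum over $f\in\mathcal F$ gives $\rank(\advCPTP)\le d^{1-\delta}$. Third, write $N_{\mathrm{fix}}(f)$ for the number of fixed points of $f$, so that $N_{\mathrm{fix}}(f)=2^{n}p_{\mathrm{fix}}(f)\le \sqrt2\,2^{n}2^{-n\delta/4}$. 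The upper half of the second bullet of \Cref{lem:properties_of_classical_maps} gives $F_e(\Phi_f)\le N_{\mathrm{fix}}(f)^2/d^2\le 2\cdot 2^{-n\delta/2}=2/d^{\delta/2}$, hence $F_e(\advCPTP)\le 2/d^{\delta/2}$. (The constant $\sqrt2$ in hypothesis (2) is calibrated precisely so that this bound lands on the threshold required by the theorem.) With all three hypotheses verified, \Cref{thm:tamper_detection_code} yields a Haar-random $(\dec_U,\enc_U)$ that is $\operatorname{negl}(k)$-tamper secure against $\advCPTP$ with $\gamma=\mathcal O(1)$, and in particular a fixed $U$ achieving this; by the reduction above this is the claimed statement for $\mathcal F$.

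The argument is essentially bookkeeping once \Cref{lem:properties_of_classical_maps} is in hand; the only genuine content sits in that lemma, especially the two-sided control of $F_e$ by the fixed-point count. The mildest points of care in the reduction are: (i) confirming that $\Phi_f$ is an honest channel with minimum Kraus rank exactly $\max_y|f^{-1}(y)|$ --- this is what the collision-free-partition construction in the proof of \Cref{lem:properties_of_classical_maps} supplies --- and (ii) checking that, because \Cref{thm:tamper_detection_code} is stated for message space $\{0,1\}^k$ with $k\le\log d=n$, one is free to take $k<n$ and the ancilla-padding used in the Haar-random scheme does not interfere with the classical action of $f$. I expect no real obstacle beyond these routine checks.
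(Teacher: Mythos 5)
Your proposal is correct and follows essentially the same three-part reduction as the paper's proof: (1) the cardinality bound carries over directly; (2) $p_{\max}(f)=\max_y|f^{-1}(y)|/2^n$, so the min-entropy hypothesis is exactly $\rank(J(\Phi_f))\le d^{1-\delta}$ via the first bullet of \Cref{lem:properties_of_classical_maps}; (3) the upper bound $F_e(\Phi_f)\le p_{\mathrm{fix}}(f)^2$ from the second bullet, combined with $p_{\mathrm{fix}}(f)\le\sqrt2\,2^{-n\delta/4}$, gives $F_e(\advCPTP)\le 2/d^{\delta/2}$. Then \Cref{thm:tamper_detection_code} applies. This is precisely the paper's argument.

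One small caveat about your framing paragraph: the claim that ``tamper security against $\Phi_f$ \dots is the same event as tamper security against $f$, since $\Phi_f(\ketbra{m}{m})=\ketbra{f(m)}{f(m)}$'' is not quite a valid justification, because the decoder does not see $\Phi_f(\ketbra{m}{m})$ --- it sees $\Phi_f(U\ketbra{m}{m}U^*)$, and $U\ket{m}\ket{0}$ is a superposition of computational basis states, so $\Phi_f$'s action on it is not determined by $f$ alone. The honest statement is simply that ``tamper secure against $\mathcal F$'' means, by the paper's convention (Definition of CPTP extension and ``acts classically''), tamper secure against the chosen family $\advCPTP$ of CPTP extensions, and \Cref{lem:properties_of_classical_maps} guarantees the rank and fidelity bounds hold for any such extension. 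This does not affect the validity of your verification of the three hypotheses, which is where the actual content lies; the paper's own proof sketch silently adopts the same convention.
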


\begin{proof}[Proof sketch]
Let $r_f:=\max_y|f^{-1}(y)|$. Since $p_{\max}(f)=r_f/2^n$, condition (i) is equivalent to
$r_f\le d^{1-\delta}$, so the rank condition of Theorem \ref{thm:tamper_detection_code} holds.
Using  Lemma~\ref{lem:properties_of_classical_maps}, we have $F_e(\Phi_f)\le p_{\mathrm{fix}}(f)^2$, and then condition (ii)
implies $F_e(\Phi_f)\le 2\,d^{-\delta/2}$, which is the entanglement-fidelity bound required by the theorem.
Finally, the cardinality bound on $\mathcal F$ is the family-size condition, which is readily satisfied. Applying Theorem
\ref{thm:tamper_detection_code} yields the stated conclusion.
\end{proof}

\subsection{Unitary tamper detection}
Now we show how our theorem implies the unitary tamper detection as given by~\cite{BK23}.

\begin{corollary}\label{cor:unitary_tamperers_delta1}
Let $\advU \subseteq \mathcal{U}_d(\CC)$ be an adversarial family of unitary operators where each $V\in\advU$ induces the channel $\rho\mapsto V\rho V^\dagger$.  
Assume the family satisfies
\begin{enumerate}
    \item (family size) \quad $|\advU|\le 2^{d^{\alpha}}$ for some $\alpha<\tfrac12$,
    \item (trace bound) \quad for every $V\in\advU$,
    \[
      \big|\Tr(V)\big|\le \sqrt{2}\,d^{3/4}.
    \]
\end{enumerate}
Then, for a Haar-random unitary $U\in\mathcal U_d(\CC)$,
\[
  \Pr_U\!\big[ (\dec_U,\enc_U)\ \text{is }\varepsilon\text{-tamper secure against }\advU\big]
  \;\ge\; 1-\operatorname{negl}(k),
\]
with $\varepsilon=\operatorname{negl}(k)$ and $\gamma=\mathcal O(1)$. In particular, for any fixed $\alpha<\tfrac12$ and sufficiently large $d$ there exists a (fixed) unitary $U$ such that $(\dec_U,\enc_U)$ is $\varepsilon$-tamper secure against every $V\in\advU$ satisfying the above trace and size constraints.
\end{corollary}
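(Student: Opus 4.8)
The plan is to obtain this corollary as a direct instantiation of \Cref{thm:tamper_detection_code}, applied to the family of unitary channels $\advCPTP \coloneqq \{\Phi_V : V \in \advU\}$, where $\Phi_V(X) = V X V^\dagger$. The entire task thus reduces to choosing the parameters $\alpha,\delta$ appropriately and verifying the three structural hypotheses of that theorem for $\advCPTP$. First I would fix the parameters: since $\alpha < \tfrac12$ by assumption, we have $2\alpha < 1$, so the interval $(2\alpha,1)$ is nonempty and we may pick any $\delta$ with $2\alpha < \delta < 1$ (for concreteness $\delta = \alpha + \tfrac12$). This is the only place the hypothesis $\alpha<\tfrac12$ enters, and it is exactly what makes the admissible parameter window of \Cref{thm:tamper_detection_code} available.

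Next I would check the three conditions for $\advCPTP$. The cardinality bound $|\advCPTP| = |\advU| \le 2^{d^\alpha}$ is immediate from hypothesis (1). For the rank condition, each $\Phi_V$ has the single–Kraus–operator representation $\Phi_V(X) = V X V^\dagger$, so its Choi matrix has rank one and its minimum Kraus rank equals $1$; hence $\rank(\advCPTP) = 1 \le d^{1-\delta}$ for every $\delta < 1$. For the entanglement-fidelity condition, using the identity $F_e(\Phi) = \tfrac{1}{d^2}\sum_i |\Tr K_i|^2$ with the lone Kraus operator $K_1 = V$, we get $F_e(\Phi_V) = \tfrac{1}{d^2}|\Tr V|^2$, and the trace bound (2) gives $F_e(\Phi_V) \le \tfrac{1}{d^2}\cdot 2 d^{3/2} = 2 d^{-1/2}$. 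Since $\delta < 1$ we have $d^{-1/2} \le d^{-\delta/2}$, hence $F_e(\advCPTP) \le 2 d^{-1/2} \le 2 d^{-\delta/2}$, which is precisely the bound demanded by the theorem. (Observe that the threshold $\sqrt2\,d^{3/4}$ in hypothesis (2) is reverse-engineered exactly so that $F_e \le 2 d^{-1/2}$, which sits comfortably inside the admissible range $2 d^{-\delta/2}$ for any $\delta<1$.)

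With all three hypotheses verified, \Cref{thm:tamper_detection_code} applied to $\advCPTP$ with these $\alpha,\delta$ yields directly that a Haar-random $U \in \mathcal U_d(\CC)$ gives a scheme $(\dec_U,\enc_U)$ that is $\varepsilon$-tamper secure against $\advCPTP$ — equivalently against $\advU$ — with probability $1-\negl(k)$, with $\varepsilon = \negl(k)$ and expansion factor $\gamma = \mathcal O(1)$; the existence of a fixed good $U$ for sufficiently large $d$ follows by a standard averaging argument. I do not expect any genuine obstacle here: the mathematical content is entirely carried by \Cref{thm:tamper_detection_code}, and the improvement over \cite{BK23} (which required $\alpha < \tfrac14$) is explained by the bookkeeping above — because unitary channels have Kraus rank $1$, the rank condition imposes no constraint and $\delta$ may be pushed arbitrarily close to $1$, so the family-size exponent is limited only by $2\alpha < \delta < 1$, i.e. $\alpha < \tfrac12$.
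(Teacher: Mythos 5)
Your proof is correct and follows essentially the same route as the paper: both reduce to \Cref{thm:tamper_detection_code} by choosing $\delta\in(2\alpha,1)$, observing that unitary channels have Kraus rank $1$ so the rank condition is vacuous, and translating the trace bound into $F_e(\Phi_V)=|\Tr V|^2/d^2\le 2d^{-1/2}\le 2d^{-\delta/2}$. Your closing remark correctly identifies why the rank-$1$ property is what lets $\delta$ approach $1$ and hence $\alpha$ approach $\tfrac12$.
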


\begin{proof}

First, note that since all the operators in $\advU$ are unitary, $\rank{\advU}=1$.
Hence, for any constant $\delta \geq 0$, the rank condition required by our theorem is satisfied.
Now since $\alpha<\tfrac{1}{2}$ we may pick any $\delta <1$ such that $2\alpha<\delta<1$.
This further implies that for a sufficiently large $d$,
$\frac{2}{d^{1/2}}\le \frac{2}{d^{\delta/2}}$.
Thus, it suffices to show that $F_e(\advU) \leq \frac{2}{d^{1/2}}$, which will complete the third and final requirement of the theorem, giving us the required corollary.
Now, compute the entanglement fidelity of a unitary channel induced by $V$: with $|\Omega\rangle=\tfrac{1}{\sqrt d}\sum_{i=1}^d|ii\rangle$.

\[ F_e(V)=\big|\langle\Omega|(V\otimes I)|\Omega\rangle\big|^2=\frac{|\operatorname{Tr}(V)|^2}{d^2}.
\]

By the assumption of the corollary, $|\operatorname{Tr}(V)|\le \sqrt{2} d^{3/4}$.
So,
\[F_e(V)\le \frac{2 d^{3/2}}{d^2}=\frac{2}{d^{1/2}}.
\]
\qedhere
\end{proof}

\section{Towards Universal Tamper Detection} \label{sec:universal_tamper}

In this section, we provide the definitions and context for our conjecture on universal quantum tamper detection (\Cref{sec:universal-conjecture}), and then, in support of the conjecture, we show a quantum advantage of the Haar random scheme for tamper detection (\Cref{sec:advantage-tamper}). 

\subsection{Universal quantum tamper detection}
\label{sec:universal-conjecture}
Recall that \cite{JW15} showed that there are two types of bad families for classical tamper detection. 
These families are constant functions and identity functions.
There are two reasons one can chose to ``ignore" the identity function (and those that are close to it).
First, it is impossible to protect against them, irrespective form of encoding-decoding scheme consider, simply from the definition. 
There is no way to catch an adversary who does nothing (applies the identity map). 
The second reason is that such functions do not cause any harm to the message.
The decoder does not detect tampering, but it still produces the correct decoding.
So, one can only focus only on  non-trivial maps, considering a slightly weaker notion of security, but which still captures the spirit of tamper detection.
\cite{BK23} refer to this as \emph{relaxed tamper detection}.

\begin{definition}[Relaxed tamper detection\label{def:relaxed_TD}]
Let $\mathcal{F}_{\mathsf{Adv}}$ be a family of functions from $\{0,1\}^n$ to $\{0,1\}^n$ and let $\mathcal{M}= \lbrace 0,1\rbrace^k$ be the message set. 
We say that an encoding-decoding scheme $\left(\enc, \dec\right)$ is $\varepsilon$-tamper secure in the relaxed setting (against  $\mathcal{F}_{\mathsf{Adv}}$), if for all $m \in \mathcal{M}$ and for all $f \in \mathcal{F}_{\mathsf{Adv}}$, the following holds:
\[\Pr\Big[ \dec   \circ f \circ \enc(m) \in \lbrace \perp,m \rbrace \Big] \geq 1-\varepsilon.\] \end{definition}

Although \cite{JW15} did not explicitly define relaxed version of tamper detection, it follows from their observations that there is a bad family of size only $2^n$ for relaxed tamper detection.
We briefly sketch the proof partly for completeness and partly to guide our discussion for the quantum case. 

\begin{lemma}[\cite{JW15}] \label{lem:Const_against_classical_rtd}
    There exists a family $\advcls$ of size $2^n$ such that there does not exist a relaxed tamper detection code (with a classical $\enc-\dec$) against $\advcls$.
\end{lemma}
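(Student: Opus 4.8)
The plan is to exhibit an explicit family of size $2^n$ — namely the constant functions — and show that no classical $(\enc,\dec)$ can achieve even the relaxed guarantee against all of them simultaneously. Let $\advcls := \{g_c : c \in \{0,1\}^n\}$ where $g_c$ is the constant map sending every input to $c$. This family has size exactly $2^n$, matching the claimed bound.

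The argument is a short pigeonhole-and-indistinguishability argument, closely following the discussion of constant functions in the introduction. First I would observe that completeness forces $\enc$ to be essentially deterministic on each message in the sense that matters here: for each $m$, there is some codeword value $c_m$ in the support of $\enc(m)$, and by completeness $\dec(c_m) = m$ with probability $1$ over the decoder's randomness (otherwise $\dec(\enc(m)) = m$ would fail with positive probability). Now pick any two distinct messages $m_0 \neq m_1 \in \mathcal{M}$ (possible since $k \geq 1$), and consider the constant function $g_{c_{m_0}} \in \advcls$. Feed it the encoding of $m_1$: the adversary produces $\hat c = g_{c_{m_0}}(\enc(m_1)) = c_{m_0}$, regardless of the internal randomness of the encoder. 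So the decoder receives exactly $c_{m_0}$, and by the completeness observation it outputs $m_0$ with probability $1$. But relaxed security against $g_{c_{m_0}}$ applied to message $m_1$ requires $\dec(\hat c) \in \{\bot, m_1\}$ with probability at least $1 - \varepsilon$; since $m_0 \notin \{\bot, m_1\}$, this fails with probability $1$. Hence $\varepsilon = 1$, so for any meaningful (say $\varepsilon < 1$) relaxed tamper detection the scheme cannot exist, which is the claim.

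I would also handle the mild subtlety that $\enc$ and $\dec$ may be randomized: the key point is that completeness ($\Pr[\dec(\enc(m)) = m] = 1$) means for \emph{every} codeword $c$ in the support of $\enc(m)$ and \emph{every} setting of the decoder's randomness that occurs with positive probability, $\dec(c) = m$; and since the adversary's output on $\enc(m_1)$ under $g_{c_{m_0}}$ is the \emph{fixed} value $c_{m_0}$, which lies in the support of $\enc(m_0)$, the decoder is forced to output $m_0$. This makes the randomness irrelevant. The main (and only) obstacle is bookkeeping around the randomized definitions — making sure one correctly quantifies "probability $1$ over internal randomness" so that a codeword guaranteed-decoded to $m_0$ cannot also be safely decoded when the intended message was $m_1$ — but this is routine, and the lemma is really just a clean restatement of the constant-function obstruction already described informally earlier in the paper.
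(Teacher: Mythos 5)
Your proposal is correct and takes essentially the same route as the paper's proof: both instantiate $\advcls$ as the family of constant functions, and both argue that because some $c$ lies in the support of $\enc(m_0)$, completeness forces $\dec(c)=m_0$, so the constant tamperer $f_c$ applied to $\enc(m_1)$ makes the decoder output $m_0 \notin\{\bot,m_1\}$. If anything, your use of completeness is slightly sharper — you correctly derive that $\dec(c_{m_0})=m_0$ with probability $1$ (hence $\varepsilon=1$ failure), whereas the paper's writeup only claims "positive probability"; the underlying argument is the same.
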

\begin{proof}
$\mathcal F_{\mathrm{const}}=\{f_y:x\mapsto y\mid y\in\{0,1\}^n\}$ be the family of all constant functions on $n$ bits.
Assume for contradiction that $(\enc,\dec)$ is an $\varepsilon$-relaxed tamper-detection code against $\mathcal F_{\mathrm{const}}$.
The set of possible $n$-bit strings has size $2^n$, hence there exists some $c\in\{0,1\}^n$ and a message $m_0\in\mathcal M$ with $\Pr[\enc(m_0)=c]>0$.
Consider the constant tampering function $f_c\in\mathcal F_{\mathrm{const}}$.
For any $m_1\in\mathcal M$ with $m_1\neq m_0$ we have
\[
\dec(f_c(\enc(m_1)))=\dec(c),
\]
which equals $m_0$ with positive probability (when $\enc(m_0)=c$).
Thus for such $m_1$ the decoder outputs a wrong message (neither $\perp$ nor $m_1$) with non-negligible probability,
violating the relaxed tamper-detection requirement.
Therefore no relaxed tamper-detection scheme against $\mathcal F_{\mathrm{const}}$ exists.
\end{proof}

The above lemma says that there exists no classical relaxed tamper detection scheme that prevents against $\mathcal F_{\mathrm{const}}$.  However, it is not hard to see that one can use unitary schemes to protect against $\mathcal F_{\mathrm{const}}$.

\begin{proposition} \label{prop:unitary_against_constant}
Let $\mathcal F_{\mathrm{const}}=\{f_y:x\mapsto y\mid y\in\{0,1\}^n\}$ be the family of all classical constant functions on $n$ bits.
There exists a unitary encoding--decoding pair $(\enc,\dec)$ acting on $n$ qubits such that, for every $m\in\mathcal M$ and every $y\in\{0,1\}^n$,
\[
\Pr\big[\dec\circ f_y\circ\enc(m)\in\{\perp,m\}\big]\ge 1-\operatorname{negl}(k),
\]
provided $n\ge 3k$. In other words, the scheme is $\varepsilon$-relaxed-tamper-secure against $\mathcal F_{\mathrm{const}}$ with $\varepsilon=\operatorname{negl}(k)$.
\end{proposition}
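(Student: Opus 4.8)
The plan is to instantiate the Haar random scheme $\mathrm{HR}(2^n,k)$ of \Cref{def:Haar_random_scheme_family} and to show that, when $n\ge 3k$, it is $\negl(k)$-relaxed-tamper-secure against $\mathcal F_{\mathrm{const}}$ with overwhelming probability over the choice of $U$; a fixed good $U$ then exists. Write $d=2^n$, and note that a constant function $f_y$ acts on quantum states as the replacement channel $\rho\mapsto\ketbra{y}{y}$ (indeed this is its unique \CPTP extension). The key point is that, regardless of which message $m$ was encoded, the state reaching the decoder after tampering by $f_y$ is the \emph{fixed} computational-basis state $\ketbra{y}{y}$, independent of $m$. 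Writing $\Pi_0$ for the projector onto $\mathrm{span}\{\ket{m}^A\ket{0}^B : m\in\{0,1\}^k\}$ and $\Pi_{\mathcal C}=U\Pi_0 U^*$ for the code projector, the \PVM description of $\dec_U$ shows it outputs $\perp$ with probability $1-\bra{y}\Pi_{\mathcal C}\ket{y}$ and outputs a message $m'$ with probability $|\braket{\psi_{m'}}{y}|^2$ (where $\ket{\psi_{m'}}=\enc_U(m')$), so
\[
\Pr\big[\dec_U\circ f_y\circ\enc_U(m)\in\{\perp,m\}\big]
\;=\;1-\sum_{m'\ne m}|\braket{\psi_{m'}}{y}|^2
\;\ge\;1-\bra{y}\Pi_{\mathcal C}\ket{y}.
\]
Hence the proposition reduces to finding a $U$ whose $2^k$-dimensional code subspace is \emph{incoherent} with the computational basis, i.e. $\max_{y\in\{0,1\}^n}\bra{y}\Pi_{\mathcal C}\ket{y}\le\negl(k)$.

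Next I would establish this incoherence by concentration of measure. Since $\bra{y}\Pi_{\mathcal C}\ket{y}=\|\Pi_0 U^*\ket{y}\|^2$ and $U^*\ket{y}$ is a uniformly random unit vector of $\CC^d$ (a column of a Haar unitary), this quantity has expectation $2^k/d=2^{-(n-k)}\le 2^{-2k}$ under $n\ge 3k$. The function $z\mapsto\|\Pi_0 z\|$ is $1$-Lipschitz on the unit sphere, so L\'evy's lemma gives, for each fixed $y$, that $\bra{y}\Pi_{\mathcal C}\ket{y}\le 2^{-2k+2}$ except with probability at most $2\exp\big(-\Omega(d\cdot 2^{-2k})\big)=2\exp\big(-\Omega(2^{\,n-2k})\big)$. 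A union bound over the $d=2^n$ basis vectors $\ket{y}$ bounds the total failure probability by $2^{n+1}\exp(-\Omega(2^{\,n-2k}))$; since $n\ge 3k$ forces $2^{\,n-2k}\ge 2^k$, this is $\exp(-\Omega(2^k))=\negl(k)$. Thus with overwhelming probability over $U$ every basis overlap is $\le 2^{-2k+2}$, and the displayed inequality then gives $\varepsilon=2^{-2k+2}=\negl(k)$ for all $m,y$; in particular some fixed $U$ works, with expansion $\gamma=(n-k)/k=\mathcal O(1)$.

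This argument is largely routine once set up; the step I would watch most carefully is the interaction between the $2^n$-fold union bound and the concentration rate, since incoherence survives precisely when $2^{\,n-2k}$ grows superlogarithmically in $k$ (roughly $n\ge 2k+\omega(\log k)$), which is exactly where $n\ge 3k$ is used. Two remarks are worth recording. First, the proof delivers more than relaxed security: a Haar encoding outright \emph{detects} every constant tampering. Second, it isolates the genuinely quantum ingredient, in contrast with \Cref{lem:Const_against_classical_rtd}: classically the codeword set is a subset of $\{0,1\}^n$, so some codeword must coincide with the output of some constant function, whereas a generic $2^k$-dimensional subspace of $\CC^{2^n}$ can be kept far from all $2^n$ computational basis states. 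As alternatives to L\'evy's lemma, one could run the moment/Weingarten estimates of \Cref{sec:tamper_detection} (the relevant variable is a sum of squared magnitudes of unitary entries) or aim for an explicit $U$ via a mutually-unbiased-type construction; the concentration route is the shortest.
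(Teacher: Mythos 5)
Your argument is correct, but it takes a genuinely different route from the paper. The paper proves this proposition with an \emph{explicit} and efficient construction: it takes $\enc(m)=H^{\otimes n}\big(\ket{m}\ket{0^{\,n-k}}\big)$, observes that the tampered state $\ket{y}$ overlaps every codeword uniformly, and bounds the probability of outputting any $m'$ by $\big|\bra{0^{\,n-k}}H^{\otimes(n-k)}\ket{y_{k+1}\cdots y_n}\big|^2=2^{-(n-k)}$, so that a union bound over $m'\neq m$ gives $\varepsilon\le 2^{2k-n}\le 2^{-k}$ when $n\ge 3k$. You instead instantiate the Haar-random scheme, reduce the bad event to the basis-incoherence quantity $\max_y\bra{y}\Pi_{\mathcal C}\ket{y}$, and control it with L\'evy's lemma plus a union bound over the $2^n$ basis states; this is an existence argument, and in fact it is essentially the method the paper later uses for the more general Theorem~\ref{thm:tamper_dection_against_any_constants} on replacement channels, specialized to computational-basis replacements. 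The two approaches buy different things: the paper's Hadamard construction is explicit, efficient, and slightly tighter (the exact overlap is $2^{-n}$ per $m'$, giving $\varepsilon\le 2^{k-n}$ if one keeps both factors), whereas your probabilistic proof plugs directly into the paper's general machinery and generalizes with no extra effort to arbitrary replacement states. Both arguments also show that the stronger (non-relaxed) tamper-detection guarantee holds for this family, which the paper remarks after the proposition. One small side note: your parenthetical ``indeed this is its unique CPTP extension'' is in fact correct and worth justifying if used—positive semi-definiteness of the Choi matrix together with trace preservation forces any CPTP extension of a constant function to be the replacement channel—but since the paper simply \emph{models} $f_y$ as the replacement channel, nothing in either proof hinges on uniqueness.
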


\begin{proof}
    Model a classical constant tampering $f_y$ as the quantum channel that replaces any $n$-qubit state by the computational-basis state $\ket{y}$ (i.e.\ the CPTP map $\rho\mapsto\ket{y}\!\bra{y}$).
    Let the encoder be the unitary that maps a basis message $\ket{m}\ket{0^{\,n-k}}$ to the Hadamard-coded state
    \[ 
    \enc(m) = \ket{\psi_m}:=H^{\otimes n}\big(\ket{m}\ket{0^{\,n-k}}\big).
    \]
    The decoder, as usual, first inverts the encoder and then measures in the computational basis.
    \begin{itemize}
      \item outputs the message $m'$ if the measured string equals $(m',0^{\,n-k})$ (i.e. the last $n-k$ bits are all zero)
      \item otherwise outputs $\perp$ (abort).
    \end{itemize}
    Now the probability that decoder outputs $m' \in \lbrace 0,1 \rbrace^k$ is
    \begin{align*}
        \Pr\big[\dec\circ f_y\circ\enc(m)= m'\big]  & = \Big\vert \bra{m'}\bra{0^{n-k}}  H^{\otimes n}\ket{y}  \Big\vert^2 \\
        & \leq \Big\vert \bra{0^{n-k}}  H^{\otimes {n-k}} \ket{y^{n}_{n-k+1}}  \Big\vert^2
    \end{align*}
    where ${y^{n}_{n-k+1}}$ are the last $n-k$ bits of $y$.
    One can check that for any string $y \in \{ 0,1 \}^n$, the above quantity is $2^{k-n}$.
    Thus, 
    \begin{align*}
        \Pr\big[\dec\circ f_y\circ\enc(m)\in\{\perp,m\}\big] &= 1 -  \Pr\big[\dec\circ f_y\circ\enc(m)= m' \text{ for some $m' \not\in \{m, \perp\}$ }\big] \\
        & \geq \sum_{m \neq m', \perp} \Pr\big[\dec\circ f_y\circ\enc(m)= m' \big] \\
        &\geq 1- \sum_{m \neq m', \perp} 2^{k-n} \\
        & \geq 1-  2^{2k-n},
    \end{align*}
    as claimed.
\end{proof}

This motivates the following conjecture.
\begin{conjecture} [Universal Quantum Tamper Detection]\label{con:tamper_dection_against_any_poly}
Let $\mathcal{M} = \lbrace 0,1 \rbrace^k$ and 
    let $\advcls$ be any family of CPTP operators of size at most $2^{2^{\alpha n}}$, for some constant $\alpha <\frac{1}{2}$. 
    Then, there exists an $\varepsilon$-relaxed tamper detection code against ${\advcls}$ with $\varepsilon = \operatorname{negl}(k)$ and $\gamma = \mathcal{O}(1)$. 
\end{conjecture}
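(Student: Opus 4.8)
The plan is to run the Haar-random scheme $\mathrm{HR}(d,k)$ of \Cref{def:Haar_random_scheme_family} (with $d=2^n$) and to exploit that, in the \emph{relaxed} setting, only \emph{off-diagonal} codeword overlaps enter. Writing $\ket{\psi_m}:=U\ket{m}$ and keeping the random variables $X_{st}=\bra{\psi_t}\Phi(\ketbra{\psi_s}{\psi_s})\ket{\psi_t}$ from the proof of \Cref{thm:tamper_detection_code}, for every $m\in\{0,1\}^k$ and every $\Phi\in\advcls$ one has $\Pr[\dec_U\circ\Phi\circ\enc_U(m)\notin\{m,\perp\}]=\sum_{t\neq m}X_{mt}$, since the relaxed decoder errs exactly when it returns some $t\neq m$. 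The crucial point is that the diagonal term $X_{mm}$ — whose control in \Cref{thm:tamper_detection_code} forced the entanglement-fidelity hypothesis — disappears, so no assumption on $F_e(\advcls)$ is needed, and (as we will see) none on $\rank(\advcls)$ either. Since $\advcls$ is a \emph{fixed finite} family, no $\varepsilon$-net over channels is required: it suffices to bound $\Pr_U[\sum_{t\neq m}X_{mt}\ge\varepsilon]$ for each of the at most $2^k|\advcls|$ pairs $(m,\Phi)$ and union-bound; and as the $X_{mt}$ are nonnegative, $\{\sum_{t\neq m}X_{mt}\ge\varepsilon\}\subseteq\{\exists\,t\neq m:\,X_{mt}\ge\varepsilon 2^{-k}\}$, so everything reduces to a tail bound on a single $X_{mt}$ with $m\neq t$.

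The core step is a moment estimate for $X_{mt}$ ($m\neq t$) that, unlike \Cref{lem:nth_moment_upperbound}, does not blow up with the Kraus rank. Fix a Kraus representation $\Phi(\cdot)=\sum_{i=1}^r K_i(\cdot)K_i^*$ and expand $X_{mt}^N=\sum_{\vec i\in[r]^N}\prod_{j=1}^N|\bra{\psi_t}K_{i_j}\ket{\psi_m}|^2$ as in the proof of \Cref{lem:nth_moment_upperbound}. Applying \Cref{lem:weingarten_calculus_bound} and, since $m\neq t$, keeping only parity-alternating permutations $\pi$ (every cycle of which has even length $\ge 2$), the $N$-th moment is at most $\bigl(1+\mathcal O(\tfrac{N^2}{d})\bigr)d^{-2N}\sum_{\vec i}\sum_\pi\bigl|\Tr[V(\pi)^{-1}\bigotimes_j(K_{i_j}\otimes K_{i_j}^*)]\bigr|$. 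Now, instead of bounding each cycle trace through $\|K_i\|_\infty\le 1$ — which forces the fatal factor $\sum_{\vec i\in[r]^N}1=r^N$ — bound it by Hölder's inequality for Schatten norms: a cycle of length $2\ell$ gives $|\Tr[M_1\cdots M_{2\ell}]|\le\prod_a\|M_a\|_{2\ell}\le\prod_a\|M_a\|_2$, and the product over all cycles leaves exactly $\prod_{j=1}^N\|K_{i_j}\|_2^2$, \emph{independently of $\pi$}. Summing over the $(N!)^2$ parity-alternating permutations and then over $\vec i$ via the trace-preservation identity $\sum_i\|K_i\|_2^2=\Tr[\sum_i K_i^*K_i]=d$ yields
\[
\E_U\big[X_{mt}^N\big]\;\le\;\Bigl(1+\mathcal O\bigl(\tfrac{N^2}{d}\bigr)\Bigr)\,\frac{(N!)^2}{d^{N}},\qquad m\neq t,
\]
with no dependence on $r=\rank(\Phi)$; e.g.\ for a replacement channel $\rho\mapsto\sigma$ (where $r=d$) this is of the expected order $N!/d^N$ up to a harmless factor $N!$, matching the exact Haar computation underlying the replacement-channel case already established in the paper.

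Granting this estimate, Chernoff's bound finishes the argument: $\Pr_U[X_{mt}\ge\delta]\le(1+o(1))(N!)^2\delta^{-N}d^{-N}$, valid for $N=o(\sqrt d)$. Taking $\varepsilon=2^{-k}$, so $\delta=\varepsilon 2^{-k}=2^{-2k}$, and optimizing over $N$ gives an optimum at $N\asymp 2^{(n-2k)/2}=2^{(\gamma-2)k/2}$, far below $\sqrt d=2^{n/2}$; at this $N$ one gets $\Pr_U[X_{mt}\ge 2^{-2k}]\le 2^{-\Omega(2^{(\gamma-2)k/2})}$. Union-bounding over $m,t\in\{0,1\}^k$ and over the $|\advcls|\le 2^{2^{\alpha n}}$ channels, the failure probability is at most $2^{2k}\,2^{2^{\alpha n}}\,2^{-\Omega(2^{(\gamma-2)k/2})}$, which $\to 0$ as soon as $\tfrac{\gamma-2}{2}>\alpha\gamma$, i.e.\ $\gamma>\tfrac{2}{1-2\alpha}$. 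Since $\alpha<\tfrac12$ this is a finite constant, so $\gamma=\mathcal O(1)$ is admissible, a good $U$ exists, and $(\dec_U,\enc_U)$ is $\varepsilon$-relaxed-tamper-secure against $\advcls$ with $\varepsilon=\operatorname{negl}(k)$ and $\gamma=\mathcal O(1)$. The ceiling $\alpha<\tfrac12$ enters precisely here: it is exactly the condition for $2/(1-2\alpha)$ to be finite, equivalently for the needed moment order $2^{(\gamma-2)k/2}$ to stay $\ll\sqrt d$.

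The main obstacle is making the rank-free moment estimate fully rigorous, in two respects. First, \Cref{lem:weingarten_calculus_bound} is invoked on the non-self-adjoint operator $\bigotimes_j(K_{i_j}\otimes K_{i_j}^*)$, so one must argue — as is implicitly done in \Cref{lem:nth_moment_upperbound} — that only the real, nonnegative contribution survives the sandwiching by $(\bra t\bra m)^{\otimes N}$ and $(\ket m\ket t)^{\otimes N}$. Second, and more seriously, replacing the cycle-trace bound $d^{\#\pi}$ by the $\pi$-independent Schatten bound $\prod_j\|K_{i_j}\|_2^2$ is wasteful by roughly a factor $N!$; it is plausible that reaching the full range $\alpha<\tfrac12$ requires a finer accounting that tracks $d^{\#\pi}$ and the Schatten budget $\sum_i\|K_i\|_2^2=d$ simultaneously — in the spirit of the fixed-point bookkeeping in the $s=t$ branch of \Cref{lem:nth_moment_upperbound} — and that the crude bound above only yields a smaller ceiling on $\alpha$. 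The replacement-channel case, where $X_{mt}=\bra{\psi_t}\sigma\ket{\psi_t}$ concentrates around $2^{k-n}$ by a direct Haar computation, is the key evidence that this line of attack is sound, since ruling out replacement channels is exactly the quantum analogue of the classically-fatal family of constant functions from \Cref{lem:Const_against_classical_rtd}.
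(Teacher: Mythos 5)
This statement is labeled a \emph{conjecture}, and the paper explicitly says ``we could not prove this result and leave this as a conjecture''; so there is no proof in the paper to compare against. Your write-up is therefore evaluated on its own merits as a candidate proof of an acknowledged open problem, and you yourself flag two gaps. Let me first correct one of your self-assessments, then point out a route that likely closes the problem more cleanly than the Weingarten/H\"older machinery you propose.

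Your second worry — that the $N!$ slack in $\E_U[X_{mt}^N]\le (1+o(1))(N!)^2/d^N$ ``only yields a smaller ceiling on $\alpha$'' — is misplaced. Running the numbers: with $\delta=\varepsilon/2^k=2^{-2k}$, the optimal $N\asymp\sqrt{d\delta}$ gives $\Pr_U[X_{mt}\ge\delta]\lesssim e^{-2\sqrt{d\delta}}=e^{-2\cdot 2^{(\gamma-2)k/2}}$, and beating the union bound over $|\advcls|\le 2^{2^{\alpha\gamma k}}$ requires $\alpha\gamma<(\gamma-2)/2$, i.e.\ $\gamma>2/(1-2\alpha)$, which is finite precisely for $\alpha<\tfrac12$ — exactly the conjectured threshold. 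So your crude bound, \emph{if it were rigorous}, would already suffice; the real issue is the first one you raise, namely that Lemma~\ref{lem:weingarten_calculus_bound} is a CP-ordering of superoperators and so licenses $\Phi_{2n}(X)\preceq(1+o(1))\Psi_{2n}(X)$ only for $X\succeq 0$, whereas $\bigotimes_j(K_{i_j}\otimes K_{i_j}^*)$ is not even Hermitian. (The paper's own Lemma~\ref{lem:nth_moment_upperbound} has the same soft spot, so you are in good company, but it is not a step you can simply cite.)

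The cleaner observation is that for $m\neq t$ one can bypass Weingarten calculus entirely. Under Haar $U$, the pair $(\ket{\psi_m},\ket{\psi_t})=(U\ket{m,0},U\ket{t,0})$ is a uniformly random orthonormal $2$-frame. Condition on $\ket{\psi_m}=v$: then $\sigma_v:=\Phi(\ketbra{v}{v})$ is a \emph{fixed} density matrix, and $\ket{\psi_t}$ is uniform on the unit sphere of $v^\perp\cong\CC^{d-1}$. Writing $\tilde\sigma=\Pi_{v^\perp}\sigma_v\Pi_{v^\perp}\succeq 0$ (which has $\Tr\tilde\sigma\le1$), one has $X_{mt}=\bra{\psi_t}\tilde\sigma\ket{\psi_t}$, and the classical Dirichlet/Beta computation — essentially the same one underlying the paper's Lemma~\ref{lem:unit_sphere_scalar_probabilty} — gives
\[
\E_{\psi_t}\!\big[\bra{\psi_t}\tilde\sigma\ket{\psi_t}^N\big]
=\frac{N!}{(d-1)^{\bar N}}\,h_N(\mu)\;\le\;\frac{N!}{(d-1)^{\bar N}},
\]
where $\mu$ are the eigenvalues of $\tilde\sigma$ and $h_N$ is the complete homogeneous symmetric polynomial, bounded by $(\sum_j\mu_j)^N\le 1$. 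Averaging over $v$ gives $\E_U[X_{mt}^N]\le N!/(d-1)^{\bar N}$ for \emph{every} \CPTP map $\Phi$, with no Kraus-rank, entanglement-fidelity, or Weingarten hypotheses and with only one $N!$ rather than $(N!)^2$. Markov's inequality then yields $\Pr_U[X_{mt}\ge\delta]\le e^{-\Omega((d-1)\delta)}$, and the union bound over $m,t,\Phi$ succeeds as soon as $\gamma(1-\alpha)>2$, i.e.\ for every constant $\alpha<1$, subsuming the conjectured $\alpha<\tfrac12$. This is precisely the mechanism of Theorem~\ref{thm:tamper_dection_against_any_constants}, promoted from replacement channels to general channels by the conditioning step; I would suggest verifying this carefully (in particular the joint law of $(\psi_m,\psi_t)$ and the compression to $v^\perp$) before claiming the conjecture is settled, since the authors were evidently unable to close it themselves and may have a counter-consideration that is not visible from the text.
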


Note that both \Cref{lem:Const_against_classical_rtd} as well as \Cref{prop:unitary_against_constant} hold even if one replaces relaxed tamper detection by tamper detection. 
However, we chose to present in the relax tamper detection form as \Cref{con:tamper_dection_against_any_poly} is provably false if we replace relaxed tamper detection by tamper detection.

\subsection{Quantum advantage for tamper detection}
\label{sec:advantage-tamper}

In contrast to the classical case, we are able to establish the following result, which constitutes a special case of 
\Cref{con:tamper_dection_against_any_poly}.
For this section, we fix $\mathcal{M} = \lbrace 0,1 \rbrace^k$.

\begin{theorem}[Tamper detection against replacement channels]\label{thm:tamper_dection_against_any_constants}
    Let $\mathrm{States} = \lbrace \ket{\psi}\rbrace_\psi$ be  a finite collection of states.
    Let $\Phi_{\psi}$ be a \CPTP map such that $\Phi_{\psi}(\rho)= \ketbra{\psi}{\psi}$, for any $\rho$. 
    Let $\advcls(\mathrm{States}) = \lbrace \Phi_{\psi} :\ket{\psi} \in C\rbrace$.
    If $\vert \mathrm{States} \vert \leq 2^{2^{\alpha n}} =  2^{d^{\alpha}}$ (for some $\alpha <1$), then there exists a tamper detection code against $\advcls(\mathrm{States})$ with $\varepsilon = \operatorname{negl}(k)$ and $\gamma = \mathcal{O}(1)$.
\end{theorem}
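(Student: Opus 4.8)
\textbf{Why \Cref{thm:tamper_detection_code} does not apply, and the reduction.} The plan is to analyze the Haar random scheme $\mathrm{HR}(d,k)$ of \Cref{def:Haar_random_scheme_family} directly, because \Cref{thm:tamper_detection_code} is \emph{not} available here: a replacement channel $\Phi_\psi(\rho)=\ketbra{\psi}{\psi}\Tr[\rho]$ has Choi matrix $\ketbra{\psi}{\psi}\otimes\tfrac1d I_d$, hence minimum Kraus rank exactly $d$, which maximally violates the hypothesis $\rank(\advCPTP)\le d^{1-\delta}$. Its entanglement fidelity, on the other hand, is $F_e(\Phi_\psi)=\tfrac{1}{d^2}\sum_i|\braket{i}{\psi}|^2=\tfrac{1}{d^2}$, so replacement channels are extremely far from the identity; the only obstruction is the rank, and we bypass it by a direct computation. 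Fix a unitary $U$ and write $\ket{e_t}:=\ket{t}^A\otimes\ket{0}^B$ for $t\in\mathcal{M}$, so the codeword of $t$ is $\ket{\psi_t}=U\ket{e_t}$ and the decoder measures the \PVM $\{\ketbra{\psi_t}{\psi_t}\}_{t\in\mathcal{M}}$ together with $\Pi_\perp=I_d-\sum_t\ketbra{\psi_t}{\psi_t}$. Since $\Phi_\psi\circ\enc_U(m)=\ketbra{\psi}{\psi}$ regardless of $m$,
\[
  \Pr\big[\dec_U\circ\Phi_\psi\circ\enc_U(m)=\perp\big]
  = 1-\sum_{t\in\mathcal{M}}\big|\braket{\psi_t}{\psi}\big|^2
  = 1-S_\psi(U), \qquad S_\psi(U):=\bra{\psi}U\,\Pi_{\mathcal A}\,U^{*}\ket{\psi},
\]
where $\Pi_{\mathcal A}=\sum_{t\in\mathcal{M}}\ketbra{e_t}{e_t}$ is the fixed rank-$2^k$ projector onto $\mathcal A:=\operatorname{span}\{\ket{e_t}\}$. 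Hence $(\enc_U,\dec_U)$ is $\varepsilon$-tamper secure against $\advcls(\mathrm{States})$ precisely when $S_\psi(U)\le\varepsilon$ for every $\ket{\psi}\in\mathrm{States}$, and since $S_\psi(U)$ is independent of the message, no union bound over $\mathcal{M}$ is needed.

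\textbf{Concentration of the subspace overlap.} By invariance of the Haar measure, for fixed $\ket{\psi}$ the vector $\ket{\phi}:=U^{*}\ket{\psi}$ is a uniformly random unit vector in $\CC^d$, so $S_\psi(U)=\|\Pi_{\mathcal A}\ket{\phi}\|^2$ is the squared length of the projection of a Haar-random unit vector onto a fixed $2^k$-dimensional subspace. Its mean is $2^k/d$, and it concentrates sharply: realizing $\ket{\phi}=g/\|g\|_2$ with $g$ a standard complex Gaussian vector writes $S_\psi(U)=A/(A+B)$ with $A\sim\Gamma(2^k,1)$ and $B\sim\Gamma(d-2^k,1)$ independent, and standard concentration for sums of independent exponentials (equivalently, Levy's lemma for the $1$-Lipschitz map $\ket{\phi}\mapsto\|\Pi_{\mathcal A}\ket{\phi}\|$ on the sphere) yields, for an absolute constant $c>0$ and all $\varepsilon\ge 4\cdot 2^k/d$,
\[
  \Pr_U\!\big[S_\psi(U)\ge\varepsilon\big]\le 2\exp\!\big(-c\,d\,\varepsilon\big).
\]

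\textbf{Parameters and union bound.} Set $d=2^n$ with $n=\gamma k$ for a constant $\gamma$ fixed below, and take $\varepsilon=2^{-k}=\operatorname{negl}(k)$. Then $2^k/d=2^{-(\gamma-1)k}$, so the requirement $\varepsilon\ge 4\cdot 2^k/d$ holds for all large $k$ once $\gamma>2$, and $d\varepsilon=2^{(\gamma-1)k}$. A union bound over $\mathrm{States}$, using $|\mathrm{States}|\le 2^{d^\alpha}=2^{2^{\alpha n}}$, gives
\[
  \Pr_U\!\big[(\enc_U,\dec_U)\text{ not }\varepsilon\text{-tamper secure against }\advcls(\mathrm{States})\big]
  \le 2\exp\!\big(2^{\alpha\gamma k}\ln 2-c\,2^{(\gamma-1)k}\big).
\]
Choosing $\gamma>\max\{\,2,\ 1/(1-\alpha)\,\}$ — a constant since $\alpha<1$, hence $\gamma=\mathcal O(1)$ and the expansion factor is $\mathcal O(1)$ — gives $(\gamma-1)>\alpha\gamma$, so the exponent is $-\Omega\big(2^{(\gamma-1)k}\big)$ and the failure probability is $\operatorname{negl}(k)$. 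In particular a fixed $U$ exists for which $(\enc_U,\dec_U)$ is $\varepsilon$-tamper secure against $\advcls(\mathrm{States})$ with $\varepsilon=\operatorname{negl}(k)$ and $\gamma=\mathcal O(1)$, which is the claim.

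\textbf{Main obstacle.} The conceptual difficulty — and the reason \Cref{thm:tamper_detection_code} cannot be invoked — is that replacement channels are maximally high-rank, so the argument cannot route through the rank hypothesis at all; all the real work is the concentration bound of the second step, which has to be strong enough to survive the doubly-exponential union bound $2^{d^\alpha}$. This is comfortably achievable only because the target $\varepsilon=2^{-k}$ exceeds the mean overlap $2^k/d=2^{-(\gamma-1)k}$ by the exponential factor $2^{(\gamma-2)k}$, so the Gaussian deviation exponent $d\varepsilon=2^{(\gamma-1)k}$ dominates $d^\alpha=2^{\alpha\gamma k}$ once $\gamma$ is a sufficiently large constant. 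Finally, this is a genuinely quantum effect: the $2^k$-dimensional codeword subspace is embedded in the exponentially larger $\CC^d$, and a Haar-random such subspace is near-orthogonal to any fixed, not-too-large family of states, whereas classically the encoder must output strings in $\{0,1\}^n$ — of which there are only $d$ — so the adversary's $d$ constant maps cannot all be avoided (\Cref{lem:Const_against_classical_rtd}).
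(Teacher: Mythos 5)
Your proof is correct and reaches the paper's conclusion, but the probabilistic core is organized a bit differently. The paper fixes each output message $m'$ separately, observes that $\lvert\bra{m'}\bra{0}U\ket{\psi}\rvert^2$ is an exact $\operatorname{Beta}(1,d-1)$ variable (\Cref{lem:unit_sphere_scalar_probabilty}), and then union-bounds over both $m'$ and $\psi$ before summing the resulting tail $\lvert\mathcal{M}\rvert\,t$. You instead collapse the sum over $m'$ into the single subspace overlap $S_\psi(U)=\lVert\Pi_{\mathcal A}U^{*}\ket{\psi}\rVert^2$, identify it as a $\operatorname{Beta}(2^k,d-2^k)$ variable with mean $2^k/d$, and apply a one-shot concentration bound (Levy/Gamma tails) — so you never need a separate union bound over $\mathcal{M}$. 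Both routes work; the paper's gives exact, parameter-free tail formulas for each rank-one overlap, while yours is a slightly more unified argument that makes the "random $2^k$-dimensional codespace is nearly orthogonal to any fixed state" intuition literal. Your identification of why \Cref{thm:tamper_detection_code} does not apply (replacement channels have maximal Kraus rank $d$, violating the rank hypothesis even though $F_e=1/d^2$ is tiny) is a useful observation that the paper does not spell out. Two small notational slips worth fixing: the Choi matrix of $\Phi_\psi$ is $\ketbra{\psi}{\psi}\otimes I_d$ (no $1/d$ factor — the trace-preservation condition is $\Tr_D[J(\Phi)]=I_d$, not a density matrix normalization), and the justification that $\sqrt\varepsilon-\mathbb{E}\lVert\Pi_{\mathcal A}\phi\rVert\ge\sqrt{\varepsilon}/2$ under $\varepsilon\ge 4\cdot 2^k/d$ should be stated since $\mathbb{E}\lVert\Pi_{\mathcal A}\phi\rVert\le\sqrt{2^k/d}$ by Jensen; both are easily repaired and do not affect the conclusion.
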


We would like to remind the reader that classically, for relaxed tamper detection, the only bad family was that of constant functions. 
Replacement channels are generalizations of constant functions.
Theorem~\ref{thm:tamper_dection_against_any_constants} shows that such families are not bad when one considers quantum encodings.
Before going to the proof of Theorem~\ref{thm:tamper_dection_against_any_constants} we will need the following lemma.

\begin{lemma} \label{lem:unit_sphere_scalar_probabilty}
    Let $v$ and $w$ be two vectors in $\CC^d$. Then we have the following: for all $0 \leq t \leq \norm{v}^2 \norm{w}^2$,
    \begin{equation*}
        \Pr_U \Big[ |\langle v, U w \rangle|^2 \geq t \Big] = {\left( 1 - \frac{t}{\norm{v}^2 \norm{w}^2} \right)}^{d-1},
    \end{equation*}
    where the probability is over Haar random unitary $U \in \mathcal{U}_d(\CC)$.
\end{lemma}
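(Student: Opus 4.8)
The plan is to reduce the claim to the standard fact that a single (complex) coordinate of a Haar-random unit vector has a $\mathrm{Beta}(1,d-1)$-distributed squared modulus. First I would assume $v,w\neq 0$ (otherwise $\norm{v}^2\norm{w}^2=0$ forces $t=0$ and the statement is trivial), and then exploit homogeneity: setting $\hat v = v/\norm{v}$, $\hat w = w/\norm{w}$, and $s = t/(\norm{v}^2\norm{w}^2)\in[0,1]$, the event $\{|\langle v,Uw\rangle|^2\ge t\}$ is exactly $\{|\langle \hat v,U\hat w\rangle|^2\ge s\}$. So it suffices to prove $\Pr_U\big[|\langle \hat v,U\hat w\rangle|^2\ge s\big]=(1-s)^{d-1}$ for unit vectors $\hat v,\hat w$ and $s\in[0,1]$.

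Next I would invoke the left-invariance of the Haar measure: for a fixed unit vector $\hat w$, the vector $\psi \coloneqq U\hat w$ is uniformly distributed on the unit sphere of $\CC^d$. Applying invariance once more (e.g.\ replacing $U$ by $WU$ for a fixed unitary $W$ with $W^*\hat v = e_1$, which does not change the law of $U$), the random variable $|\langle \hat v,\psi\rangle|^2$ has the same distribution as $|\langle e_1,\psi\rangle|^2 = |\psi_1|^2$. Thus the whole statement reduces to: if $\psi$ is uniform on the complex unit sphere in $\CC^d$, then $\Pr[|\psi_1|^2\ge s]=(1-s)^{d-1}$.

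For this last step I would use the Gaussian model of the uniform sphere: let $g=(g_1,\dots,g_d)$ have i.i.d.\ standard complex Gaussian entries, so that $|g_1|^2,\dots,|g_d|^2$ are i.i.d.\ $\mathrm{Exp}(1)$ and $\psi \coloneqq g/\norm{g}_2$ is uniform on the complex unit sphere. Then
\[
|\psi_1|^2 \;=\; \frac{|g_1|^2}{|g_1|^2+\dots+|g_d|^2} \;=\; \frac{X}{X+Y},
\]
where $X=|g_1|^2\sim\mathrm{Gamma}(1)$ and $Y=|g_2|^2+\dots+|g_d|^2\sim\mathrm{Gamma}(d-1)$ are independent, so $|\psi_1|^2\sim\mathrm{Beta}(1,d-1)$. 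Since that distribution has density $(d-1)(1-x)^{d-2}$ on $[0,1]$, its survival function is $\int_s^1 (d-1)(1-x)^{d-2}\,dx=(1-s)^{d-1}$, which is precisely the asserted formula (and the endpoint $s=1$ correctly gives probability $0$, as $|\langle\hat v,\psi\rangle|^2=1$ only on a null set).

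I do not anticipate a genuine obstacle; the only subtlety worth flagging is the complex structure. Regarding $\psi$ as a real unit vector in $\mathbb{R}^{2d}$, the quantity $|\psi_1|^2$ is the sum of squares of \emph{two} of its real coordinates, which is why one gets $\mathrm{Beta}(1,d-1)$ with tail $(1-s)^{d-1}$, rather than the $\mathrm{Beta}(\tfrac12,d-\tfrac12)$ that a single real coordinate would produce. If one prefers to avoid named distributions, the identical conclusion follows by computing the normalized $(2d-1)$-dimensional surface measure of $\{\psi : |\psi_1|^2\ge s\}$ directly by slicing, but the Gaussian computation is the shortest route.
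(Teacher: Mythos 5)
Your proof is correct and follows essentially the same route as the paper: reduce by homogeneity to unit vectors, use Haar invariance to replace $|\langle \hat v, U\hat w\rangle|^2$ with $|\psi_1|^2$ for $\psi$ uniform on the complex sphere, realize $\psi$ via a complex Gaussian vector so that $|\psi_1|^2 = X/(X+Y)$ with $X\sim\mathrm{Gamma}(1)$, $Y\sim\mathrm{Gamma}(d-1)$, and conclude $\mathrm{Beta}(1,d-1)$ with tail $(1-s)^{d-1}$. The only cosmetic difference is that the paper passes through the Dirichlet distribution to extract the Beta marginal, whereas you quote the Gamma-ratio characterization of the Beta directly.
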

\begin{proof}
    If $\norm{v} = 0$ or $\norm{w} = 0$, the claim is immediate. Assume instead that $\norm{v}, \norm{w} > 0$. By Haar invariance, the vector 
    \begin{equation*}
        g \coloneqq \frac{U w}{\norm{w}},
    \end{equation*}
    is uniformly distributed over the unit sphere in $\CC^d$, and hence the distribution of $|\langle v, U w \rangle|$ depends only on $\norm{v}$ and $\norm{w}$. Defining
    \begin{equation*}
        u \coloneqq \frac{v}{\norm{v}},
    \end{equation*}
    we obtain
    \begin{equation*}
        |\langle v, U w \rangle|^2 = \norm{v}^2 \norm{w}^2 \cdot |\langle u, g \rangle|^2.
    \end{equation*}
    By unitary symmetry, the random variable $|\langle u, g \rangle|^2$ has the same distribution as $|g_1|^2$, where $g_1$ is the first coordinate of $g$ (take $u = e_1$, the first standard basis vector). Let $z = (z_1, \ldots, z_d)$ be a vector of i.i.d. standard complex normal variables, i.e., $z_i \sim \mathcal{CN}(0,1)$. Define
    \begin{equation*}
        X \coloneqq |z_1|^2.
    \end{equation*}
    Writing $z = x + i y$ with $x,y \sim \mathcal{N}(0,1/2)$ independent real normal variables, we have $X = x^2 + y^2$.
    Recall that if $z \sim \mathcal{N}(0,\sigma^2)$, then $z^2 \sim \operatorname{Gamma}(1/2, 2 \sigma^2)$.
    Thus $x^2, y^2 \sim \operatorname{Gamma}(1/2,1)$ independently. Therefore $X = x^2 + y^2 \sim \operatorname{Gamma}(1,1)$. Similarly, for
    \begin{equation*}
        Y \coloneqq \sum_{i=2}^d |z_i|^2,
    \end{equation*}
    we have $Y \sim \operatorname{Gamma}(d-1,1)$. Consequently,
    \begin{equation*}
        |\langle u, g \rangle|^2 = \frac{X}{X+Y}.
    \end{equation*}
    Recall that if $z_1, \ldots, z_n \sim \operatorname{Gamma}(\alpha_i, \theta)$ independently, then the normalized vector $(z_1/S, \ldots, z_n/S)$ with $S = \sum_i z_i$ follows the Dirichlet distribution $\operatorname{Dir}(\alpha_1, \ldots, \alpha_n)$. Applying this with $(X,Y)$ yields
    \begin{equation*}
        \left( \frac{X}{X+Y}, \frac{Y}{X+Y} \right) \sim \operatorname{Dir}(1, d-1).
    \end{equation*}
    Since the marginal of a Dirichlet distribution is Beta-distributed, namely if $(x_1,\ldots,x_n) \sim \operatorname{Dir}(\alpha_1,\ldots,\alpha_n)$ then $x_i \sim \operatorname{Beta}(\alpha_i, \sum_j \alpha_j - \alpha_i)$, we obtain
    \begin{equation*}
        \frac{X}{X+Y} \sim \operatorname{Beta}(1,d-1).
    \end{equation*}
    The density of $\operatorname{Beta}(\alpha,\beta)$ is given by
    \begin{equation*}
        f(x) = \frac{\Gamma(\alpha+\beta)}{\Gamma(\alpha)\Gamma(\beta)} x^{\alpha-1} (1-x)^{\beta - 1},
    \end{equation*}
    Using the identity $\tfrac{\Gamma(d)}{\Gamma(1)\Gamma(d-1)} = d-1$, it follows that
    \begin{equation*}
        \Pr_U \bigg[ \frac{X}{X+Y} \geq t \bigg] = (1-t)^{d-1}.
    \end{equation*}
    Finally, we conclude that
    \begin{equation*}
        \Pr_U \Big[ |\langle v, U w \rangle|^2 \geq t \Big] = {\left( 1 - \frac{t}{\norm{v}^2 \norm{w}^2} \right)}^{d-1}.
    \end{equation*}
\end{proof}

We are now in a position to establish the result stated in \Cref{thm:tamper_dection_against_any_constants}.

\begin{proof}[{Proof of \Cref{thm:tamper_dection_against_any_constants}}]
    For any state $\ket{\psi} \in \mathrm{States}$, it holds that
    \begin{equation*}
        \Pr \big[\dec \circ \Phi_\psi \circ \enc(m) = \perp \big] = 1 - \Pr \big[ \dec \circ \Phi_\psi \circ \enc(m) = m' \text{ for some $m'$ } \big].
    \end{equation*}
    Moreover, for any Haar random scheme $(\dec_U, \enc_U)$ and any unitary $U$, we have that $\Pr [ \dec_U \circ \Phi_\psi \circ \enc_U (m) = m']$ is equals to $| \bra{m'}\bra{0} U \ket{\psi} |^2$.
    Using the union bound on \Cref{lem:unit_sphere_scalar_probabilty} we obtain
    \begin{equation*}
        \Pr_U \Big[ \exists \psi, \exists m' \ : \ | \bra{m'}\bra{0} U \ket{\psi} |^2 \geq t \Big] \leq |\mathcal{M}|\,|\mathrm{States}| {\left( 1 - t^2 \right)}^{d-1}.
    \end{equation*}
    Let $s$ be such that $t = (\log |\mathcal{M}| + \log |\mathrm{States}| + s) / (d-1)$, then
    \begin{equation*}
        \Pr_U \Big[ \exists \psi, \exists m' \ : \ | \bra{m'}\bra{0} U \ket{\psi} |^2 \geq t \Big] \leq |\mathcal{M}|\,|\mathrm{States}| \exp(-t(d-1)) = \exp(\shortminus s)
    \end{equation*}
    So with probability at least $1 - \exp(\shortminus s)$ over unitaries $U$, for all $\psi$ and all $m'$ we have $| \bra{m'}\bra{0} U \ket{\psi} |^2 \leq t$. Fix such a $U$, then for any $\psi$
    \begin{equation*}
        \Pr \big[\dec_U \circ \Phi_\psi \circ \enc_U (m) \neq \perp \big] = \sum_{m'} | \bra{m'}\bra{0} U \ket{\psi} |^2 \leq |\mathcal{M}| t.
    \end{equation*}
    that is
    \begin{equation*}
        |\mathcal{M}| t = |\mathcal{M}| \frac{\log |\mathcal{M}| + \log |\mathrm{States}| + s}{d-1} = |\mathcal{M}| \frac{k + \log |\mathrm{States}| + s}{2^n-1}.
    \end{equation*}
    Choosing $s = k$, and $n = \gamma k$ for a constant $\gamma \geq  \frac{2}{1-\alpha}$, and with $|\mathrm{States}| \leq 2^{d^{\alpha}}$, this bound is negligible in $k$.
    Thus, there exists a unitary $U$ such that for all messages $m \in \mathcal{M}$ and all $\Phi \in \mathrm{States}$,
    \begin{equation*}
        \Pr \big[\dec_U \circ \Phi_\psi \circ \enc_U (m) = \perp \big] \geq 1 - \operatorname{negl}(k).
    \end{equation*}
\end{proof}

\section{Tamper detection for quantum messages 
}

\label{sec:quantum messages}

In this section, we expand the Haar random quantum encoding-decoding scheme of classical messages (\Cref{def:Haar_random_scheme_family}), to quantum messages:
\begin{equation*}
    \mathcal{M} \coloneqq \big\{ \psi \in \CC^k \,:\, \braket{\psi}{\psi} = 1 \big\}
\end{equation*}

\begin{definition}[$\delta$-net (Lemma II.4, \cite{HLSW04})]\label{epsnets}
Let $0<\delta<1$ and let $\dim\mathcal{H}=K$.  There exists a set $\mathcal{N}_\delta$ of pure states in $\mathcal{H}$ with $
|\mathcal{N}_\delta|\le\Big(\frac{5}{\delta}\Big)^{2K},
$
such that for every pure state $|\varphi\rangle\in\mathcal{H}$ there exists $|\tilde\varphi\rangle\in\mathcal{N}_\delta$ satisfying
$
\big\|\,|\varphi\rangle\langle\varphi|-|\tilde\varphi\rangle\langle\tilde\varphi|\,\big\|_{1}\le\delta.
$
We call any such set an \(\delta\)-net.
\end{definition}

The encoding of the message remains the same.
The decoding only measures the register $B$ to detect tampering.

\begin{definition}[Haar Random Scheme Family] \label{def:Haar_random_scheme_family (quantum messages)}
    Let $\mathrm{HR}(d,k)$ denote the \emph{Haar random quantum encoding–decoding scheme family} with parameters $d$ and $k$, where $k \leq \log(d)$.
    Let $\mathcal{M}$ be a $\delta$-net.
    Let $U \in \mathcal{U}_d(\CC)$ be a $d \times d$ unitary matrix sampled from the Haar measure over the unitary group, and define $(\dec_U, \enc_U)$ to be the \emph{Haar random quantum encoding–decoding scheme} associated with $U$  as follows:
    \begin{description}
        \item[Encoding:] The register $A$ will hold the message $m$ which will be padded on register $B$ with $\log(d)-k$ sized ancilla.
        \[ \enc_U(m) \coloneqq U \big( \ket{m}^A\otimes\ket{0}^{B} \big). \]
        Fix any orthonormal basis $\{a_i\}_i$ for $A$ then, $\mathcal{C}$:= \textrm{span} $\lbrace \enc_U(\ket{a_i}) : i \rbrace$ is called the code-space. Since $U$ is a unitary operator, the code-space is invariant under the choice of basis. 
        
        \item[Decoding:] On receiving the state $\rho$, the decoder first reverts the encoder using $U^{*}$, and then measures $B$ in the computational basis. 
        Let the measurement result be denoted as $b$.
        \begin{itemize}
            \item If $b \neq 0^{\vert B \vert}$, then output $\widehat{m} = \perp$, indicating that there was tampering.
            \item If $b =  0^{\vert B \vert}$, then $\mathsf{accept}$ and output register $A$ as $\widehat{m}$.
        \end{itemize}
       The decoder can alternatively be described as follows: measure the received codeword using the \PVM  $\{ \Pi_{\mathcal{C}}, \Pi_{\perp} = \id -  \Pi_{\mathcal{C}} \}$, where $ \Pi_{\mathcal{C}}$ is the orthogonal projector onto the code-space, and $\Pi_\perp$ projects onto the orthogonal complement of the code space.
    \end{description}

Again, it is straightforward to check that the scheme is \emph{complete}, and so we move on to the security. To establish the proof, we will restate and adapt the procedure outlined in \Cref{sec:tamper_detection} for classical messages.

\begin{theorem}[Tamper detection for quantum messages]
\label{thm:tamper_detection_quantum}
    Let $\mathcal{M} \coloneqq \{ \psi \in \CC^k \,:\, \braket{\psi}{\psi} = 1 \}$, and let $\advCPTP$ be an adversarial family of \CPTP maps on $\mathcal{B}(\CC^d)$ satisfying, for parameters $0\le 2\alpha<\delta<1$, the same three conditions as in \Cref{thm:tamper_detection_code}:
    \begin{itemize}
        \item The cardinality of the adversarial family is bounded as $ \vert \advCPTP \vert \leq 2^{d^{\alpha}}$.
        \item The maximal minimal Kraus rank within the family satisfies $\rank(\advCPTP) \leq d^{1-\delta}$.
        \item The maximal entanglement fidelity of the family satisfies $F_e(\advCPTP) \leq \frac{2}{d^{\delta/2}}$.
    \end{itemize}
    Fix any net radius $\eta \in (0,1)$ and let $\mathcal{N}_\eta \subseteq \mathcal{M}$ be an $\eta$-net as in \Cref{epsnets}. Then, 
     \begin{equation*}
         \Pr_U \big[ (\dec_U, \enc_U) \text{ is $(\varepsilon+\eta/2)$-tamper secure against $\advCPTP$} \big] \geq 1 - \operatorname{negl}(k),
     \end{equation*}
     with $\varepsilon = \operatorname{negl}(k)$ and $\gamma= \mathcal{O}(1)$. In particular, there exists a unitary $U \in \mathcal{U}_d(\CC)$ such that $(\dec_U, \enc_U)$ is $(\varepsilon+\eta/2)$-tamper secure against $\advCPTP$.
\end{theorem}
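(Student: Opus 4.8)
The plan is to reduce the quantum-message statement to the classical analysis of \Cref{thm:tamper_detection_code}, by combining an $\eta$-net over pure inputs with the right-invariance of the Haar measure. Throughout, fix a unitary $U$, write $\ket{\psi_t}\coloneqq U(\ket{t}\otimes\ket{0})$ for $t\in\{0,1\}^{k}$ so that $\Pi_{\mathcal C}=\sum_{t}\ketbra{\psi_t}{\psi_t}$, and for $\ket\varphi\in\mathcal M$ set $\ket{\psi_\varphi}\coloneqq\enc_U(\varphi)=U(\ket\varphi\otimes\ket0)\in\mathcal C$. The probability that the PVM decoder fails to output $\perp$ on input $\ket\varphi$ after tampering by $\Phi\in\advCPTP$ is the linear functional
\[
  p(\varphi;\Phi,U)\;\coloneqq\;\Tr\!\big[\Pi_{\mathcal C}\,\Phi(\ketbra{\psi_\varphi}{\psi_\varphi})\big]\;=\;\sum_{t\in\{0,1\}^{k}}\bra{\psi_t}\Phi(\ketbra{\psi_\varphi}{\psi_\varphi})\ket{\psi_t}.
\]
Completeness is immediate because $\ket{\psi_\varphi}\in\mathcal C$, and what remains is to show that with probability $1-\negl(k)$ over $U$ one has $p(\varphi;\Phi,U)\le\varepsilon+\eta/2$ simultaneously for every $\varphi\in\mathcal M$ and every $\Phi\in\advCPTP$.

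First I would dispose of the continuum of inputs via the net. Given $\ket\varphi\in\mathcal M$, pick $\ket{\tilde\varphi}\in\mathcal N_\eta$ with $\|\ketbra{\varphi}{\varphi}-\ketbra{\tilde\varphi}{\tilde\varphi}\|_1\le\eta$ (possible by \Cref{epsnets}); since $\enc_U$ is an isometry this survives encoding, so $\Delta\coloneqq\ketbra{\psi_\varphi}{\psi_\varphi}-\ketbra{\psi_{\tilde\varphi}}{\psi_{\tilde\varphi}}$ is a \emph{traceless} Hermitian operator with $\|\Delta\|_1\le\eta$. As $\Phi$ is completely positive and trace-preserving, $\Phi(\Delta)$ is again traceless Hermitian with $\|\Phi(\Delta)\|_1\le\|\Delta\|_1\le\eta$, and since $\Pi_{\mathcal C}$ is a projector, $\Tr[\Pi_{\mathcal C}\Phi(\Delta)]\le\tfrac12\|\Phi(\Delta)\|_1\le\eta/2$. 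Hence $p(\varphi;\Phi,U)\le p(\tilde\varphi;\Phi,U)+\eta/2$, so it suffices to control $p(\tilde\varphi;\Phi,U)$ uniformly over the \emph{finite} family $\mathcal N_\eta$, of size $|\mathcal N_\eta|\le(5/\eta)^{2\dim\mathcal M}$, and over $\advCPTP$.

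Next I would turn each fixed net element into an instance already solved. Fix $\tilde\varphi\in\mathcal N_\eta$, pick any unitary $W_A$ on the message register with $W_A\ket{0^{k}}=\ket{\tilde\varphi}$, and set $W\coloneqq W_A\otimes I_B$. Since $W_A$ acts only on the message register, the code-space $\mathcal C$—and therefore $\Pi_{\mathcal C}$—is unchanged under $U\mapsto UW$, while $UW(\ket{0^{k}}\otimes\ket0)=U(\ket{\tilde\varphi}\otimes\ket0)$; comparing with the definition of $X_{st}$ used in \Cref{lem:first_moment_upperbound,lem:nth_moment_upperbound,lem:deviation_bound} one sees that $p(\tilde\varphi;\Phi,U)$ equals $\sum_{t\in\{0,1\}^{k}}X_{0^{k}t}$ evaluated with the underlying unitary taken to be $UW$. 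By right-invariance of the Haar measure, $p(\tilde\varphi;\Phi,\cdot)$ therefore has the \emph{same law} as the classical random variable $\sum_{t}X_{0^{k}t}$. Feeding this into the tail estimate of \Cref{lem:deviation_bound} and the union over $t\in\{0,1\}^{k}$ carried out inside the proof of \Cref{thm:tamper_detection_code} bounds $\Pr_U[p(\tilde\varphi;\Phi,U)\ge\varepsilon]$ by exactly the per-message, per-channel quantity established there (which is uniform in the message, here $0^{k}$).

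Finally I would union-bound over $\tilde\varphi\in\mathcal N_\eta$ and $\Phi\in\advCPTP$: this only replaces the factor $|\mathcal M|=2^{k}$ of the classical argument by $|\mathcal N_\eta|\,|\advCPTP|\le(5/\eta)^{2\dim\mathcal M}\,2^{d^{\alpha}}$. Keeping the same choices $\theta=d^{\beta}$, $\theta_0=d^{\delta/2}$, $r\le d^{1-\delta}$, $\phi=\sqrt{2\theta_0/d}$, $\varepsilon=2^{-k}$ with $\alpha<\beta\le\delta/2$ as in \Cref{sec:tamper_detection_code_proof}, the dominant negative term $-d^{\beta}\varepsilon/2^{k}$ still overwhelms the enlarged logarithmic cost $\log|\mathcal N_\eta|+\log|\advCPTP|+\mathcal O(k)$ once the expansion factor $\gamma$ is taken a sufficiently large constant (slightly larger than in \Cref{thm:tamper_detection_code}, to absorb $\log|\mathcal N_\eta|$; for a fixed net radius $\eta$ this keeps $\gamma=\mathcal O(1)$). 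Thus the failure probability over $U$ is negligible in $k$; in particular a single $U$ works for all $\tilde\varphi\in\mathcal N_\eta$ and all $\Phi$ simultaneously, and by the first step $p(\varphi;\Phi,U)\le\varepsilon+\eta/2$ for all $\varphi\in\mathcal M$, which is the claim. I expect the net step to be the main obstacle: one must use crucially that $\enc_U$ is isometric (so the $\eta$-approximation is not distorted by encoding) and that a projector measures at most half the trace norm of a traceless perturbation (so off-net inputs contribute precisely the extra $\eta/2$). Once this is in place, the Haar-invariance reduction makes the remainder a near-verbatim replay of the classical proof, up to a benign inflation of the union bound.
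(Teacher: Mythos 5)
Your proof is correct, and it takes a genuinely different route from the paper on the central quantitative step; the net step at the end is the same.

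The paper's own argument re-derives moment and deviation bounds for the new random variable $Z_\varphi = \Tr[\Pi_{\mathcal{C}}\Phi(\ketbra{\varphi}{\varphi})]$ essentially from scratch: it redoes the Weingarten computation, finds that the estimates acquire an extra $\dim(A)^n$ factor (from summing over a basis $\{a_j\}$ of register $A$ inside the moment), reproves a deviation bound carrying this factor, and then chooses a different error parameter $\varepsilon = c(\eta)\dim(A)d^{-\delta/2}$ and $\beta=\delta/2$ to absorb the $\dim(A)$ contributions. Your right-invariance reduction sidesteps all of this: because the code-space projector $\Pi_{\mathcal{C}}$ depends on $U$ only through the image of the ancilla slice $\cdot\otimes\ket{0}$, and $W=W_A\otimes I_B$ with $W_A\ket{0^k}=\ket{\tilde\varphi}$ commutes with that structure, the law of $p(\tilde\varphi;\Phi,U)$ over Haar $U$ is \emph{identical} to the law of $\sum_t X_{0^k t}$ in the classical setup. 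So the classical per-message deviation bound (\Cref{lem:deviation_bound} plus the union over $t$ inside \Cref{sec:tamper_detection_code_proof}) applies verbatim, and the only new ingredient is the net union bound and the $\eta/2$-perturbation step, both of which you handle correctly. Your parameters therefore match the classical theorem's ($\varepsilon=2^{-k}$, $\alpha<\beta\le\delta/2$), rather than the paper's rescaled $\varepsilon$.

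Two small remarks. First, describing $\log|\mathcal N_\eta|\sim 2\cdot 2^k\log(5/\eta)$ as a "logarithmic cost" is a slight misnomer — it is $\Theta(2^k)$ — but your subsequent accounting ($d^\beta\varepsilon/2^k = 2^{(\gamma\beta-2)k}$ must exceed $2^{k+1}\log(5/\eta)$ and $d^\alpha$, so $\gamma\beta>3$ and $\gamma(\beta-\alpha)>2$, both attainable with $\gamma=\mathcal O(1)$) is correct. Second, a stylistic point: you might note explicitly that $\sum_t X_{st}$ is uniform in the message $s$ in the sense that its Haar law does not depend on $s$ (this too is right-invariance, replacing $U$ by $U(P_A\otimes I_B)$ for a permutation $P_A$ mapping $s\mapsto 0^k$), so fixing $s=0^k$ is without loss. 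Your appeal to "the per-message bound being uniform in $s$" implicitly uses this. Overall, a cleaner proof than the paper's: it treats the classical lemmas as black boxes rather than re-deriving them with a $\dim(A)$ factor threaded through.
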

\begin{proof}
    Fix any $\ket{v} \in \mathcal{M}$ and let $\ket{\varphi} = U \ket{v}^A \ket{0}^B$.
    Then, define
    \begin{equation*}
        Z_{\varphi} = \Tr \left[ \Pi_{\mathcal{C}} \Phi \left( \ketbra{\varphi}{\varphi} \right) \right].
    \end{equation*}
    Furthermore, fix any orthonormal basis $\{a_j\}_j$ for $A$ and define $\ket{\psi_{j}} = U \left( \ket{a_j} \ket{0}\right)$. Now,
    \begin{align*}
        Z_{\varphi} & =  \Tr \left[ \Pi_{\mathcal{C}} \Phi \left( \ketbra{\varphi}{\varphi} \right) \right] \\
        & =  \Tr \left[ \sum_{j} \ketbra{\psi_j}{\psi_j} \Phi \left( \ketbra{\varphi}{\varphi} \right) \right] \\
        & =  \Tr \left[ \sum_{j} \ketbra{\psi_j}{\psi_j} \sum^r_{i=1} K_i  \ketbra{\varphi}{\varphi}  K_i^*\right] \\
        & =  \sum_{i,j} \Tr \Big[  \ketbra{\psi_j}{\psi_j}  K_i  \ketbra{\varphi}{\varphi}   K_i^*\Big] \\
        & =  \sum_{i,j} \bra{\psi_j} K_i \ket{\varphi} \bra{\varphi} K_i^*\ket{\psi_j} \\
        & = \sum_{i,j} \bra{\psi_j} \bra{\varphi} K_i \otimes K_i^* \ket{\varphi} \ket{\psi_j}. 
    \end{align*}
    Raising it to the $n$-th power yields,
    \begin{align*}
        Z^n_{\varphi} & = \prod_{t=1}^n \sum_{i,j} \bra{\psi_j} \bra{\varphi} K_i \otimes K_i^* \ket{\varphi} \ket{\psi_j} \\
        & =  \sum_{i \in [r]^n} \sum_{j \in \dim(A)^n }\prod_{t=1}^n  \bra{\psi_{j_t}} \bra{\varphi} K_{i_t} \otimes K_{i_t}^* \ket{\varphi} \ket{\psi_{j_t}} \\
        & =  \sum_{i \in [r]^n} \sum_{j \in \dim(A)^n }  \Big( \tens{t=1}{n}\bra{\psi_{j_t}} \bra{\varphi} \Big) \Big(\tens{t=1}{n} (K_{i_t} \otimes K_{i_t}^*)\Big) \Big(\tens{t=1}{n} \ket{\varphi} \ket{\psi_{j_t}}\Big) \\
        & = \sum_{i \in [r]^n} \sum_{j \in \dim(A)^n} 
        \left( \bigotimes_{t=1}^{n} \bra{a_{j_t}} U^* \otimes \bra{v}\bra{0} U^* \right)
        \left( \bigotimes_{t=1}^{n} (K_{i_t} \otimes K_{i_t}^*) \right)
        \left( \bigotimes_{t=1}^{n} U \ket{v}\ket{0} \otimes U \ket{a_{j_t}} \right) \\
        & = \sum_{i \in [r]^n} \sum_{j \in \dim(A)^n} \Big[ \tens{t=1}{n}  \bra{a_{j_t}} \otimes \bra{v} \bra{0} \Big]  {U^{*}}^{\otimes 2n}  X(i) U^{\otimes 2n} \Big[ \tens{t=1}{n} \ket{v}\ket{0} \otimes \ket{a_{j_t}} \Big],
    \end{align*}
    where $X(i) = \big( K_{i_1} \otimes K_{i_1}^* \big) \otimes \cdots \otimes \big( K_{i_n} \otimes K_{i_n}^* \big)$.
    
    Let us first compute the moments $\mathbb{E}_U (Z_\varphi)^n$, similarly to the proof of \Cref{lem:nth_moment_upperbound} for classical messages. Using \Cref{lem:weingarten_calculus_bound} we have
    \begin{equation}\label{eq:moment-bound-start}
        \mathbb{E}_U\!\big[(Z_\varphi)^n\big]
        \le
        \Big(1+\mathcal{O}\big(\tfrac{4n^2}{d}\big)\Big)\,\frac{1}{d^{2n}}
        \sum_{i\in[r]^n} \sum_{j\in \dim(A)^n} \sum_{\pi\in\mathfrak S_{2n}}
        \Tr\!\big[V(\pi)^{-1} X(i)\big]\;
        S_{\pi}(\ket{v} \ket{0},j),
    \end{equation}
    with the scalars
    \begin{equation*}
        S_{\pi}(\ket{v} \ket{0},j)\coloneqq
        \Big(\bigotimes_{t=1}^n \bra{v} \bra{0}\otimes\bra{{a_{j_t}}}\Big)
        \, V(\pi) \,
        \Big(\bigotimes_{t=1}^n \ket{v} \ket{0}\otimes\ket{{a_{j_t}}}\Big).
    \end{equation*}
    Note that, $\vert S_{\pi}(\ket{v} \ket{0},j) \vert \leq 1$. So that we get
    \begin{equation*}
        \mathbb{E}_U\!\big[(Z_\varphi)^n\big]
        \le \sum_{j\in \dim(A)^n}
        \bigg[ \Big( 1 + \mathcal{O}\big(\tfrac{4n^2}{d}\big) \Big) \, \frac{1}{d^{2n}} \sum_{i\in[r]^n} \sum_{\pi\in\mathfrak S_{2n}} \Tr\!\big[V(\pi)^{-1} X(i)\big] \bigg],
    \end{equation*}
    where the bracket is similar to  \cref{eq:high_moment_s_eq_t_step} in the proof of \Cref{sec:tamper_detection}. Using the same arguments, we derive the bound
    \begin{align}
        \mathbb{E}_U\!\big[(Z_\varphi)^n\big]
        &\le \sum_{j\in \dim(A)^n} \bigg[ \Big( 1 + \mathcal{O}\big(\tfrac{4n^2}{d}\big) \Big) {\Big( \sqrt{r} \frac{\phi^2 d + 2n}{\phi d} \Big)}^{2n} \bigg] \notag \\
        &\le \Big( 1 + \mathcal{O}\big(\tfrac{4n^2}{d}\big) \Big) \dim(A)^n  {\Big( \sqrt{r} \frac{\phi^2 d + 2n}{\phi d} \Big)}^{2n}. \label{eq:moment-bound-end}
    \end{align}

    Now let us calculate the deviation bound $\Pr_U [Z_\varphi \geq \varepsilon] $, similarly to the proof of \Cref{lem:deviation_bound} for classical messages. Applying Markov’s inequality yields
    \begin{equation*}
        \Pr_U [Z_\varphi \geq \varepsilon] = \Pr_U [e^{\theta Z_\varphi} \geq e^{\theta \varepsilon}] \geq \frac{\mathbb{E}_U [e^{\theta Z_\varphi}]}{e^{\theta \varepsilon}}.
    \end{equation*}
    Therefore,
    \begin{align}
        \Pr [Z_\varphi \geq \varepsilon]
        &\leq  e^{-\theta \varepsilon} \mathbb{E}\left[ e^{\theta Z_\varphi} \right] \notag \\
        &\leq e^{-\theta \varepsilon} \sum_{i=0}^{\infty} \frac{\theta^i}{i!} \mathbb{E}[Z_\varphi^i] \notag \\
        &\leq e^{-\theta \varepsilon} \left[ \sum_{i \leq \theta_0} \frac{\theta^i}{i!} \mathbb{E}[Z_\varphi^i] + \sum_{i \geq \theta_0 + 1} \frac{\theta^i}{i!} \mathbb{E}[Z_\varphi^i] \right], \label{eq:deviation-bound-start}
    \end{align}
    where the bracket is similar to \cref{eq:deviation_bound_ss_case} in the proof of \Cref{sec:tamper_detection}. Using the same arguments, we derive the low and high-order moments bounds
    \begin{equation*}
        \sum_{i \leq \theta_0} \frac{\theta^i}{i!} \mathbb{E}[Z_{\varphi}^i] \leq \mathcal{O} \left( \exp\left( \dim(A) \frac{r \theta \left( \phi^2 d + 2\theta_0 \right)^2}{d^2 \phi^2} \right) \right),
    \end{equation*}
    and
    \begin{equation*}
        \sum_{i \geq \theta_0 + 1} \frac{\theta^i}{i!} \mathbb{E}\left[Z_{\varphi}^i\right] \leq \mathcal{O}\left( e^{\theta} \dim(A)^{\theta_0} \bigg( \sqrt{r} \frac{\phi^2 d + 2\theta_0}{\phi d} \bigg)^{2\theta_0} \right),
    \end{equation*}
    so we conclude the bound
    \begin{equation} \label{eq:deviation-bound-end}
        \Pr [Z_{\varphi} \geq \varepsilon] \leq e^{-\theta \varepsilon} \mathcal{O} \left( \exp\left( \dim(A) \frac{r \theta \left( \phi^2 d + 2\theta_0 \right)^2}{d^2 \phi^2} \right) +e^{\theta} \dim(A)^{\theta_0} \bigg( \sqrt{r} \frac{\phi^2 d + 2\theta_0}{\phi d} \bigg)^{2\theta_0}  \right).
    \end{equation}

    Finally let us prove, that for all messages $s \in \mathcal{M}$, and for all \CPTP map $\Phi \in \advCPTP$, the following holds:
    \begin{equation*}
        \Pr \left[ \dec_U \circ {\Phi} \circ \enc_U(s) = \perp \right] \geq 1 - \varepsilon,
    \end{equation*}
    where the probability is over the randomness of the measurement of $\dec_U$. For that, similarly to \Cref{sec:tamper_detection_code_proof}, we will bound the probability that the decoder fails to detect the tampering. We have
    \begin{equation*}
        \Pr_U \Big[ \Pr \big[ \dec_U \circ \Phi \circ \enc_U(s) \neq \perp \big] \geq \varepsilon \Big]
        = \Pr_U \Big[ Z_\varphi \geq \varepsilon \Big],
    \end{equation*}
    where the inner probabilities are taken over the inherent quantum random dues to quantum decoders (measurements). Using \cref{eq:deviation-bound-end}
    \begin{align*}
        &\Pr_U \Big[ \Pr \big[ \dec_U \circ \Phi \circ \enc_U(s) \neq \perp \big] \geq \varepsilon \Big] \\
        &\quad\leq e^{-\theta \varepsilon} \mathcal{O} \left( \exp\left( \dim(A) \frac{r \theta \left( \phi^2 d + 2\theta_0 \right)^2}{d^2 \phi^2} \right) +e^{\theta} \dim(A)^{\theta_0} \bigg( \sqrt{r} \frac{\phi^2 d + 2\theta_0}{\phi d} \bigg)^{2\theta_0} \right).
    \end{align*}
    Taking union bound over $s \in \mathcal{N}_\eta$ and $\Phi \in \advCPTP$, we obtain
    \begin{align*}
        &\Pr_U \Big[ \exists s, \Phi \ : \ \Pr \big[ \dec_U \circ \Phi \circ \enc_U(s) \neq \perp \big] \geq \varepsilon \Big] \\
        &\quad\leq \vert \mathcal{N}_\eta \vert \, \vert \advCPTP \vert \, e^{-\theta \varepsilon} \mathcal{O} \left( \exp\left( \dim(A) \frac{r \theta \left( \phi^2 d + 2\theta_0 \right)^2}{d^2 \phi^2} \right) +e^{\theta} \dim(A)^{\theta_0} \bigg( \sqrt{r} \frac{\phi^2 d + 2\theta_0}{\phi d} \bigg)^{2\theta_0} \right).
    \end{align*}
    Setting $r \leq d^{1-\delta}$ and $\phi = \sqrt{\frac{2 \theta_0}{d}}$, with $\delta \in [0,1)$ together with $\theta = d^{\beta}$ and $\theta_0 = d^{\delta/2}$, where $\beta$ is a constant such that $\alpha < \beta \leq \delta/2$, we obtain
    \begin{align*}
        &\Pr_U \Big[ \exists s, \Phi \ : \ \Pr \big[ \dec_U \circ \Phi \circ \enc_U(s) \neq \perp \big] \geq \varepsilon \Big] \\
        &\quad\leq \vert \mathcal{N}_\eta \vert \, \vert \advCPTP \vert \, e^{- d^\beta \varepsilon} \mathcal{O} \Bigg(  \exp\left( {8 \dim(A) d^{\beta - \delta/2 } } \right) +   e^{d^{\beta}} \bigg(\dim(A) \frac{8}{d^{\delta/2}} \bigg)^{d^{\delta/2}} \Bigg).
    \end{align*}
    Set $\beta = \frac{\delta}{2}$ and $\varepsilon = c(\eta) \dim(A) d^{-\delta/2}$, with $c(\eta) > 8 + 2 \log(5/\eta)$ then
    \begin{equation*}
        e^{- d^\beta \varepsilon} \exp\left( {8 \dim(A) d^{\beta - \delta/2}} \right) = \exp \big( {(8-c(\eta)) \dim(A)} \big),
    \end{equation*}
    and
    \begin{equation*}
        e^{d^{\beta}} {\left(\dim(A) \frac{8}{d^{\delta/2}} \right)}^{d^{\delta/2}} \leq {\left(\dim(A) \frac{22}{d^{\delta/2}} \right)}^{d^{\delta/2}}
    \end{equation*}
    Using $| \mathcal{N}_\eta | \leq (5/\eta)^{2 \dim(A)}$, $\vert \advCPTP \vert = \exp( d^\alpha )$ and $\dim(A) = 2^k$, we have
    \begin{align*}
        &\Pr_U \Big[ \exists s, \Phi \ : \ \Pr \big[ \dec_U \circ \Phi \circ \enc_U(s) \neq \perp \big] \geq \varepsilon \Big] \\
        &\quad\leq \exp\left( d^\alpha - 2^k ( c(\eta) - 8 - 2 \log \tfrac{5}{\eta} ) \right) + {\left(\dim(A) \frac{22}{d^{\delta/2}} \right)}^{d^{\delta/2}},
    \end{align*}
    which is negligible in $k$ provided $22 \dim(A) < d^{\delta/2}$.
    Thus with high probability, the bound holds for all $s \in \mathcal{N}_\eta$ and all $\Phi \in \advCPTP$. For arbitrary $s \in \mathcal{M}$, pick $\tilde{s} \in \mathcal{N}_\eta$ with $\norm[\big]{\ketbra{s}{s} - \ketbra{\tilde{s}}{\tilde{s}}}_1 \leq \eta$. By Helstrom’s bound
    \begin{equation*}
        \Tr \Big[ \Pi_{\mathcal{C}} \big( \Phi(\ketbra{s}{s}) - \phi(\ketbra{\tilde{s}}{\tilde{s}} ) \big) \Big] \leq \tfrac{1}{2} \norm[\big]{\Phi(\ketbra{s}{s}) - \phi(\ketbra{\tilde{s}}{\tilde{s}})}_1,
    \end{equation*}
    and by the contractivity of trace distance under \CPTP map,
    \begin{equation*}
        \norm[\big]{\Phi(\ketbra{s}{s}) - \phi(\ketbra{\tilde{s}}{\tilde{s}})}_1 \leq \norm[\big]{\ketbra{s}{s} - \ketbra{\tilde{s}}{\tilde{s}}}_1,
    \end{equation*}
    such that
    \begin{equation*}
        \big| Z_{\ketbra{s}{s}} - Z_{\ketbra{\tilde{s}}{\tilde{s}}} \big| \leq \frac{\eta}{2}.
    \end{equation*}
    Thus $Z_{\ketbra{s}{s}} \leq Z_{\ketbra{\tilde{s}}{\tilde{s}}} + \frac{\eta}{2} \leq \varepsilon + \frac{\eta}{2}$ for all $\Phi \in \advCPTP$. Hence the decoder rejects with probability at least $1 - (\varepsilon + \frac{\eta}{2})$ for every pure message, and by convexity for all quantum messages, completing the proof.
\end{proof}

\end{definition}

\section*{Acknowledgements}  We acknowledge the support of the Natural Sciences and Engineering Research Council of Canada (NSERC)(ALLRP-578455-2022), the Air Force Office of Scientific Research under award number FA9550-20-1-0375 and of the Canada Research Chairs Program.

\bibliographystyle{bibtex/bst/alphaarxiv.bst}
\bibliography{bibtex/bib/quasar-full.bib,
              bibtex/bib/quasar.bib,
              bibtex/bib/quasar-more-tamper-new.bib,
              bibtex/bib/quasar-more-merged.bib}

\appendix
\newpage

\section{The symmetric group} \label{app:symmetric_group}

Let $\mathfrak{S}_n$ denote the symmetric group on $n$ elements. Every element $\pi \in \mathfrak{S}_n$ admits a unique decomposition disjoint cycles, called the \emph{cycle decomposition} of $\pi$:
\begin{equation*}
    \pi = (c_1, \ldots, c_l).
\end{equation*}
The number of cycles in the cycle decomposition of a permutation $\pi$ is denoted $\#\pi$, and the set of fixed points of $\pi$, is denoted $\mathrm{Fix}(\pi)$.

\begin{lemma} \label{lem:fixed_points_bound}
    For all permutation $\pi \in \mathfrak{S}_n$ we have
    \begin{equation*}
        \mathrm{Fix}(\pi) \geq 2 \#\pi - 2n.
    \end{equation*}
\end{lemma}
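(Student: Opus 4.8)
The plan is a direct counting argument from the cycle decomposition, so there is essentially no obstacle here — I spell out the one line that does the work and flag only the bookkeeping. Write $\pi = (c_1,\dots,c_\ell)$ with $\ell = \#\pi$, and let $\mathrm{Fix}(\pi)$ denote the number of fixed points of $\pi$, i.e. the number of cycles $c_j$ of length $1$. The remaining $\#\pi - \mathrm{Fix}(\pi)$ cycles each have length at least $2$. Since the cycles of $\pi$ partition $\{1,\dots,n\}$, summing their lengths gives
\[
  n \;=\; \sum_{j=1}^{\ell} \lvert c_j\rvert \;\ge\; 1\cdot\mathrm{Fix}(\pi) \;+\; 2\bigl(\#\pi - \mathrm{Fix}(\pi)\bigr) \;=\; 2\,\#\pi - \mathrm{Fix}(\pi),
\]
and rearranging yields $\mathrm{Fix}(\pi) \ge 2\#\pi - n$. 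Since $n \ge 0$, this is in particular at least $2\#\pi - 2n$, which is the claimed inequality. In other words I would prove the slightly sharper bound $\mathrm{Fix}(\pi)\ge 2\#\pi - n$ and then weaken it; this costs nothing and keeps the logic transparent, and the weaker form stated is exactly what is quoted inside the proof of \Cref{lem:nth_moment_upperbound} (applied there with $\mathfrak{S}_{2n}$ in place of $\mathfrak{S}_n$).

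The only care needed is in the two bookkeeping steps: (i) separating the length-one cycles from the rest, so that the count $\#\pi - \mathrm{Fix}(\pi)$ of "long" cycles is correctly justified; and (ii) noting that each long cycle contributes at least $2$ to the total $\sum_j \lvert c_j\rvert = n$. Neither is a genuine difficulty, and I do not expect any part of this to be an obstacle — the whole proof is the displayed chain of inequalities plus $n \ge 0$.

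If one prefers to avoid the explicit split, an equally short alternative is induction on $n$ by peeling off a single cycle: for $\pi \in \mathfrak{S}_n$, removing a cycle of length $m$ (and reindexing) yields $\pi' \in \mathfrak{S}_{n-m}$ with $\#\pi' = \#\pi - 1$ and $\mathrm{Fix}(\pi') = \mathrm{Fix}(\pi) - [m=1]$; applying the inductive hypothesis to $\pi'$ and checking the two cases $m=1$ and $m\ge 2$ recovers the bound, with the empty permutation as the base case. I would present the counting argument above as the actual proof, since it is the cleaner of the two.
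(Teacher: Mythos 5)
Your proof is correct and uses essentially the same counting argument as the paper: both sum cycle lengths, use that non‑fixed‑point cycles have length at least $2$, and in fact establish the sharper bound $\mathrm{Fix}(\pi)\ge 2\#\pi - n$ before weakening to the stated $2\#\pi - 2n$ via $n\ge 0$. Your version is arguably a touch cleaner in that you make the weakening explicit, whereas the paper states the weaker inequality without remarking that its own rearrangement already gives the sharper one.
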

\begin{proof}
    Observe that each cycle of length at least $2$ in the cycle decomposition of $\pi$ involves at least $2$ elements, while contributing exactly one to the total number of cycles $\#\pi$. Since there are $n$ elements in total, the number of such non-trivial cycles is at most $n - \mathrm{Fix}(\pi)$. It follows that
    \begin{equation*}
        \#\pi \leq \mathrm{Fix}(\pi) + \frac{n - \mathrm{Fix}(\pi)}{2}.
    \end{equation*}
    Rearranging this inequality yields the claimed lower bound: $\mathrm{Fix}(\pi) \geq 2 \#\pi - 2n$.
\end{proof}

Let $\tilde{\mathfrak{S}}_{2n} \subseteq \mathfrak{S}_{2n}$ be the subset of the symmetric group on $2n$, defined by
\begin{equation*}
    \tilde{\mathfrak{S}}_{2n} \coloneqq \set[\big]{ \pi \in \mathfrak{S}_{2n} \;:\; i \oplus \pi(i) = 1}.
\end{equation*}
That is, the permutations that map even indices to odd indices, and off indices to even indices.

\begin{lemma} \label{lem:parity_alternating_permutation_property}
    The following equality on $\tilde{\mathfrak{S}}_{2n}$ holds:
    \begin{equation*}
        \sum_{\pi \in \tilde{\mathfrak{S}}_{2n}} x^{\#\pi} = n! \cdot x^{\bar{n}},
    \end{equation*}
    where $x^{\bar{n}}$ is the raising factorial $x (x + 1) \cdots (x + n - 1)$.
\end{lemma}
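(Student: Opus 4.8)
The plan is to realise $\tilde{\mathfrak{S}}_{2n}$ explicitly as pairs of permutations in $\mathfrak{S}_n$, to translate the cycle count of an element of $\tilde{\mathfrak{S}}_{2n}$ into the cycle count of a product of two such permutations, and then to reduce the identity to the classical generating function $\sum_k {n \brack k}\,x^k = x^{\bar n}$ for the unsigned Stirling numbers of the first kind (the same identity already invoked in the proof of \Cref{lem:nth_moment_upperbound}).

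First I would observe that any $\pi \in \tilde{\mathfrak{S}}_{2n}$ necessarily sends the odd indices $O = \{1,3,\dots,2n-1\}$ bijectively onto the even indices $E = \{2,4,\dots,2n\}$ and $E$ bijectively onto $O$, since $\pi$ flips parity and $|O| = |E| = n$; in particular $\pi$ has no fixed points and every cycle of $\pi$ has even length. Identifying $O$ and $E$ with $[n]$ in the order-preserving way, the restriction $\pi|_O$ becomes a permutation $\sigma \in \mathfrak{S}_n$ and $\pi|_E$ becomes a permutation $\tau \in \mathfrak{S}_n$, and the assignment $\pi \mapsto (\sigma,\tau)$ is a bijection $\tilde{\mathfrak{S}}_{2n} \xrightarrow{\ \sim\ } \mathfrak{S}_n \times \mathfrak{S}_n$.

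The key step is the identity $\#\pi = \#(\tau\sigma)$. I would prove it by noting that $\pi^2$ restricted to $O$ is, under the above identification, exactly the product $\tau\sigma$, and that a cycle of $\pi$ of length $2\ell$ contains precisely $\ell$ elements of $O$, which $\pi^2$ cyclically permutes among themselves — hence they form a single $\ell$-cycle of $\tau\sigma$. Since distinct cycles of $\pi$ use disjoint subsets of $O$ and together exhaust $O$, this yields a bijection between the cycles of $\pi$ and the cycles of $\tau\sigma$, and in particular the two cycle counts agree. This cycle-correspondence is the only genuinely combinatorial point, and I expect it to be the main (though modest) obstacle; everything else is bookkeeping.

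Finally I would sum over the bijection: for each fixed $\rho \in \mathfrak{S}_n$ there are exactly $n!$ pairs $(\sigma,\tau)$ with $\tau\sigma = \rho$ (namely $\tau = \rho\sigma^{-1}$ for arbitrary $\sigma$), so
\[
\sum_{\pi \in \tilde{\mathfrak{S}}_{2n}} x^{\#\pi}
\;=\; \sum_{(\sigma,\tau)\in\mathfrak{S}_n\times\mathfrak{S}_n} x^{\#(\tau\sigma)}
\;=\; n!\sum_{\rho\in\mathfrak{S}_n} x^{\#\rho}
\;=\; n!\sum_{k=1}^{n} {n \brack k}\,x^{k}
\;=\; n!\,x^{\bar n},
\]
which is the claim. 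As a consistency check, setting $x=1$ recovers $|\tilde{\mathfrak{S}}_{2n}| = (n!)^2$, matching the direct count $n!\cdot n!$ of parity-flipping permutations.
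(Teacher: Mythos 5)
Your proposal is correct and is essentially the same argument as the paper's: decompose $\pi \in \tilde{\mathfrak{S}}_{2n}$ into its two parity-restricted halves, observe that $\pi^2$ restricted to one parity class has the same cycle count as $\pi$ (each $2\ell$-cycle of $\pi$ collapsing to an $\ell$-cycle), and then count pairs with a given product to pull out the factor $n!$ before invoking $\sum_k \genfrac{[}{]}{0pt}{1}{n}{k}\,x^k = x^{\bar n}$. The paper phrases this by fixing the composite $\sigma = \beta\alpha$ and noting $n!$ free choices of $\alpha$, whereas you phrase it as a bijection $\tilde{\mathfrak{S}}_{2n}\cong\mathfrak{S}_n\times\mathfrak{S}_n$ followed by a fiber count over $\tau\sigma=\rho$; these are the same computation.
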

\begin{proof}
    Let the Even and Odd sets
    \begin{equation*}
        E \coloneqq \set{2, 4, \ldots, 2n} \qquad \text{and} \qquad 0 \coloneqq \set{1, 3, \ldots, 2n-1}.
    \end{equation*}
    Let $\pi \in \tilde{\mathfrak{S}}_{2n}$ and define the restriction of $\pi$ to $E$ and $O$ by
    \begin{equation*}
        \alpha \coloneqq \pi|_E: E \to O \qquad \text{and} \qquad \beta \coloneqq \pi|_O: O \to E.
    \end{equation*}
    Let $\sigma \in \mathfrak{S}(E) \simeq \mathfrak{S}_n$ be defined by
    \begin{equation*}
        \sigma(i) \coloneqq \beta \big( \alpha (i) \big).
    \end{equation*}
    Then the cycle in $\pi$ containing $i$ is
    \begin{equation*}
        \big( i, \alpha(i), \sigma(i), \alpha(\sigma(i)), \sigma(i)^2, \ldots \big).
    \end{equation*}
    Thus each cycle of $\sigma$ of length $l$ generates a cycle of $\pi$ of length $2l$, \ie $\#\pi = \#\sigma$. Since there are $n!$ choice for $\alpha$, we have
    \begin{equation*}
        \sum_{\pi \in \tilde{\mathfrak{S}}_{2n}} x^{\#\pi} = n! \sum_{\sigma \in \mathfrak{S}_{n}} x^{\#\sigma} = n! \sum_{k=0}^n \genfrac{[}{]}{0pt}{0}{n}{k} \cdot x^k = n! \cdot x^{\bar{n}},
    \end{equation*}
    where $\genfrac{[}{]}{0pt}{1}{n}{k}$ are the unsigned Stirling numbers of the first kind.
\end{proof}

The symmetric group $\mathfrak{S}_n$ acts naturally on the tensor power $(\CC^d)^{\otimes n}$ via the \emph{tensor representation} $V(\cdot)$. For each $\pi \in \mathfrak{S}_n$, we define the linear operator $V(\pi)$ on $(\CC^d)^{\otimes n}$ by its action on elementary tensors:
\begin{equation*}
    V(\pi)(v_1 \otimes \cdots \otimes v_n) \coloneqq v_{\pi^{\shortminus 1}(1)} \otimes \cdots \otimes v_{\pi^{\shortminus 1}(n)},
\end{equation*}
with linear extension to the entire space. This yields a unitary representation of $\mathfrak{S}n$ on $(\CC^d)^{\otimes n}$, where the action corresponds to permuting tensor factors.

A useful identity in quantum information theory is the \emph{swap trick}:
\begin{equation} \label{eq:swap_trick}
    \Tr \left[ (A \otimes B) F \right] = \Tr [ A \cdot B ],
\end{equation}
where $F \coloneqq V((1\,2)) $ is the \emph{swap operator}, defined by $F(x \otimes y) = y \otimes x$. This identity generalizes naturally to arbitrary permutations $\pi \in \mathfrak{S}_n$ with cycle decomposition $\pi = (c_1, \ldots, c_l)$: for operators $M_1, \dots, M_n$, and the representation $V(\pi)$, one has
\begin{equation} \label{eq:generalized_swap_trick}
    \Tr \left[ (M_1 \otimes \cdots \otimes M_n) V(\pi) \right] = \prod_{i = 1}^l \Tr \left[ M_{c_i(1)} \cdots M_{c_i(l_i)} \right],
\end{equation}
where $c_i(k)$ denotes the $k$-th element of the $i$-th cycle, and $l_i$ denotes its length.

\section{Weingarten calculus} \label{app:weingarten_calculus}

\emph{Weingarten calculus} provides a powerful framework for computing averages over the unitary group $\mathcal{U}_d(\mathbb{C})$ with respect to the Haar measure. A fundamental object in quantum information theory is the \emph{moment operator}, defined as
\begin{equation*}
    \E_U \left[ U^{\otimes k} X (U^*)^{\otimes k} \right],
\end{equation*}
where $X \in \mathcal{M}_{d^k}(\mathbb{C})$, and the expectation is taken over a Haar-distributed unitary matrix $U \in \mathcal{U}_d(\mathbb{C})$. This quantity induces a \emph{twirling channel} acting on the $k$-fold tensor power representation of the unitary group.

In order to compute such averages explicitly, it is often convenient to work at the level of individual matrix elements. The central object of interest is the integral over products of unitary matrix entries and their complex conjugates:
\begin{equation*}
    \int U_{i_1 j_1} \cdots U_{i_k j_k} \bar{U}_{i_1' j_1'} \cdots \bar{U}_{i_k' j_k'} \, dU.
\end{equation*}
\emph{Weingarten calculus} yields an exact expression for this integral in terms of a double sum over permutations $\sigma, \tau \in \mathfrak{S}_k$, the symmetric group on $k$ elements:
\begin{theorem*}[{Weingarten calculus \cite{CS06}}] \label{thm:Weingarten_calculus}
    Let $d, k$ be positive integers and $i, i', j, j'$ be four $k$-tuples of positive integers from $\{1 \ldots, d\}$. Then
    \begin{equation*}
        \int U_{i_1,j_1} \cdots U_{i_k,j_k} \bar U_{i'_1,j'_1} \cdots \bar U_{i'_k,j'_k} \operatorname{d}U = \sum_{\alpha, \beta \in \mathfrak{S}_k} \delta_{i_1,i'_{\alpha(1)}} \ldots \delta_{i_k,i'_{\alpha(k)}} \delta_{j_1,j'_{\beta(1)}} \ldots \delta_{j_k,j'_{\beta(k)}} \mathrm{Wg}_d (\alpha^{-1}\beta).
    \end{equation*}
    Where $\mathrm{Wg}_d$ denotes the \emph{unitary Weingarten function}.
\end{theorem*}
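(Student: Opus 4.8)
The plan is to recognise the integral on the left-hand side as a single matrix entry of the operator
\[
P \coloneqq \int_{\mathcal{U}_d(\CC)} U^{\otimes k} \otimes \bar{U}^{\otimes k}\, \mathrm{d}U
\]
acting on $(\CC^d)^{\otimes k}\otimes(\CC^d)^{\otimes k}$. Indeed $\bra{i_1\cdots i_k} U^{\otimes k}\ket{j_1\cdots j_k} = U_{i_1 j_1}\cdots U_{i_k j_k}$ and $\bra{i'_1\cdots i'_k} \bar{U}^{\otimes k}\ket{j'_1\cdots j'_k} = \bar U_{i'_1 j'_1}\cdots\bar U_{i'_k j'_k}$, so the product of scalars under the integral is exactly $\bra{i,i'}\, U^{\otimes k}\otimes\bar U^{\otimes k}\,\ket{j,j'}$, and hence the integral equals $\bra{i,i'} P\ket{j,j'}$.

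Next I would show that $P$ is the \emph{orthogonal} projector onto the subspace $\mathrm{Inv}$ of vectors fixed by the representation $\rho\colon U\mapsto U^{\otimes k}\otimes\bar{U}^{\otimes k}$. Left- and right-invariance of the Haar measure give $\rho(V)P=P=P\rho(V)$ for every $V\in\mathcal{U}_d(\CC)$, so $P^2=P$ and $\operatorname{Im}P\subseteq\mathrm{Inv}$; conversely $P$ fixes every vector of $\mathrm{Inv}$, so $\operatorname{Im}P=\mathrm{Inv}$; and $P^*=P$ since $\mathrm{d}U$ is invariant under $U\mapsto U^{*}$. Hence $P$ is the orthogonal projection onto $\mathrm{Inv}$.

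The key structural step is to determine $\mathrm{Inv}$ explicitly. Writing $\sigma\cdot a\coloneqq (a_{\sigma^{-1}(1)},\dots,a_{\sigma^{-1}(k)})$ for $\sigma\in\mathfrak{S}_k$ and $a\in[d]^k$, one checks directly that the permutation-twisted maximally entangled vectors $\ket{\Phi_\sigma}\coloneqq\sum_{a\in[d]^k}\ket{a}\otimes\ket{\sigma\cdot a}$ lie in $\mathrm{Inv}$; the content of the first fundamental theorem of invariant theory for $\mathcal{U}_d(\CC)$ (equivalently, Schur--Weyl duality via the identification $(\CC^d)^{\otimes k}\otimes(\CC^d)^{\otimes k}\cong\operatorname{End}((\CC^d)^{\otimes k})$) is that $\{\ket{\Phi_\sigma}:\sigma\in\mathfrak{S}_k\}$ spans all of $\mathrm{Inv}$. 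A short counting argument gives the Gram matrix: $\braket{\Phi_\alpha}{\Phi_\beta}$ is the number of tuples $a\in[d]^k$ that are constant along the cycles of $\beta^{-1}\alpha$, so $G_{\alpha\beta}=d^{\#(\alpha^{-1}\beta)}$ (using $\#(\beta^{-1}\alpha)=\#(\alpha^{-1}\beta)$). Expressing the orthogonal projector onto the span of the (when $d<k$, linearly dependent) family $\{\ket{\Phi_\sigma}\}$ through the Moore--Penrose pseudoinverse $G^{+}$ of $G$ yields
\[
P=\sum_{\alpha,\beta\in\mathfrak{S}_k}\ket{\Phi_\alpha}\,(G^{+})_{\alpha\beta}\,\bra{\Phi_\beta};
\]
since $G_{\alpha\beta}$ depends only on $\alpha^{-1}\beta$, so does $(G^{+})_{\alpha\beta}$, and one \emph{defines} the Weingarten function by $\mathrm{Wg}_d(\sigma)\coloneqq(G^{+})_{\mathrm{id},\sigma}$, so that $(G^{+})_{\alpha\beta}=\mathrm{Wg}_d(\alpha^{-1}\beta)$.

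Finally I would read off the matrix entry. Expanding $\braket{i,i'}{\Phi_\alpha}$ forces the summation index $a$ to equal $i$ and leaves $\prod_{t}\delta_{i'_t,\,i_{\alpha^{-1}(t)}}=\prod_{s}\delta_{i_s,\,i'_{\alpha(s)}}$ after the reindexing $t=\alpha(s)$, and likewise $\braket{\Phi_\beta}{j,j'}=\prod_{s}\delta_{j_s,\,j'_{\beta(s)}}$ (these quantities are $0$--$1$ valued, so conjugation is harmless). Substituting into $\bra{i,i'}P\ket{j,j'}$ produces precisely $\sum_{\alpha,\beta}\delta_{i_1,i'_{\alpha(1)}}\cdots\delta_{i_k,i'_{\alpha(k)}}\,\delta_{j_1,j'_{\beta(1)}}\cdots\delta_{j_k,j'_{\beta(k)}}\,\mathrm{Wg}_d(\alpha^{-1}\beta)$, which is the claimed identity. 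I expect the main obstacle to be the invariant-theory input of the third paragraph --- establishing that $\{\ket{\Phi_\sigma}\}$ spans \emph{all} of $\mathrm{Inv}$ and not merely a subspace --- since this is exactly where Schur--Weyl duality does the essential work (the dimension of $\mathrm{Inv}$ equals $\dim\operatorname{End}_{\mathcal{U}_d(\CC)}((\CC^d)^{\otimes k})=\sum_{\lambda\vdash k,\ \ell(\lambda)\le d}(f^{\lambda})^2$, which coincides with $\dim\operatorname{span}\{\ket{\Phi_\sigma}\}$), together with the care needed to handle the pseudoinverse and to verify it is still a class function when $d<k$. The remaining steps --- the entry identification, the Gram-matrix computation, and the index bookkeeping --- are routine.
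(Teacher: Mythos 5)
Your proposal is correct: the paper itself gives no proof of this theorem (it is quoted from the cited reference), and your argument — identifying the integral as a matrix entry of the Haar-average projector onto the invariants of $U^{\otimes k}\otimes\bar U^{\otimes k}$, invoking Schur--Weyl duality to see that the permutation vectors $\ket{\Phi_\sigma}$ span that invariant subspace, and writing the projector via the pseudoinverse of the Gram matrix $d^{\#(\alpha^{-1}\beta)}$ so that $\mathrm{Wg}_d$ appears as its entries — is precisely the standard Collins--\'Sniady derivation. The only point to keep explicit is the one you already flag: for $d<k$ the Gram matrix is singular, so the pseudoinverse (and the check that it remains a function of $\alpha^{-1}\beta$, which follows since $G$ commutes with the left regular representation) is genuinely needed.
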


Substituting this formula into the expansion of the moment operator $\E_U[U^{\otimes k} X (U^*)^{ \otimes k}]$, yields a decomposition of the twirling channel as a linear combination of permutation operators acting on $(\mathbb{C}^d)^{\otimes k}$, with coefficients determined by the Weingarten function.

Although Weingarten calculus offers rigorous method to the evaluation of Haar integrals, the resulting expressions are, in general, combinatorially involved. As a result, exact computations are typically tractable only for moments of low order. In the following, we provide explicit evaluations for the first two moments, followed by an approximation valid in the large-$d$ regime.

\begin{lemma}[First Moments of Weingarten Calculus {\cite[Cor. 13]{Mel24}}] \label{lem:Weingarten_calculus_first_moments}
    Let $M \in \mathcal{B}(\CC^d)$ and $N \in \mathcal{B}(\CC^d \otimes \CC^d)$. Then, the first and second moments of the Weingarten calculus over the unitary group are
    \begin{equation*}
        \E_U \big[ U M U^* \big] = \frac{\Tr[M]}{d} I_d \qquad \text{and} \qquad \E_U \big[ (U \otimes U) N (U^* \otimes U^*) \big] = c_I \cdot I_{d^2} + c_F \cdot F,
    \end{equation*}
    where $F$ denotes the flip (or swap) operator acting on ${(\mathbb{C}^{d})}^{\otimes 2}$. The coefficients $c_I$ and $c_F$ are explicitly given by
    \begin{equation*}
        c_I = \frac{\Tr[N] - \frac{1}{d} \Tr[N \cdot F]}{d^2 - 1} \qquad \text{and} \qquad c_F = \frac{\Tr[N \cdot F] - \frac{1}{d} \Tr[N]}{d^2 - 1}.
    \end{equation*}
\end{lemma}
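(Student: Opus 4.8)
The plan is to obtain both identities as the $k=1$ and $k=2$ instances of the Weingarten formula \cite{CS06}, for which only the trivial permutation of $\mathfrak{S}_1$ and the two permutations of $\mathfrak{S}_2$ appear, so the combinatorics stays fully explicit. For the first moment I would expand entrywise, $\big(\E_U[UMU^*]\big)_{ii'} = \sum_{j,j'} M_{jj'}\,\E_U[U_{ij}\bar U_{i'j'}]$, and apply the Weingarten formula with $k=1$: since $\mathfrak{S}_1$ is trivial and $\mathrm{Wg}_d(\mathrm{id})=1/d$, one gets $\E_U[U_{ij}\bar U_{i'j'}] = \delta_{ii'}\delta_{jj'}/d$, hence $\big(\E_U[UMU^*]\big)_{ii'} = \delta_{ii'}\Tr[M]/d$, i.e.\ $\E_U[UMU^*] = \tfrac{\Tr[M]}{d}I_d$. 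Equivalently, this average commutes with every unitary and is therefore a scalar multiple of $I_d$, the scalar being fixed by taking the trace; I would keep the entrywise version since it needs no representation theory.

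For the second moment, the cleanest route is to set $\Psi(N) := \E_U[(U\otimes U)N(U^*\otimes U^*)]$ and first note that Haar invariance gives $(V\otimes V)\Psi(N)(V^*\otimes V^*) = \Psi(N)$ for all $V\in\mathcal{U}_d(\CC)$; by Schur--Weyl duality the commutant of $\{V\otimes V\}$ on $(\CC^d)^{\otimes 2}$ is spanned by $I_{d^2}$ and the flip $F$, so $\Psi(N) = c_I\,I_{d^2} + c_F\,F$. To pin down the coefficients I would take the trace of this identity against $I_{d^2}$ and against $F$. Using cyclicity of the trace together with $[F,\,V\otimes V]=0$ one has $\Tr[\Psi(N)] = \Tr[N]$ and $\Tr[\Psi(N)F] = \Tr[NF]$, while $\Tr[I_{d^2}]=d^2$, $\Tr[F]=d$, $\Tr[F^2]=d^2$; this yields the linear system $\Tr[N] = c_I d^2 + c_F d$ and $\Tr[NF] = c_I d + c_F d^2$, whose solution (unique for $d\ge 2$) is exactly the stated $c_I$ and $c_F$. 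As an independent check, feeding $k=2$ into the Weingarten formula with $\mathrm{Wg}_d(\mathrm{id}) = \tfrac{1}{d^2-1}$ and $\mathrm{Wg}_d((1\,2)) = \tfrac{-1}{d(d^2-1)}$ and summing the four permutation terms recovers the same coefficients.

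I do not expect a real obstacle: the statement is essentially the low-order base cases of Weingarten calculus. The only points needing care are (i) invoking Schur--Weyl duality to conclude $\Psi(N)\in\mathrm{span}\{I_{d^2},F\}$, or, if one prefers to avoid it, verifying the $k=2$ Weingarten values directly; (ii) solving the $2\times2$ system correctly, keeping track of the denominators $d^2-1$; and (iii) observing that the degenerate case $d=1$ is trivial, since there $I_{d^2}=F$ and only the sum $c_I+c_F$ is meaningful. The bulk of the argument is bookkeeping rather than any genuine difficulty.
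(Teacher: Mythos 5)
Your argument is correct, and since the paper cites this lemma directly from an external reference (\cite[Cor. 13]{Mel24}) without giving its own proof, there is no internal proof to compare against. Your derivation is a standard and complete reconstruction: for $k=1$ the entrywise Weingarten expansion with $\mathrm{Wg}_d(\mathrm{id})=1/d$ gives $\E_U[UMU^*]=\tfrac{\Tr[M]}{d}I_d$; for $k=2$, the twirl $\Psi(N)$ is invariant under conjugation by $V\otimes V$, so by Schur--Weyl (or, equivalently, by directly applying the $k=2$ Weingarten formula with $\mathrm{Wg}_d(\mathrm{id})=\tfrac{1}{d^2-1}$ and $\mathrm{Wg}_d((1\,2))=\tfrac{-1}{d(d^2-1)}$) it lies in $\mathrm{span}\{I_{d^2},F\}$, and the coefficients are pinned down by tracing against $I_{d^2}$ and $F$. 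Your $2\times 2$ linear system $\Tr[N]=c_I d^2+c_F d$, $\Tr[NF]=c_I d + c_F d^2$ solves exactly to the stated $c_I$, $c_F$, and your caveat about $d=1$ (where $F=I_{d^2}$ and only $c_I+c_F$ is determined) is the right thing to flag. No gap.
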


\begin{lemma}[{\cite[Lem. 4.7]{HY24arxiv}, \cite[Lem. 1]{SHH24arxiv}}] \label{lem:weingarten_calculus_bound}
    Let $\Phi_k$ and $\Psi_k$ be two Hermitian maps from $\mathcal{M}_{d^k}(\mathbb{C})$ to $\mathcal{M}_{d^k}(\mathbb{C})$ defined by
    \begin{align*}
        \Phi_k (X) &\coloneqq \int_{\mathcal{U}_d(\mathbb{C})} U^{\otimes k} \;X\; {U^*}^{\otimes k} \:\operatorname{d}U \stackrel{\tiny\substack{\text{weingarten} \\ \text{calculus} \\[1ex]}}{=} \sum_{\pi,\sigma \in \mathfrak{S}_k} \operatorname{Wg}(\pi^{\shortminus 1} \sigma, d) \Tr \big[ V(\sigma)^{-1} X \big] \cdot V(\pi) \\[1em]
        \Psi_k (X) &\coloneqq \tfrac{1}{d^k} \sum_{\pi \in \mathfrak{S}_k} \Tr \big[ V(\pi)^{\shortminus 1} X \big] \cdot V(\pi),
    \end{align*}
    where $\mathfrak{S}_k$ denotes the symmetric group on $k$ elements, and $V(\pi)$ is defined as the tensor permutation of $(\CC^d)^{\otimes k}$ associated with $\pi \in \mathfrak{S}_k$. If $d > \sqrt{6}k^{\tfrac{7}{4}}$, then we have the inequalities
    \begin{equation*}
        \Big( 1 - \mathcal{O} \big( \tfrac{k^2}{d} \big) \Big) \Psi_k \preceq \Phi_k \preceq \Big( 1 + \mathcal{O} \big( \tfrac{k^2}{d} \big) \Big) \Psi_k.
    \end{equation*}
\end{lemma}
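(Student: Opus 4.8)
The plan is to view $\Phi_k$ and $\Psi_k$ as self-adjoint superoperators on $\mathcal{B}((\CC^d)^{\otimes k})$ carrying the Hilbert--Schmidt inner product $\langle A, B\rangle_{\mathrm{HS}} \coloneqq \Tr[A^\ast B]$, and to collapse the asserted operator inequality onto a single explicit spectral estimate for the Gram matrix of the permutation operators. First I would note that the Haar twirl $\Phi_k$ is idempotent (invariance of the Haar measure) and Hilbert--Schmidt self-adjoint (substitute $U \mapsto U^\ast$ inside the integral and use that $\Phi_k$ is a $\ast$-map), hence it is the $\langle\cdot,\cdot\rangle_{\mathrm{HS}}$-orthogonal projection onto its range. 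By Schur--Weyl duality this range is $\mathcal{A} \coloneqq \operatorname{span}\{V(\pi) : \pi \in \mathfrak{S}_k\}$, and since $d > k$ the family $\{V(\pi)\}_\pi$ is linearly independent with real symmetric Gram matrix $G_{\pi\sigma} = \Tr[V(\pi)^\ast V(\sigma)] = \Tr[V(\pi^{-1}\sigma)] = d^{\#(\pi^{-1}\sigma)}$. Since the orthogonal projection onto the span of a linearly independent family is $X \mapsto \sum_{\pi,\sigma}(G^{-1})_{\pi\sigma}\,\langle V(\sigma), X\rangle_{\mathrm{HS}}\, V(\pi)$, matching this against the Weingarten expansion given in the statement (using linear independence of the rank-one superoperators involved) re-derives $\operatorname{Wg}(\pi^{-1}\sigma, d) = (G^{-1})_{\pi\sigma}$ and, in particular, the invertibility of $G$.

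Next I would reduce to a bilinear-form comparison. For $X \in \mathcal{B}((\CC^d)^{\otimes k})$ put $c_\pi \coloneqq \langle V(\pi), X\rangle_{\mathrm{HS}} = \Tr[V(\pi)^{-1}X]$; a one-line computation gives $\langle X, \Phi_k(X)\rangle_{\mathrm{HS}} = c^\ast G^{-1} c$ and $\langle X, \Psi_k(X)\rangle_{\mathrm{HS}} = d^{-k}\norm{c}^2$, and since $G$ is invertible the assignment $X \mapsto c$ is onto $\CC^{k!}$ (restrict $X$ to $\mathcal{A}$). Hence the claimed operator inequality is \emph{equivalent} to the matrix inequality
\begin{equation*}
    \bigl(1 - \mathcal{O}(\tfrac{k^2}{d})\bigr)\,d^{-k}\,I \;\preceq\; G^{-1} \;\preceq\; \bigl(1 + \mathcal{O}(\tfrac{k^2}{d})\bigr)\,d^{-k}\,I,
\end{equation*}
which, for error below $1$, follows from $\norm{d^{-k}G - I}_\infty = \mathcal{O}(k^2/d)$ via the Neumann-series estimate $\norm{M^{-1} - I}_\infty \le \norm{M - I}_\infty/(1 - \norm{M - I}_\infty)$.

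The spectral estimate is then exact. Write $d^{-k}G = I + E$ with $E_{\pi\sigma} = d^{-|\pi^{-1}\sigma|}$ for $\pi \neq \sigma$, where $|\tau| \coloneqq k - \#\tau \ge 1$ for $\tau \neq e$, and $E_{\pi\pi} = 0$. The matrix $E$ is symmetric, entrywise nonnegative and has constant row sum, so by Perron--Frobenius $\norm{E}_\infty$ equals that row sum:
\begin{equation*}
    \norm{E}_\infty = \sum_{\tau \in \mathfrak{S}_k} d^{-|\tau|} - 1 = d^{-k}\!\sum_{\tau \in \mathfrak{S}_k} d^{\#\tau} - 1 = \prod_{i=0}^{k-1}\bigl(1 + \tfrac{i}{d}\bigr) - 1 \le e^{k(k-1)/(2d)} - 1,
\end{equation*}
where the penultimate equality uses the identity $\sum_{\tau \in \mathfrak{S}_k} x^{\#\tau} = x^{\bar{k}}$ (unsigned Stirling numbers of the first kind), already used in the main text. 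When $k^2/d$ is bounded this is $\mathcal{O}(k^2/d)$, closing the chain.

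I expect the main obstacle to be the parameter regime rather than the algebra: the bound $\prod_{i}(1+\tfrac{i}{d}) - 1$ is genuinely $o(1)$ only once $d = \Omega(k^2)$, so the weaker stated hypothesis $d > \sqrt{6}\,k^{7/4}$ is precisely what must be invoked to guarantee that $G$ is well enough conditioned for the Weingarten expansion, for the surjectivity of $X \mapsto c$, and for the Neumann-series step above to be valid with the stated $\mathcal{O}(k^2/d)$ bookkeeping; that quantitative conditioning input I would import from \cite{CS06} (or re-derive from the Weingarten recursion). The conceptual content, by contrast, is entirely in the first step: recognizing $\Phi_k$ as the Hilbert--Schmidt orthogonal projection reduces the whole lemma to the conditioning of a single explicit Gram matrix.
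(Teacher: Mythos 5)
You should note at the outset that the paper contains no proof of \Cref{lem:weingarten_calculus_bound}: it is imported verbatim from \cite{HY24arxiv,SHH24arxiv}, whose proofs run through quantitative estimates on the Weingarten function (that is where the threshold $d>\sqrt{6}\,k^{7/4}$ comes from). Your route is genuinely different and, in its core, correct: $\Phi_k$ is the Hilbert--Schmidt orthogonal projection onto $\operatorname{span}\{V(\pi)\}$, both quadratic forms depend on $X$ only through $c_\pi=\Tr[V(\pi)^{-1}X]$, and since $X\mapsto c$ is onto $\CC^{k!}$ the lemma is equivalent to $(1-\epsilon)d^{-k}I\preceq G^{-1}\preceq(1+\epsilon)d^{-k}I$ for the Gram matrix $G_{\pi\sigma}=d^{\#(\pi^{-1}\sigma)}$. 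Combined with your row-sum bound $\lVert d^{-k}G-I\rVert_\infty\le\prod_{i=0}^{k-1}(1+i/d)-1$ and the Neumann step, this proves the statement with clean constants whenever $k(k-1)/(2d)$ is bounded below $1$, i.e.\ $d=\Omega(k^2)$ --- a regime that covers every invocation in this paper, where the lemma is applied with $k=2n\le 2d^{\delta/2}$ so that $k^2/d\to 0$.

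The gap is in your final paragraph, and it cannot be repaired the way you propose. Under only $d>\sqrt{6}\,k^{7/4}$ one may have $k^{7/4}\lesssim d\ll k^2$, and in that window no $O(k^2/d)$ conditioning bound for $G$ exists to be imported from \cite{CS06} or anywhere else: $G$ is the regular representation of the central element $\sum_\tau d^{\#\tau}\tau$ of $\CC[\mathfrak{S}_k]$, so its eigenvalues are the content polynomials $\prod_{(i,j)\in\lambda}(d+j-i)$ over partitions $\lambda\vdash k$; the single-column partition gives the eigenvalue $d(d-1)\cdots(d-k+1)$, whence $\lambda_{\min}(d^{-k}G)\le\prod_{i=0}^{k-1}(1-i/d)\le e^{-k(k-1)/(2d)}$. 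Since the upper inequality $\Phi_k\preceq(1+\epsilon)\Psi_k$ is equivalent to $\lambda_{\min}(d^{-k}G)\ge 1/(1+\epsilon)$, the best possible $\epsilon$ is at least $e^{k(k-1)/(2d)}-1$, which is not $O(k^2/d)$ once $k^2/d\to\infty$. In other words, the $k^{7/4}$ threshold in the cited sources is an artifact of their Weingarten-function estimates, and the $O(k^2/d)$ error must be read in the regime where it is small; your argument should simply carry the hypothesis $k^2/d\le c<1$ (equivalently $d=\Omega(k^2)$), under which it is complete --- and indeed the exact spectrum above can replace your Perron--Frobenius step and yields the optimal two-sided error $\max\bigl(\prod_{i=0}^{k-1}(1-i/d)^{-1}-1,\;1-\prod_{i=0}^{k-1}(1+i/d)^{-1}\bigr)$.
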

\fi
\newpage

\end{document}